\documentclass[nonacm,sigplan]{acmart}

\usepackage{multirow}
\usepackage{qcircuit}
\usepackage{braket}

\newcommand{\gln}[1]{\mathbb{GL}_{#1}(\mathbb{F}_2)}
\newcommand{\kpsi}[1]{\ket{\psi_{#1}}}

\newcommand{\st}{\ s.t.\ }

\newcommand{\aut}[1]{\operatorname{Aut}\!\left(#1\right)}
\newcommand{\orbg}[1]{\operatorname{Orb}^L_{\mathcal{G}}\!\left(#1\right)}
\newcommand{\orbs}[1]{\operatorname{Orb}^R_{\Sigma}\!\left(#1\right)}

\newcommand{\wt}[1]{\operatorname{wt}\!\left(#1\right)}
\newcommand{\rows}[1]{\operatorname{Rows}\!\left(#1\right)}
\newcommand{\cols}[1]{\operatorname{Cols}\!\left(#1\right)}

\newcommand{\rs}[1]{\left\langle #1 \right\rangle}
\newcommand{\ns}[1]{\left\langle #1 \right\rangle^{\!\perp}}

\newtheorem{theorem}{Theorem}
\newtheorem{proposition}{Proposition}
\newtheorem{corollary}[proposition]{Corollary}

\numberwithin{proposition}{section}  

\newtheorem{remark}{Remark}
\newtheorem{fact}[remark]{Fact}

\newtheorem{definition}{Definition}

\begin{document}

\title{Spacetime-Efficient and Hardware-Compatible Complex Quantum Logic Units in qLDPC Codes}

\author{Willers Yang$^*$, Jason Chadwick, Mariesa H. Teo, Joshua Viszlai, Frederic T. Chong\\
\textit{Department of Computer Science, University of Chicago} \\Chicago, IL, USA}
\thanks{\href{mailto:willers@uchicago.edu}{$^*$willers@uchicago.edu}}

\begin{abstract}
Quantum low-density parity-check (qLDPC) codes offer a promising route to scalable fault-tolerant quantum computing (FTQC) due to their substantially reduced footprint. However, these gains can be diluted at utility scale if we cannot also realize space-time-efficient logical operations for relevant quantum applications. We present \emph{RASCqL}, a \underline{\textit{R}}eaction-time-limited \underline{\textit{A}}rchitecture for \underline{\textit{S}}pace-time-efficient \underline{\textit{C}}omplex-Instruction-Set Quantum Computation with \underline{\textit{q}}LDPC \underline{\textit{L}}ogic. RASCqL supports key algorithmic subroutines--such as quantum arithmetic and state preparation--directly within co-designed qLDPC codes, achieving $2\times$–$7\times$ reductions in qubit footprint while maintaining space-time volume comparable to state-of-the-art transversal surface-code architectures.

Unlike prior approaches aiming for versatile logical instruction sets amenable to arbitrary circuits, RASCqL adopts an application-tailored code modification that embeds specific complex Clifford transformations useful for common subroutines as virtually implementable operations from code automorphisms. RASCqL further leverages parallel physical operations available in reconfigurable neutral-atom arrays to enable fast QEC cycles and high-fidelity transversal operations. At the cost of increased design complexity and specialization, RASCqL has the potential to improve end-to-end resource estimates for important applications such as factoring and quantum chemistry simulation in both footprint and space-time volume under realistic physical error rates of $2\times10^{-3}$–$5\times10^{-4}$, without requiring additional hardware capabilities. These results demonstrate that qLDPC codes can serve as \emph{Complex Quantum Logic Units} for useful quantum algorithms, extending their practical utility in fault-tolerant quantum computing architectures.
\end{abstract}

%%%%%%%%%%%%%%%%%%%%%%%%%%%%%%%%%%%%

\maketitle
\pagestyle{plain}

%%%%%% -- PAPER CONTENT STARTS-- %%%%%%%%

\begin{figure}
    \centering
    \includegraphics[width=0.95\linewidth]{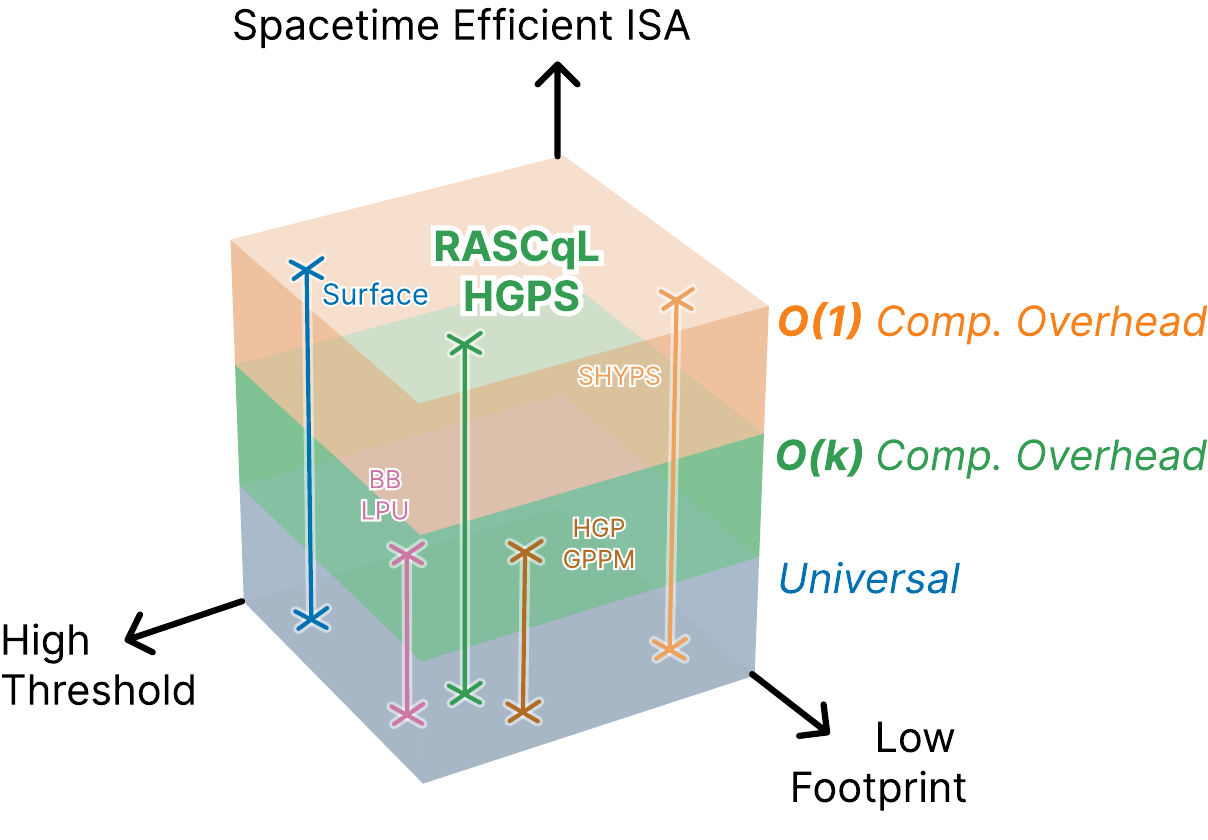}
    \includegraphics[width=0.95\linewidth]{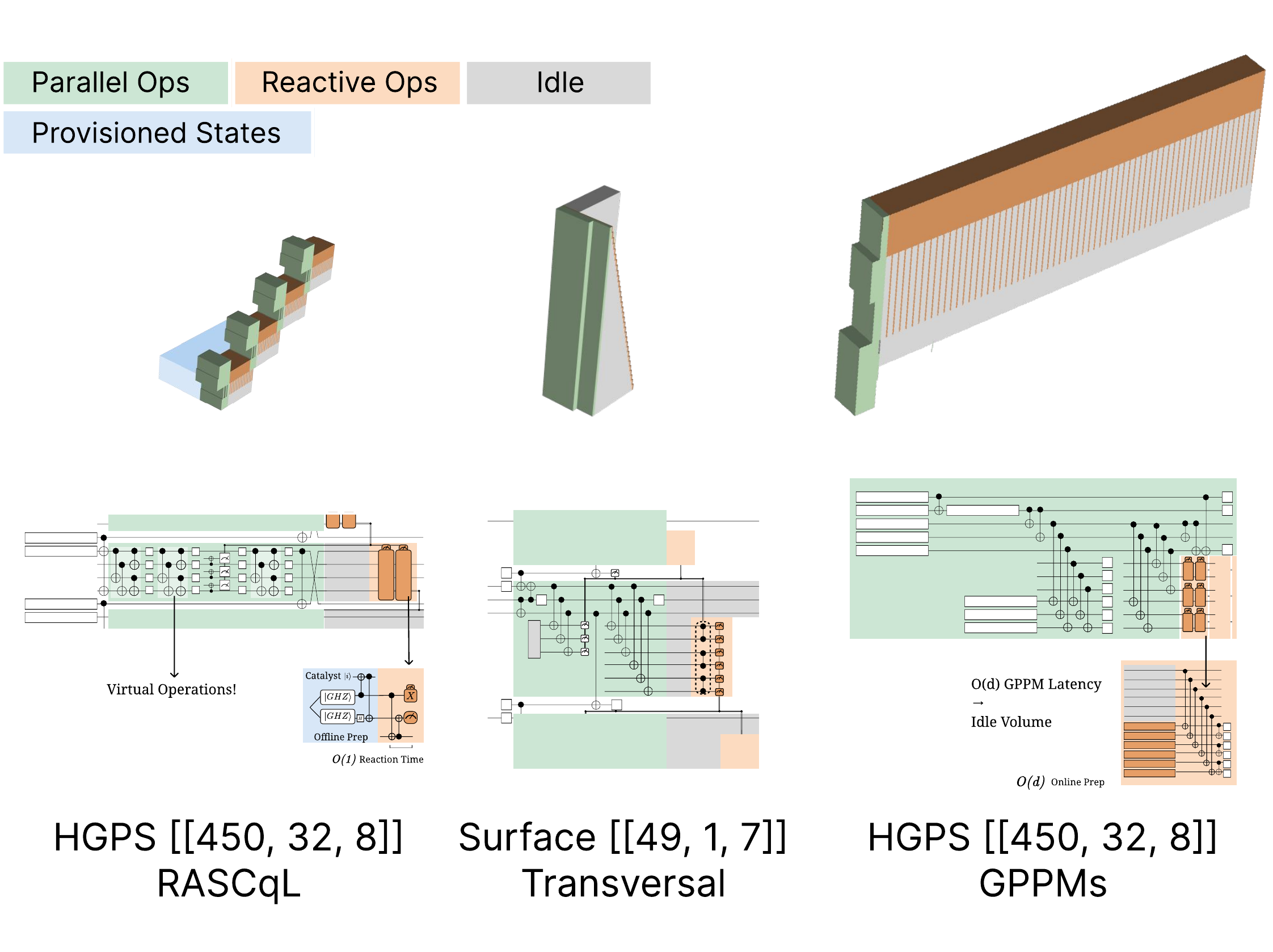}
    \caption{Top: RASCqL exhibits high circuit-level threshold, high encoding rate, and space-time efficient ISA for algorithms. Bottom: space-time volume comparison of RASCqL adder against baselines.}
    \label{fig:comp}
\end{figure}
\section{Introduction}
Quantum computers promise polynomial complexity algorithms for many tasks currently without efficient classical solutions, including factoring \cite{Shor1994} and chemistry simulations \cite{Feynman1982}. Implementing useful quantum algorithms requires a fault-tolerant quantum computer (FTQC) with logical qubits and operations achieving error rates as low as $10^{-9}$ to $10^{-16}$ \cite{Accessing}, while current devices operate with physical error rates around $10^{-3}$ \cite{GoogleSC, QuantinuumH2, Bluvstein2024}. Quantum error-correcting codes (QECC) are deployed to bridge the gap between useful quantum advantage and noisy hardware, providing exponential error suppression when physical operations meet a code-specific error threshold and defining a logical instruction-set architecture capable of universal quantum computation.

The \textit{surface code} is a popular QECC due to its simple hardware requirements and expressive logical primitives, but suffers from overheads that scale polynomially in the number of correctable errors. Its syndrome extraction circuits require only 2D nearest-neighbor interactions, matching to planar superconducting devices with tunable couplers, and it tolerates physical error rates of around $1\%$ \cite{Fowler2012surface}, allowing experimental demonstrations of error suppression in existing devices \cite{GoogleSC}. Logical operations on surface codes form a compact, uniform instruction set, enabled by hardware-compatible lattice-surgery primitives for Pauli-based computation \cite{lattice_surgery}, or constant-time transversal gates on reconfigurable neutral-atom arrays (RNAA) with correlated decoding \cite{algoFT,Zhou_2025}. These FT primitives, along with resource states prepared through distillation protocols \cite{hastingshaah,cultivation}, can be composed efficiently to execute arbitrary quantum programs. Consequently, optimizing a surface code architecture often parallels RISC design: a standard, expressive ISA, heavily optimized across layers from algorithm mapping and resource estimation \cite{GidneyRSA2019, LitinskiElliptic2023, Accessing, GidneyLinearT, LitinskiActiveVolume} to logical-level compilation \cite{edp_beverland, Litinski2019gameofsurfacecodes} and hardware co-designs \cite{bias-tailored, erasures}. The cost of simplicity, however, is that the surface code (as is the case with any codes with 2D-local checks) suffers from a qubit overhead that scales quadratically with the number of correctable errors, leading to resource estimations at $1000\times$ physical qubits to encode a single logical qubit \cite{gidney2025factor2048bitrsa}.

\emph{Quantum low-density parity-check (qLDPC) codes}, on the other hand, do not necessarily suffer from space-time overheads polynomial in $d$, at the cost of increased hardware complexity and limited instruction sets. Recently, studies on the algebraic structures of qLDPC codes have led to many hardware-efficient co-designs matching restricted code families to various physical architectures \cite{IBMBB, HGPconstantAtoms, JoshBB}, turning them into promising alternatives to surface codes. When deployed as high-threshold, low-footprint quantum memories, such codes can expand the frontier of achievable quantum algorithms on qubit-constrained devices \cite{stein2024architecturesheterogeneousquantumerror}, potentially shortening the timeline for demonstrations of useful quantum advantage. Another disadvantage qLDPC codes face is access to efficient logical operations. At the asymptotic limit, there exists a fundamental trade-off between a qLDPC code's memory performance and the efficiency of existing fault tolerant instructions \cite{guyot2025addressabilityproblemcsscodes}, and in practice, ISAs of existing qLDPC codes are either limited in size and expressivity \cite{IBMBB, danbrownauto}, or require physical device capabilities beyond near-term devices \cite{ComputingQLDPC}.

While it appears challenging to realize utility scale general-purpose computation in qLDPC codes that simultaneously achieve high threshold, high rate, and space-time efficient ISA, these limits only hold if each block must independently support arbitrary circuits efficiently. Instead, we propose code constructions that possess limited but useful automorphism gates that, while not amenable to generic inputs, can directly accelerate common subroutines such as quantum adders, magic-state distillation, and quantum lookup. Unlike other proposals for logic in qLDPC codes that aim at a RISC ISA capable of handling arbitrary circuits, our perspective reframes in-block computation on qLDPC codes as a \emph{Complex-Instruction-Set Quantum} (CISQ) architecture, where each code is co-designed to implement a limited, functionally-targeted set of logical operations natively and efficiently. For a visual example, Figure~\ref{fig:comp} shows how the quantum adders are compiled in RASCqL in comparison with other approaches. 

A specialized logical ISA is particularly relevant for existing fault tolerant quantum workloads, where the space of useful applications is narrow with implementations dominated by few functional building blocks \cite{Zhou_2025}. In particular, not only do they support factoring \cite{gidney2025factor2048bitrsa}, adders also appear as the dominant cost in various chemistry applications \cite{Low_2019, harrigan2024expressinganalyzingquantumalgorithms,von_Burg_2021, Lee_2021, Gily_n_2019, caesura2025fasterquantumchemistrysimulations}.
As we will show, designing codes that excel at these common subroutines, rather than universally supporting arbitrary operations, is the key to efficient in-block computation with qLDPC codes.

We present \emph{RASCqL}, a \underline{R}eaction-limited \underline{A}rchitecture for \underline{S}pace-time efficient \underline{C}omplex \underline{Q}uantum \underline{L}ogic, where we enable efficient and hardware compatible in-block computation in co-designed qLDPC codes with the potential to outperform surface codes. RASCqL makes three key contributions:
\begin{enumerate}
    \item \textbf{\emph{Complex Quantum Logical Units (CQLU)}} We design complex quantum logical units on co-designed qLDPC codes with useful  complex quantum instructions and develop targeted SIMD compilations of key subroutines onto CQLU instructions. For adders, we achieve up to 7x footprint reduction with 1.25x Clifford-volume reduction against state-of-the-art transversal surface code baselines in a reaction-time limited compilation framework.
    \item \textbf{\emph{Predictive Resource-state Preparation (PReP)}} We predict resource state requirements for fast reactive operations and pipelined CQLU production. In particular, we achieve reactive Pauli Product measurements, arbitrary pattern CNOT fan-outs, and non-Clifford gates with $O(1)$ latency in expectation. Our optimized state factories achieve $10\times$ volume reduction for $\ket{GHZ}$ states compared to surface code baselines, and demonstrate high fidelity magic states in LDPC codes with $2\times$ footprint reduction and $3\times$ increase in volume.
    \item \textbf{\emph{Physical implementations on reconfigurable neutral atom arrays}} We give explicit layout and AOD-compatible movement schedules to all primitives, achieving QEC cycles in millisecond scale, demonstrating subthreshold error suppression while carefully accounting for additional idle and gate error due to transversal logic, and produce realistic resource estimation for key subroutines used in factoring and chemistry simulations. 
\end{enumerate}

Together, these designs establish a CISQ framework and expand the practical utility of qLDPC codes beyond low-footprint memory. To our knowledge, RASCqL is the first to propose a class of hardware-compatible and high rate qLDPC codes that demonstrate quantum arithmetic modules with improved space-time volumes compared to state-of-the-art surface code baselines. %and in-block MSD with 

The rest of the paper is organized as follows. We begin with an introduction to FTQC as a three layer stack in Section~\ref{sec:background}. Then, we describe RASCqL in Section~\ref{sec:arch}. Code constructions, targeted compilations, resource state prediction, and physical RNAA implementations are described in Section~\ref{sec:methods}, where detailed constructions and proofs are deferred to the Appendix. Finally, we compare the performance of RASCqL on subroutines including adders, GHZ and magic state preparation to state-of-the-art surface code baseline in Section~\ref{sec:eval}, where explicit physical movement and gate schedules are used for resource estimation and a detailed circuit-level noise model is used for logical error rate simulations.

\begin{figure}
    \centering
    \includegraphics[width=0.85\linewidth]{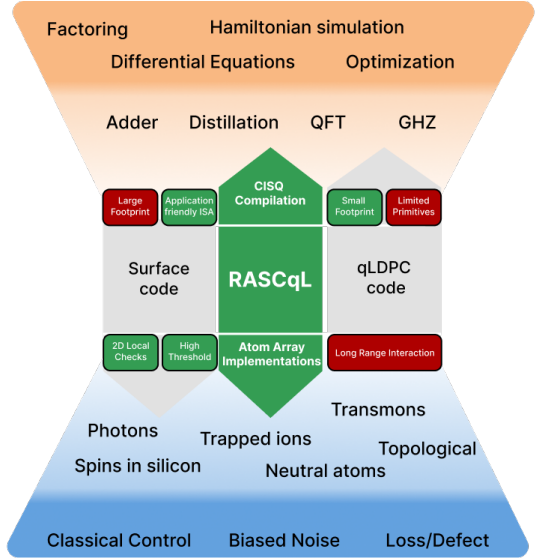}
    \caption{A three-stack view of FTQC. The surface code implements an efficient logical layer due to its hardware compatibility and RISC ISA, but suffers from large qubit overhead. qLDPC codes may reduce footprint, but require long-range interactions and have comparatively limited logical primitives. RASCqL achieves efficient in-block computation by co-designing qLDPC codes with CISQ compilations with key functional subroutines and reconfigurable neutral atom-array implementations. }
    \label{fig:stack}
\end{figure}
\section{Background}\label{sec:background}

A fault-tolerant quantum computer consists of a three-layer stack (Figure~\ref{fig:stack}): the \emph{functional layer}, where algorithms and subroutines are described by abstract quantum programs; the \emph{logical layer}, where QECCs provide error suppression and fault-tolerant primitives that define the instruction set architecture (ISA); and the \emph{physical layer}, where hardware parameters such as connectivity and noise model determine implementability, costs, and fidelity of QEC primitives. The logical layer forms a thin waist between applications and diverse hardware, introducing a vast design space rich in optimization opportunities and potential inefficiencies.

\subsection{Logical Layer: Quantum Error Correcting Codes}

A quantum error correcting code (QECC)'s error suppression performance is summarized by four key parameters--$n,k,d$ and $p_{th}$. A $[[n,k,d]]$ code encodes $k$ logical qubits using $n$ physical qubits, where a Pauli operator of minimum weight $d$ is required to transform between logical states. $p_{th}$ refers to the \emph{error threshold}, or the physical error rate below which a code begins to work reliably. A QECC can correct up to $\lfloor\frac{d}{2}\rfloor$ errors below threshold by performing \textit{syndrome extractions}, outcomes of which are used by a \emph{decoder} to identify most-likely errors and compute corrections. In an error model where each physical Pauli error occurs independently with probability $p_{phys}$ and assuming perfect syndrome extraction, the logical error rate (LER) can be exponentially suppressed according to the formula:
\begin{equation}\label{eq:qec}
    p_{L} \approx (\frac{p_{phys}}{p_{th}})^{\lfloor\frac{d}{2}\rfloor+1}. 
\end{equation}
While $d$ and Equation~\ref{eq:qec} describe the theoretical capacity of a QECC, errors in realistic systems often deviate from idealistic assumptions. We can more accurately assess the LER performance of a code using circuit-level simulations that account for error propagation through syndrome extraction circuits and device-specific noise models.

The \emph{encoding rate}, $\frac{k}{n}$, captures the code's space efficiency; codes with lower rate will have larger per-logical qubit footprint on a device. Due to the parallelizability afforded by backwards-in-time gate teleportation \cite{LitinskiActiveVolume}, the space-time circuit volume is an important metric used to resource estimate quantum algorithms and evaluate FTQC system performance, in addition to LER. Current estimates place the circuit volume of realistic implementations of practical quantum algorithms such as factoring and chemistry simulation at millions of qubit-days \cite{Accessing}, in part due to the large qubit-overhead associated with the leading proposal of using surface codes as the logic layer.

On of the costs of error suppression is the increased difficulty in implementing an efficient, universal logical ISA. Indeed, a distance-$d$ error protection also places the same restriction on the number of physical operations needed to effect logical transformation. In order to maintain fault tolerance, these logical operations must be implemented carefully to avoid propagating physical errors within the support of a logical qubit and reducing distance, while guaranteeing that additional errors from these physical operations are also detectable. Logical operations satisfying these conditions are said to be \emph{transversal}. The existence and expressivity of transversal logical gates depend on inherent symmetries of the code in question, and no QECC can implement a universal set of logical operations transversally \cite{Knill}. Consequently, an efficient logical ISA requires two components: an efficient set of native fault-tolerant gadgets, usually a generating set of Clifford gates such as the CNOT, H, and S gates, or Pauli Product measurements, with a purification protocol to prepare non-native magic states to achieve universal computation \cite{Bravyi_2005}. In Figure~\ref{fig:isa-comp}, we summarize the key techniques for enabling fault-tolerant logical operations. The availability and costs of these gates further depend on the code and technique for logical operations. 
 \begin{figure*}[ht!]
     \centering
     \includegraphics[width=\linewidth]{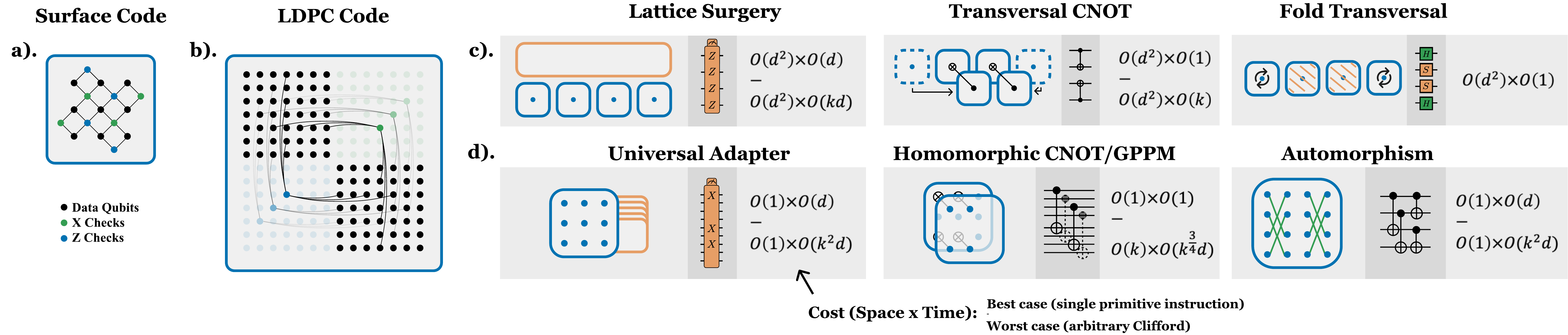}
     \caption{Quantum error correcting codes. a) $[[9,1,3]]$ surface codes with 2D planar syndrome checks. b) $[[98, 18, 4]]$ qLDPC code with non-local syndrome checks and improved encoding rate. c) Fault tolerant logical instructions of a surface code: (from left to right) arbitrary Pauli Product measurements, parallel transversal CNOTs, and addressable single qubit Clifford gates. d) Fault tolerant logical instructions of qLDPC codes: particular Pauli Product measurements given by a LPU, global transversal CNOTs, and automorphism Clifford gates given by code symmetries.}
     \label{fig:isa-comp}
 \end{figure*}
\subsubsection{The surface code}
The \emph{surface code}\footnote{We focus on the rotated surface code~\cite{bombin2007optimal}, which achieves a higher encoding rate and threshold among variants.} is a $[[d^2,1,d]]$ QECC with a threshold of about $1\%$ under a fast matching decoder~\cite{Fowler2012surface}. Analogous to the classical repetition code, it has a particularly simple construction--on a $D\times D$ grid of physical qubits, a surface code interleaves an equal number of $X$ and $Z$ checks that each covers four 2D-nearest-neighbor data qubits in a symmetric pattern, as illustrated in Figure~\ref{fig:isa-comp}. As is the case for all QECCs with local checks, the surface code suffers from a qubit overhead that scales quadratically with distance. 

There are two approaches for logical operations on surface codes. On fixed-topology hardware such as superconducting qubits, we can perform arbitrary multi-qubit Pauli-product measurements (PPM) using lattice surgery with an additional $O(d)$ time overhead, leading to an overall space-time volume of $O(d^3)$ qubit-cycles for each instruction\cite{Litinski2019gameofsurfacecodes}. In addition, CNOT, S, and H gates can be performed transversally on RNAA with $O(1)$ time overhead using correlated decoding \cite{algoFT}, reducing the space-time overhead to $O(d^2)$ per instruction.

\subsubsection{qLDPC codes}
A QECC is said to be low-density parity-check if the number of other physical qubits each qubit interacts with is upper bounded during an error correction cycle. This LDPC condition limits how far a physical error can propagate, and is necessary to ensure fault-tolerance. While the surface code is also a qLDPC code, general constructions do not impose locality constraints that lead to undesirable overheads. The first constant rate qLDPC code discovered uses good classical LDPC codes as seeds in a Hypergraph-Product (HGP) construction \cite{Leverrier_2015}. More recently, high rate qLDPC codes with concrete physical implementations have also been discovered that exhibit exponential error suppression under realistic physical error rates \cite{IBMBB,HGPconstantAtoms}.

We describe three main approaches to enable FT logical operations in qLDPC codes. While potentially more space-time efficient, they are often either challenging to compile to or implement on hardware. First, we can realize a lattice-surgery style RISC as ancilla assisted Pauli-product measurements, which are also referred to as Logical Processing Units (LPU)\cite{cross2025improvedqldpcsurgerylogical}. Leading proposals prioritizes hardware compatibility with small footprint LPUs and favorable planar embedding \cite{yoder2025tourgrossmodularquantum}, while having limited native instructions. While the space-time volume of native LPU instructions can out-perform surface code lattice surgery, non-native operations need to be decomposed with $O(k)$ time overhead on average, which may eclipse the space savings in practice. Secondly, on physical platforms that support qubit rearrangement, Clifford operations and PPMs can be performed in parallel using transversal CNOTs or fold-transversal CZ-S gates \cite{partition,hgphomo, bacthed}. Such operations can be more space-time efficient than surface code operations when performed in batch, but only when the logical circuits we wish to implement follow the same fold- or grid-like structures. Finally, a less versatile proposal uses code automorphism--intrinsic symmetries of the QECC--that in rare cases allow complex logical operations to be implemented transversally or as virtual qubit relabeling. However, stringent conditions need to be met to lift these symmetries to fault tolerant logical operations \cite{berthusen2025automorphismgadgetshomologicalproduct}. On qLDPC codes with more than a couple qubits, automorphism-based operations often manifest as arbitrary global transformations that do not find direct applications easily \cite{danbrownauto}, leading them to serve mostly as memory operations such as qubit permutations \cite{stein2024architecturesheterogeneousquantumerror, yoder2025tourgrossmodularquantum, hgphomo}.
%We can perform logical operations on qLDPC analogously to surface code techniques. Universal adapters generalizes lattice surgery to perform PPMs \tocite{UniversalAdapters line of work}. Similar to lattice surgery, an ancillae system is initialized to extract and measure specific logical degrees of freedoms; however, while surface code lattice surgery can implement any arbitrary PPM with similar costs, the number and parallelism of implementable PPMs in LDPC codes is limited by the size of adapters used. While a small number of native PPMs can be executed in a single logical cycle, arbitrary measurements in general are composed using $O(k)$ such native PPMs, making the worst-case compilation overhead worse than surface codes. On RNAA, [describe]

%The HGP code inherits several good properties from its parent codes--First, it's LDPC if both classical codes are LPDC; in particular, if the parity checks of the parent codes have weight at most $w_i$, $\mathcal{Q}_{HGP}$ has stabilizers with weight at most $w_1+w_2$. Secondly, the quantum code has encoding rate $\Theta(\frac {k^2}{n^2}) = \Theta(1)$ if each $\Theta(\frac {k}{n}) = \Theta(1)$. At practically relevant distances ($d\in [8,30]$), there exists small instances of good qLDPC codes that require up to $10\times$ less qubits than surface codes, with proposed hardware implementations on thin-planar superconducting chips \tocite{BB} and RNAA \tocite{constant overhead}.

\subsubsection{Magic State Distillation}
A key limitation to QECC is the aforementioned lack of universal native instructions. A standard approach is to supplement the logical ISA with high fidelity resource states that can implement non-native operations fault tolerantly. These states can be prepared using a magic state distillation (MSD) protocol, where fault-tolerant Clifford operations are used to purify noisy magic states. For example, the $[[15,1,3]]$ protocol outputs a $\ket{T}$ state at error rate $O(p^3)$ with high probability, using 15 noisy $\ket{T}$ states at physical error rate $O(p)$. These magic states can then be used to synthesize universal gates with $O(\log\frac{1}{\epsilon})$ overhead to a target error rate $\epsilon$ \cite{gridsynth}. Due to its high overhead, it is estimated that as high as $95\%$ of computing volume may be spent on producing high-fidelity magic states \cite{fowler}, making it an important bottleneck to FTQC. In this work, we will focus on a family of MSDs described by Hastings and Haah \cite{hastingshaah} which includes the $[[15,1,3]]$ codes, though our methodology extends to Reed-Muller codes and triorthogonal codes more generally. 

\subsection{Physical Layer: Reconfigurable Atom Arrays}
qLDPC codes and transversal logic require high-fidelity non-local interactions. One approach is to use platforms that support programmable qubit movement, implementing flexible connectivity through physical reconfiguration, such as RNAAs where neutral atom qubits can be moved in parallel using optical tweezers
%enabling  experimental demonstrations of a qLDPC code on a 280-qubit atom array 
\cite{Bluvstein2024}. Several other proposals exist to realize these requirements in hardware, including modular architectures with photonic interconnects \cite{LitinskiActiveVolume,XanaduPhotonic}, all-to-all connected trapped-ion qubits with long-range Coulomb-mediated gates \cite{ionQforte}, and superconducting architectures with a few interconnected layers of planar tunable couplers \cite{IBMBB}. While these approaches provide pathways to long-range connectivity, current experimental demonstrations are limited in scale—typically in the range of a few to a few dozen qubits \cite{QuantinuumH2, ionQforte, XanaduPhotonic, IBMroadmap}—well below requirements for utility-scale FTQC. On the other hand, up to 6100 qubits have been demonstrated on neutral atom processors \cite{Manetsch_2025}. For this reason, we will focus on RNAA in our designs and analysis.

Neutral atom arrays have been demonstrated to have seconds-long coherence times, with high fidelity single- and two-qubit gates \cite{Bluvstein2024, Manetsch_2025, bedalov2024faulttolerantoperationmaterialsscience}. An RNAA device is bottlenecked by movement and measurements--while physical operations only take on the order of micro-seconds, atom motion and measurement can take orders of magnitude longer, up to half a millisecond \cite{Zhou_2025}. One prevalent architecture model for neutral atoms is the zone-based architecture, which separates space into storage zones, entangling zones, and readout zones, into which qubits have to be moved to undergo zone-specific operations \cite{Bluvstein2024}. This architecture requires motion before every entangling gate and measurement, which can become a limiting factor in computation. Instead, we focus here on an architecture combining motion with free-space measurement, as is done in the baseline work we compare to \cite{Zhou_2025}, and use the same set of hardware parameters for a fair comparison. The logical operation of such an architecture has been demonstrated at a small-scale, with parallelized qubit control and acceleration-based movement (the duration of motion scales as a square root of the distance moved) \cite{bedalov2024faulttolerantoperationmaterialsscience, rines2025demonstrationlogicalarchitectureuniting}.

\begin{figure}
    \centering
    \includegraphics[width=\linewidth]{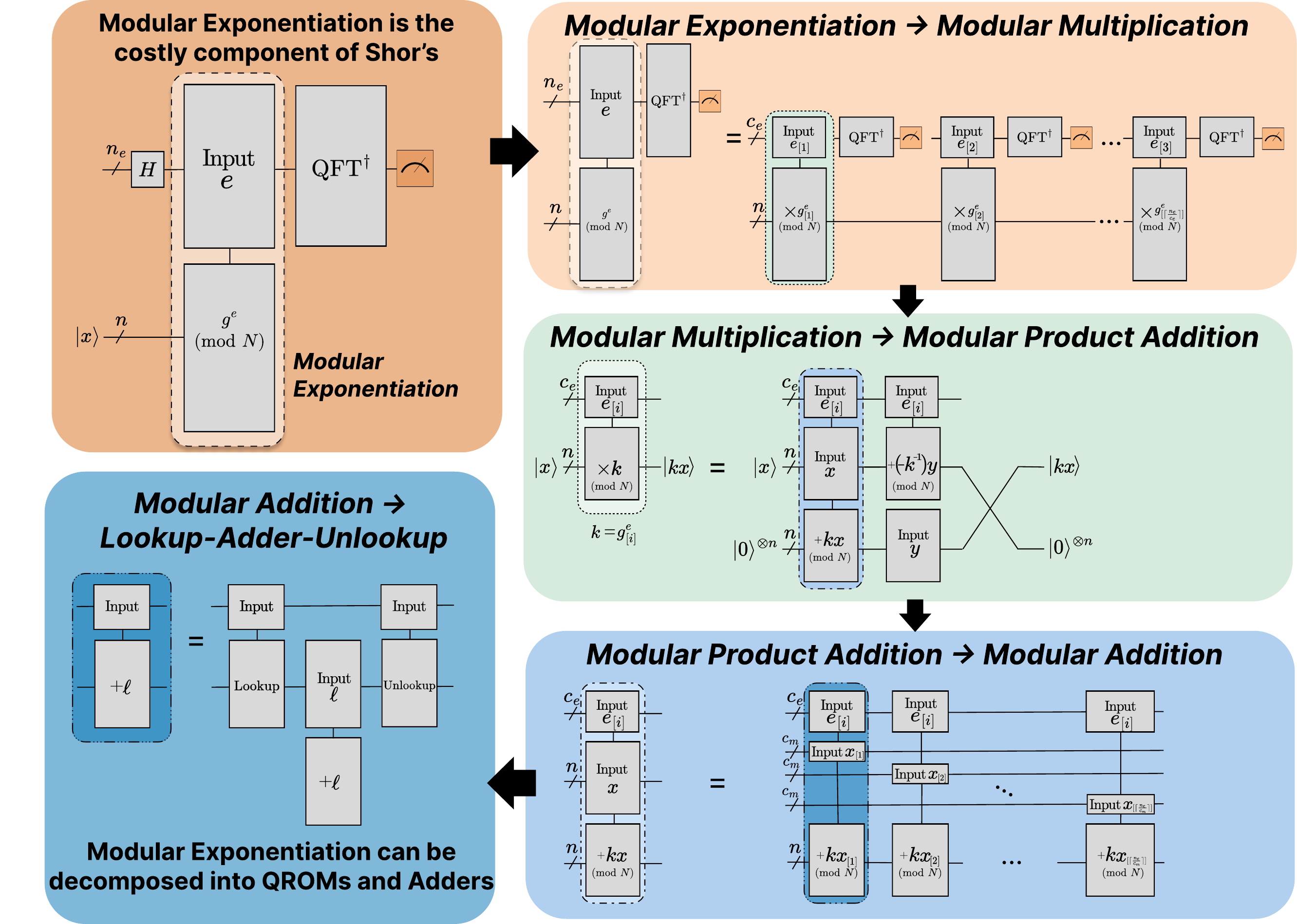}
    \caption{Abstract overview of the factoring algorithm. It can be implemented using repeated applications of quantum adders and quantum look-up tables, with a QFT.}
    \label{fig:factor}
\end{figure}
\subsection{Functional Layer: Quantum Algorithms}
Developing quantum algorithms with a speedup over classical methods has been an active area of research since the emergence of quantum computing, resulting in algorithms for integer factoring \cite{Shor1994}, search and optimization \cite{Grover1996}, combinatorial optimization \cite{Farhi2014QAOA,comb_opt_heuristics2020}, quantum chemistry \cite{Low_2019, harrigan2024expressinganalyzingquantumalgorithms,von_Burg_2021, Lee_2021, Gily_n_2019, caesura2025fasterquantumchemistrysimulations, bauer_quantum_2020}, and more \cite{montanaro_quantum_2016, cerezo_variational_2021, HHL2009}. Here, we introduce two prominent examples of FTQC applications, factoring and quantum chemistry, which feature common subroutines our CISQ architecture can be optimized for. 

\subsubsection{Factoring}
Shor's algorithm solves the integer factoring problem -- given $N$, the aim is to find factors $p$ and $q$ such that $N=p \times q$. The algorithm is a hybrid classical-quantum algorithm whose quantum component focuses on period-finding, or in other words, for some $a$, finding the smallest $r$ such $a^r = 1 \text{ mod N}$. Period-finding can be broken down into a modular exponentiation, followed by a quantum Fourier transform. As the modular exponentiation dominates the cost of period-finding, many resource estimates focus on this part of the algorithm \cite{GidneyRSA2019, gidney2025factor2048bitrsa, Zhou_2025}, which can be implemented using windowed arithmetic that decomposes into sequences of adder and quantum read-only memory (QROM) subroutines \cite{gidney2019windowedquantumarithmetic} (see Figure \ref{fig:factor}).
 
\subsubsection{Quantum Chemistry}
These subroutines are also useful in quantum chemistry applications which use qubitization and quantum walk-based methods. Among other chemistry applications that utilize quantum arithmetic  \cite{Low_2019, harrigan2024expressinganalyzingquantumalgorithms,von_Burg_2021, Lee_2021, Gily_n_2019, caesura2025fasterquantumchemistrysimulations, bauer_quantum_2020}, one salient application is determining the ground states and ground state energy of a given Hamiltonian \cite{bauer_quantum_2020, efficientquantumchemistry2021, Low2019hamiltonian}. These algorithms use PREPARE and SELECT oracles to encode a Hamiltonian, where PREPARE blocks perform state preparation of a superposition state and SELECT blocks implement the $\ell$th term in the Hamiltonian on the target register if the ancilla register is in the state  $|\ell\rangle$\cite{GidneyLinearT, efficientquantumchemistry2021, caesura2025fasterquantumchemistrysimulations}. These oracles, in turn, are largely composed of adder and QROM subroutines, making them crucial complex instructions for an efficient FTQC.

The utility of adders and QROM extends beyond factoring and quantum chemistry: the adder subroutine is used more generally to implement quantum arithmetic, $n$ single-qubit rotations, and the $n$-qubit phase gradient operation \cite{gidney_halving_2018}; QROMs are key components of protocols for state preparation and unitary synthesis \cite{Low2024tradingtgatesdirty}. Building on the insight that several state-of-the-art quantum algorithms decompose into a finite set of common functional subroutines, the performance of a logical architecture can largely be captured by how efficiently it can implement a select number of subroutines. The decomposition of large-scale algorithms into core subroutines points to a scalable CISQ strategy that leverages the efficient operations unique to specific qLDPC codes. 

We note that while we focus in this paper on construction and resource estimates for the adder, the methods we describe, such as the implementation of CNOT- fanouts and Toffoli ladders, extend to QROMs as well.

\subsection{Full Stack Fault-Tolerant Quantum Architectures}
Resource costs for FTQC have continuously improved over the past decade due to improving encoding rates and more efficient techniques for logical action. More recently, ISA (co-)design and memory hierarchy proposals have unlocked additional gains from the system level. 

Initial proposals for an FTQC relied on multiple layers of concatenated codes such as Steane codes \cite{steane}. These schemes provided a clear theoretical foundation, but their resource costs scale exponentially to the number of correctable errors. On the other hand, surface code architectures promised only quadratic qubit overhead. Logical computation on surface codes has evolved substantially: early proposals relied on defect braiding \cite{Fowler2012surface}, while later approaches introduced lattice surgery for more space-time-efficient merge-split operations \cite{Litinski2019gameofsurfacecodes}. The state of the art (SoTA) architecture proposal on reconfigurable atom arrays uses transversal Clifford operations  correlated decoding, which we use as a baseline for comparison \cite{Zhou_2025}.

qLDPC codes can potentially achieve 10x qubit-reduction with near-term devices, while at present they often struggle to implement algorithms due to the lack of generically applicable logical ISA. 
One way around this is to use qLDPC as low-footprint quantum memory in a hierarchical system, where surface codes are still used for logical computations \cite{IBMBB,HGPconstantAtoms, stein2024architecturesheterogeneousquantumerror,JoshBB, gidney2025factor2048bitrsa}. 
While achieving a reduction in footprint, such systems may be bottle-necked by I/O for applications with high parallelism\cite{stein2024architecturesheterogeneousquantumerror, JoshBB}, and may be less efficient from the space-time cost perspective. With various methods to trade between space and time costs\cite{fowler2013timeoptimalquantumcomputation}, compilations of practical algorithms that targets space-time volume can avoid inherent program-serializations to more effectively mitigate the impact of high idle volumes \cite{LitinskiActiveVolume}, which may limit the range for memory hierarchy proposals to find substantial application.

While there exist certain efficient logical operations that are more space-time efficient in qLDPC codes \cite{hgphomo, ParallelPauliMeasure}, it remains challenging to show competitive space-time performance at utility scale for generic algorithms, which are often compiled with a surface-code inspired RISC instruction set. Attempts to recreate such RISC ISAs on qLDPC codes struggle with mismatched priorities across the stack, resulting in either compute-efficient constructions that require capabilities beyond intermediate-term devices, or restrict to hardware-compatible designs at the cost of space-time efficiency. For example, the Subsystem Hypergraph Simplex (SHYPS) codes can implement generic Clifford operations with constant space-time overhead, while its stabilizers have weights that scale with code length, requiring orders of magnitude stricter requirements for physical error rates at high distances. On the other hand, IBM's Bivariate Bicycle (BB) codes with logical processing units \cite{yoder2025tourgrossmodularquantum} prioritize hardware compatibility, maintaining both LDPC and bi-planarity properties and achieving high circuit-level threshold, but implements a limited ISA that leads to long gate decomposition sequences when compiling for practical algorithms.

RASCqL takes the natural next step in the evolution of quantum architectures by viewing qLDPC codes as specialized accelerators for complex instructions, and shows that overall space-time reduction is attainable, even with realistic error rates for practical algorithms. We pay the cost in terms of flexibility and increased design complexity, which we argue is a worthwhile trade-off for existing quantum workloads that require few key subroutines. Unlike memory hierarchy proposals, RASCqL implements all logical operations in-block in a reaction-time limited framework, avoiding costly inter-code communication, while retaining footprint advantage. RASCqL targets a practical class of algorithms such as factoring and chemistry simulations, and unlike codes designed to be efficient for generic inputs such as $SHYPS$, RASCqL remains LDPC, with a circuit-level threshold of around $0.78\%$. RASCqL also features instructions that can be implemented as virtual qubit relabeling. The use of virtual logical operations for logical computation is proposed independently in a concurrent work that studies phantom codes--codes that can implement all logical entangling gates as virtual qubit relabeling \cite{koh2026entanglinglogicalqubitsphysical}. Although the code constructions proposed here can be used to turn a qLDPC code into a phantom code, doing so may incur exponential overhead and cannot guarantee LDPC properties in general.  Instead, RASCqL enables a small but sufficient set of entangling gates while optimizing for footprint and logical performance. Finally, RASCqL is hardware-compatible, with structures that match RNAA capabilities. Compared to other hardware efficient approaches, while RASCqL's ISA is not necessarily more expressive, they are co-designed to contain complex instructions that are efficient for applications, accompanied with targeted compilations of key subroutines that are sufficient for the majority of factoring and chemistry simulations.

\section{RASCqL Architecture}\label{sec:arch}
\begin{figure}
    \centering
    \includegraphics[width=\linewidth]{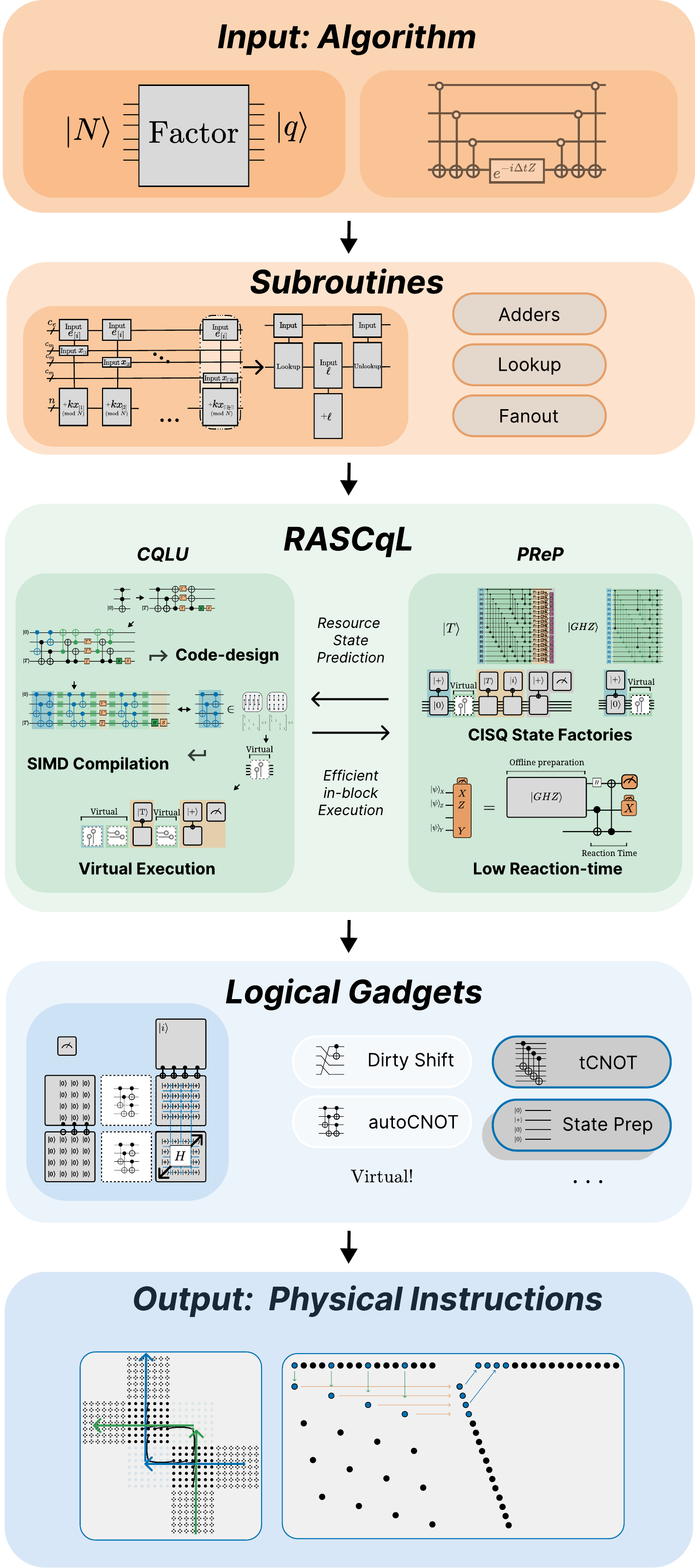}
    \caption{RASCqL Overview. Input algorithms supported by key subroutines are implemented using Complex Quantum Logic Units (CQLU) in tailored qLDPC codes and Predictive Resource-state Preparation (PReP) optimized using CQLU complex instructions. Logical primitives are then compiled down to parallel physical instructions on reconfigurable neutral atom arrays to obtain realistic estimations for logical performance and space-time cost.}
    \label{fig:rascql-eg}
\end{figure}

In this section, we describe RASCqL, a complex instruction set quantum (CISQ) computer architecture in qLDPC codes. First, an input quantum algorithm is compiled using a set of supported algorithmic subroutines. Two main subsystems enable efficient execution of these subroutines as complex instructions on qLDPC codes:  1). a suite of co-designed Complex Quantum Logic Units (CQLU) that realizes a limited but useful ISA in-block, and a tailored compilation of subroutines using this ISA; and 2). a Predictive Resource-State Preparation (PReP) protocol that allows efficient generation and low-latency consumption of resource states for non-native and reaction-time limited operations. Finally, CQLU and PReP primitives are implemented using parallel AOD movement and Rydberg interactions on RNAA architectures, which allows us to obtain RNAA-realistic logical error rate simulations and resource estimations. See Figure~\ref{fig:rascql-eg} for an abstract overview. 

\subsection{CQLU Design} 
First, we identify an ISA consisting of few low-order Clifford operations that a) is efficiently supported on some qLDPC codes and b) efficiently supports all desired subroutines. Starting with generic codes and applications, we propose LDPC-preserving code modifications that allow us to embed a prescribed group of logical Clifford gates as code automorphisms of an arbitrary qLDPC code, as we will show in Section~\ref{ssec:methods-code}. This code modification introduces additional footprint in general that scales with both the size of the ISA and available symmetry of the starting code that should be co-optimized for given applications. 
%hello :))) hi :D
Our ISA design, detailed in Section~\ref{ssec:methods-isa}, co-optimizes compilation and code construction to target magic state distillation, factoring, and certain chemistry simulation tasks. Our lightweight ISA with few complex entangling instructions is sufficient, except for gates that vary with problem instance, such as reactive measurements or the CNOT fan-outs in quantum lookup, as we will show in Section~\ref{ssec:methods-comp}. These instance-specific gates are implemented using gate teleportation, which requires $O(1)$ reaction time with provisioned resource states. We choose a family of Hypergraph Product of Simplex (HGPS) codes as the starting code, whose available primitive instructions already contain useful automorphisms such as dirty permutations and certain CNOT operators that support this ISA with minimal overhead. We further highlight the $[[450, 32, 8]]$ codes that require no overhead at all for this ISA. 

\subsection{PReP Design}
Resource states serve two purposes in RASCqL. Firstly, magic states and stabilizer states such as $\ket{T},\ket{i}$ and $\ket{GHZ}$ support non-native gates including $T$ gates and instance-specific $CNOT$ fan-outs. The consumption schedule of these states is deterministic and may be prepared offline. Secondly, logical ancillae are used for \emph{reactive measurements}--addressable measurements in a basis that may depend on previous measurement outcomes--arising possibly from magic state injection or parallelization. The speed at which we can implement reactive measurements determines the program runtime and overall volume in the \emph{reaction-time limited} compilation framework, which is the current SoTA for algorithms such as factoring. 

PReP decouples state production from program execution using prediction and catalysts, reducing reaction time to $O(1)$ in expectation and unlocking additional resource savings from pipelining. First, PReP predictively provisions a pool of modified $\ket{GHZ}$ states that can implement reactive measurements in $O(1)$ time \cite{HarnessingGHZ}. As we will show in Section~\ref{ssec:reactiveops}, only a single type of states needs to be prepared if we allow additional constant time CQLU operations on the fly. PReP also maintains a generating set of $\ket{i}$ states on diagonal locations that serve as catalysts to perform arbitrary patterns of $X/Y$-basis reactive measurements which arise in MSD circuits. Finally, all resource states are generated in qLDPC blocks with optimized CQLU protocols, unlocking additional gains from high-rate magic state distillation and pipelined state production protocols. 

\subsection{RNAA Implementation}
RASCqL is also equipped with a physical compiler to implement QEC primitives efficiently on an RNAA device. Memory cycles and complex instructions on CQLUs are compiled to explicit AOD movements, gate schedules, and mapping, which determines the real-time costs of each logical instruction. RASCqL's physical scheduler uses fast and parallel systolic movements to achieve complex qubit permutations and memory cycle times on a millisecond scale, exploiting the algebraic structure of the codes and transversal logic. Details are given in Section~\ref{ssec:methods-rnaa}. A simple greedy mapper is employed to minimize interaction distance, which can also be adapted for a zoned architecture.

Furthermore, an accurate circuit level noise model is used to assess the logical error rate performance of the proposed code that factors in additional gate errors and idle time due to transversal logic. For a neutral atom implementation of HGPS codes, we observe a high circuit level threshold of $0.78\%$, as shown in Section~\ref{ssec:eval-ler}. We use the fits obtained from these simulations to estimate system-level costs of logical subroutines.

\section{Methods}\label{sec:methods}

\subsection{CQLU Construction}  \label{ssec:methods-code}
Given an error correcting code, it is well understood how to determine the set of logical operators arising from its code automorphisms\cite{berthusen2025automorphismgadgetshomologicalproduct,danbrownauto}. However, given an efficient ISA for a class of applications, the question of how we can construct a QECC containing specific logical operations is both less explored and more practical to consider. Our CQLU design makes two technical contributions to the latter question: first, we adapt code modification techniques such as augmentation and extension for the purpose of embedding a logical operation as matrix automorphisms; secondly, we identify a class of highly symmetric HGPS codes that require little modifications to achieve an efficient ISA. We summarize the codes and logical ISA in Table~\ref{tab:codes}, Table~\ref{tab:isa} and Figure~\ref{fig:CQLU-isa} while detailed proofs are deferred to Appendix~\ref{appx:proofs}.

We will introduce a motivating example. Formally, given a binary linear code $\mathcal{C}$ with generator matrix $G$ and parity check matrix $H$, an automorphism on $\mathcal{C}$ refers to a coordinate permutation $\mathcal{\sigma}$ that preserves the space of all code words. That is, there exists some invertible linear transformation $g$, such that \begin{equation}\label{eq:auto-cond}
    gG\sigma = G.
\end{equation}
Equation~\ref{eq:auto-cond} is called the \emph{automorphism condition}. To enact the induced logical action of $g$ on the classical code $\mathcal{C}$, one can virtually permute coordinates, and modify subsequent encoding/decoding. When using $\mathcal{C}$ to construct a quantum code such as the Hypergraph Product code, a stricter condition of \emph{matrix automorphism} is necessary for automorphisms $(g,\sigma)$ to realize a permutation-transversal logical operation\cite{berthusen2025automorphismgadgetshomologicalproduct}, where 
\begin{equation}
    H\sigma = \rho H
\end{equation}
for some row permutations $\rho$. When $\sigma$ is a matrix automorphism for some classical code $\mathcal C$, the induced logical action lifts to a fault-tolerant logical operations $g\otimes I$, $I\otimes g$, or $g\otimes g$ on the HGP of $\mathcal{C}$, all of which are implementable as virtual qubit-relabeling. 

We do not expect codes to contain non-trivial automorphisms in general. For example, consider the smallest binary error correcting code with two logical bits--a $[5,2,3]$ shortened Hamming code--with generator and check matrices:
\begin{align}
    G & = \begin{bmatrix}
        1&&1&1&\\
        &1&&1&1
    \end{bmatrix},\label{eq:eg-G}\\
    H &=\begin{bmatrix}
        1&&1&&\\
        1&1&&1&\\
        &1&&&1
    \end{bmatrix}.
\end{align}
The SWAP between the two logical qubits is an automorphism-induced gate of $G$. By physically swapping the first two physical bits, and then the third and fifth bits, we obtain a new basis $G_{SWAP}$ that corresponds to swapping the two rows of $G$. However, performing a CNOT between the two encoded bits (which corresponds to applying $g = \begin{bmatrix}
    1&\\1&1
\end{bmatrix},$) would yield 
\begin{equation}\label{eq:eg-gG}
    gG = \begin{bmatrix}
        1&&1&1&\\
        1&1&1&&1
    \end{bmatrix} 
\end{equation}
that clearly cannot be obtained from $G$ via a column permutation since it contains a weight-$4$ row.

Suppose we are given a logical action $g$ and a code $G$, can we obtain a related code $G'$ that implements $g$ as a coordinate permutation? In this case, it suffices to add a single column $\begin{bmatrix}
    1\\1
\end{bmatrix}$, which yields

\begin{align}
    gG' &= \begin{bmatrix}
    1&\\1&1
\end{bmatrix}\cdot \begin{bmatrix}
        1&&1&1&&1\\
        &1&&1&1&1
    \end{bmatrix}\\
    &= \begin{bmatrix}
        1&&1&1&&1\\
        1&1&1&&1&
    \end{bmatrix} \\
    &= G'\cdot\sigma_P
\end{align}
where $\sigma_P$ is a column permutation that swaps the first three columns with the last three columns. That is, an augmentation on $G$ yields a $[6,2,4]$ code $G'$ whose automorphism group is expanded to include an additional logical gadget $g$.

We can prove a loose upper-bound for a generic code:

\begin{theorem}[Code Automorphism Expansion]
    Let us be given a $[n,k,d]$ code $\mathcal{C}(G,H)$ and $\mathcal{G} = \{g_1,...,g_m\} \leq \gln{k}$. There exists a family of $[n'\leq nm,k,d'\leq dm]$ codes $\mathcal{C}_{\mathcal{G}}(G_{\mathcal{G}},H_{\mathcal{G}})$ such that $\mathcal{C}\cong\mathcal{C}_\mathcal{G}$, $\mathcal{G}\leq \mathcal{L}_{G_\mathcal{G}}({\mathcal{C}_\mathcal{G}})$. Furthermore,  if $H$  is $w$-bounded, $g$ is $t$-bounded for all $g\in \mathcal{G}$ and $|\mathcal{G}| = m$,  there exists a $(w + t+ 1)$-bounded check matrix $H^{(t)}_\mathcal{G}$ and a $(w + m+ 1)$-bounded check matrix $H^{(m)}_\mathcal{G}$.
\end{theorem}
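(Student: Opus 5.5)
The plan is to build $G_{\mathcal{G}}$ by concatenating the orbit of $G$ under the group. Enumerate $\mathcal{G}=\{g_1,\dots,g_m\}$ and set
\[
G_{\mathcal{G}} = \begin{bmatrix} g_1G & g_2G & \cdots & g_mG\end{bmatrix}\in\mathbb{F}_2^{k\times nm}.
\]
This is the block analogue of the augmentation in the $[5,2,3]\to[6,2,4]$ example, where the extra column $[1\ 1]^T$ plays the role of the new orbit element.

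For any $h\in\mathcal{G}$, left multiplication $g_i\mapsto hg_i$ is a permutation $\pi_h$ of $\mathcal{G}$, since $\mathcal{G}$ is a group. Therefore
\[
hG_{\mathcal{G}}=\begin{bmatrix} hg_1G & \cdots & hg_mG\end{bmatrix}=G_{\mathcal{G}}\,\sigma_h,
\]
where $\sigma_h$ is the block permutation that moves block $i$ to the position of block $\pi_h(i)$. This is exactly the automorphism condition \eqref{eq:auto-cond} with logical action $h$, so $\mathcal{G}\le\mathcal{L}_{G_{\mathcal{G}}}(\mathcal{C}_{\mathcal{G}})$.

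The map $x\mapsto xG_{\mathcal{G}}$ is injective because already $x\mapsto xg_1G$ is, so the code still has dimension $k$. Each codeword is the concatenation of $m$ codewords of $\mathcal{C}$, each the image of $x$ under some $g_i$. Projecting onto any single block is therefore an isomorphism $\mathcal{C}_{\mathcal{G}}\cong\mathcal{C}$. This gives $n'=nm$. For the distance, the minimal-weight codeword of $\mathcal{C}$ appears in one block, and every block has weight at most $n$. The bound $d'\le dm$ needs care: the blocks are $g_ix$-images, not copies of the minimizing word. I would state it as $d\le d'\le nm$, and obtain $d'\le dm$ when the minimizer $xG$ is chosen so that each $g_ix$ gives a weight-$\le d$ word. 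Alternatively, following the "family" phrasing, I would take subcodes: use only the smallest generating set, or a quotient, as fewer blocks. I expect the distance bound to be the weakest claim, and would present it as $d'\le \sum_i \wt{g_i x G}$ and simply bound it loosely.

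For the check matrices, write $y=(y_1,\dots,y_m)$ with blocks $y_i=g_ixG$. The dependence among blocks is $y_i = (g_ig_1^{-1}x')G$ with $x'=g_1x$. The membership conditions are $Hy_i^T=0$ for each $i$, which uses $w$-bounded rows, plus consistency relations linking blocks.

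For $H^{(t)}_{\mathcal{G}}$, I would fix an information set $I$ of $\mathcal{C}$, so that $x'$ is read off the coordinates $I$ of block 1. The relation $y_i|_I=(g_ig_1^{-1})\,y_1|_I$ is then imposed row by row. When $g$ is $t$-bounded, each such row involves at most $t$ coordinates of block 1 plus one coordinate of block $i$. To keep the check weight near $w+t+1$, I would fold each consistency constraint together with a local check, or route it through an auxiliary coordinate. That auxiliary coordinate accounts for the $+1$, and the check-matrix sparsity lemma keeps column weights bounded.

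For $H^{(m)}_{\mathcal{G}}$, I would instead chain the blocks along a generating sequence, so that each coordinate appears in at most $m$ consistency checks.

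The main obstacle is keeping both row and column weights bounded while guaranteeing that the constraints cut out exactly the row space of $G_{\mathcal{G}}$. I would first verify this by a dimension count, showing the rank of $H_{\mathcal{G}}$ equals $nm-k$, and only then optimize the weights.
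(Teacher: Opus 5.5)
Your generator $G_{\mathcal{G}}=[g_1G|\cdots|g_mG]$, and your argument that left multiplication by $h\in\mathcal{G}$ only permutes the blocks, match the paper's proof, which phrases it as $\mathcal{G}$-invariance of $\cols{G_{\mathcal{G}}}$. Your check system, a copy of $H$ on each block plus consistency rows on an information set, has the same shape as the paper's row-reduced matrix. It cuts out exactly the code, so rank is not the issue.

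The gap is in the sparsity claims, where you have assigned the two linking patterns the wrong way round. In your star for $H^{(t)}_{\mathcal{G}}$ the rows have weight at most $\max(w,t+1)$. However, each information coordinate of block $1$ enters a consistency row whenever the corresponding column of some $g_ig_1^{-1}$ has a nonzero entry. Since every column of an invertible matrix is nonzero, its column weight is at least $m-1$ on top of its $H$-checks, so no $m$-independent bound like $w+t+1$ can hold. Folding constraints into local checks, adding an auxiliary coordinate (which would change the code and $n'$), or an unnamed sparsity lemma does not repair this.

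The paper gets $H^{(t)}_{\mathcal{G}}$ from a \emph{chain}. It row-reduces so that copy $i+1$ is tied only to copy $i$, through rows $[\,g_{i+1}^{T}g_i^{-T}\mid I\,]$. These rows are sparse because $g_{i+1}g_i^{-1}\in\mathcal{G}$ is again $t$-bounded; closure of $\mathcal{G}$ is the key point. Every coordinate then meets at most one $g$-type block and one identity block, and the $+1$ is just that identity entry. The $(w+m+1)$ matrix is the \emph{star}, oriented opposite to yours: its rows are $[\,I\mid g_i^{-T}\,]$, so each systematic coordinate occurs exactly once per group element, which is where the $m$ comes from. Your chain ``along a generating sequence'' would give a $t$-type bound, not an $m$-type one.

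On the distance you give no proof. Your fallback of keeping fewer blocks (e.g.\ only those for a generating set) breaks the $\mathcal{G}$-invariance of the columns that the automorphism argument needs. The observation you are missing points the other way: every block $xg_iG$ of a codeword is a nonzero codeword of $\mathcal{C}$. Hence the concatenation has $d'\ge dm$, much stronger than your $d'\ge d$, with equality only if some message has all its $\mathcal{G}$-images of weight $d$.

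The paper's proof simply asserts $d'=dm$, and the upper bound genuinely fails. Take the $[5,4,2]$ even-weight code with columns $1000,0010,0101,1001,0110$ and the order-$3$ map $g:(a_0,a_1,b_0,b_1)\mapsto(a_1,a_0+a_1,b_1,b_0+b_1)$. The blocks $G,gG,g^2G$ contain each of the $15$ nonzero vectors exactly once, so nothing is punctured and $\mathcal{C}_{\mathcal{G}}$ is the $[15,4,8]$ simplex code, with $8>6=dm$. So for this construction the defensible clause is $d'\ge dm$ before puncturing, not $d'\le dm$. You were right to distrust it, but you should say so rather than try to rescue it.
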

\begin{theorem}[qLDPC with Virtual Instruction]
Let us be given $\mathcal{C}(G,H)$ and $\mathcal{G} \leq \gln{k}$. Suppose $H$ is $w$-bounded, $g$ is $t$-bounded for all $g\in \mathcal{G}$ and $|\mathcal{G}| = m$. Then, there exists a family of $[[nm, k, dm]]$ codes $\mathcal{C}_\mathcal{G}(G_\mathcal{G},H_\mathcal{G})$ with a $2(w +mt + 1)$-bounded check matrix $H_\mathcal{G}$ such that  $g\in \mathcal{G}$ can be implemented as a virtual qubit relabeling.
\end{theorem}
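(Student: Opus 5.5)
The plan is to build on the orbit construction behind the Code Automorphism Expansion theorem and then lift the resulting classical code to a quantum code by a hypergraph product. For the classical part, take the generator matrix obtained by concatenating the translates of $G$ under the group:
\[
G_{\mathcal{G}} = \begin{bmatrix} g_1 G & g_2 G & \cdots & g_m G \end{bmatrix},
\]
so that $n' = nm$. Because $\mathcal{G}$ is a group, left multiplication by any $h\in\mathcal{G}$ permutes the blocks: $h G_{\mathcal{G}} = G_{\mathcal{G}}\sigma_h$, where $\sigma_h$ is the block permutation induced by $g_i \mapsto h g_i$. Each block $g_iG$ has full rank $k$, so the code still encodes $k$ bits. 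Every nonzero codeword $xG_{\mathcal{G}}$ restricts on block $i$ to the nonzero codeword $(xg_i)G$ of $\mathcal{C}$, so the distance is at least $dm$. Hence we obtain an $[nm,k,\geq dm]$ code, and I expect the family in the statement to be parametrised by exactly this choice.

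Next I would show that $\sigma_h$ is a \emph{matrix} automorphism of a suitable check matrix, since this is the condition required to lift it to a virtual relabeling. The natural check matrix $H_{\mathcal{G}}$ has two parts. First, each block carries its own copy of the checks $H$; block $i$ contains the codewords $(xg_i)G$, which are exactly the codewords of $\mathcal{C}$, so $H$ annihilates every block. Second, consistency checks tie each block to a reference block. The information of block $i$ is $xg_i$. Fixing information-set coordinates of $G$, so that the block-$i$ information equals $xg_i$ read off directly, we impose
\[
y_i = y_1 g_1^{-1} g_i,
\]
each row of which involves one coordinate of $y_i$ and at most $t$ coordinates of $y_1$. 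Symmetrising these constraints over all pairs related by generators gives rows of weight at most $mt+1$.

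The symmetrisation is what makes $\sigma_h$ act as a row permutation $\rho_h$: both the within-block copies of $H$ and the orbit of consistency checks are permuted among themselves. The combined column weight is then at most $w + mt + 1$. This column bound is the figure that matters, because the hypergraph product built from two $B$-bounded seed checks gives stabilizers of weight at most $2B$. Taking the hypergraph product of $H_{\mathcal{G}}$ with itself, or with a repetition-type partner chosen to recover the $[[nm,k,dm]]$ parameters, yields the $2(w+mt+1)$ bound. The product inherits $\sigma_h$ as a matrix automorphism and thereby gives the logical action $h\otimes I$ (or $h\otimes h$). By the result of \cite{berthusen2025automorphismgadgetshomologicalproduct} cited earlier, this action is implementable as a virtual qubit relabeling.

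The main obstacle is the matrix-automorphism step together with keeping the bounds honest. The consistency checks must be chosen so that the whole row set is closed under every $\sigma_h$ while the weights stay at $mt+1$ rather than growing with products of group elements. My first attempt would route all constraints through the reference block and use closure of $\mathcal{G}$, checking that $hg_i$ and $hg_j$ still define a row of bounded weight. A second difficulty is matching the exact quantum parameters $[[nm,k,dm]]$, since a naive self-product would give $k^2$ logical qubits. If that arises, I would pair $H_{\mathcal{G}}$ with a trivial $1$-dimensional partner, so that the logical dimension stays $k$ and the distance stays $dm$.
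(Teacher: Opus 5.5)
Your construction follows the paper's route. The pieces correspond as follows:
\begin{itemize}
\item your orbit generator $[g_1G|\cdots|g_mG]$ is the paper's $[I\,|\,g_1\cdots g_m\,|\,M\,|\,g_1M\cdots g_mM]$ with the columns regrouped;
\item your block copies of $H$, which commute with the induced block permutation, are the paper's $H_{lower}$;
\item your consistency checks on an information set, made invariant under that permutation, are the paper's augmented upper checks;
\item the factor $2$ comes from the hypergraph product, and the relabeling from \cite{berthusen2025automorphismgadgetshomologicalproduct}.
\end{itemize}
Write $R_{j,l,c}$ for the row enforcing coordinate $c$ of $y_l=y_j\,g_j^{-1}g_l$. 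Left multiplication on blocks preserves the difference $g_j^{-1}g_l$, so closing your reference-block constraints produces all $m(m-1)$ ordered pairs. That is exactly the paper's $(m-1)mk$ extra checks.

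\textbf{The fallback for matching $[[nm,k,dm]]$ fails.} A hypergraph product with a partner $H_2\in\mathbb{F}_2^{m_2\times n_2}$ has $nm\,n_2+r\,m_2$ qubits, where $r$ is the number of rows of $H_{\mathcal{G}}$, and its distance is at most $d_2$. Keeping length $nm$ with $k_2=1$ forces $n_2=1$ and $m_2=0$. Then $H_Z$ is empty and $H_X=H_{\mathcal{G}}$, so a single $X$ on an information qubit is an undetectable logical error and the distance is $1$. No hypergraph-product lift has these parameters. The $[[nm,k,dm]]$ are the parameters of the classical code $\mathcal{C}_{\mathcal{G}}(G_{\mathcal{G}},H_{\mathcal{G}})$; the appendix version of this theorem attaches the same symbol to it. Only the weight bound and the relabeling are quantum claims, so keep the self-product and drop the fallback.

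\textbf{A caveat if you want a quantum distance guarantee.} The self-product's distance is also capped by $d(\ker H_{\mathcal{G}}^T)$, and the invariant checks are highly redundant. Suppose $\mathcal{G}$ contains a single CNOT $s=I+E_{pq}$. Since $s^{-1}=s$, both $(j,l)$ and $(l,j)$ have type $s$, and $R_{j,l,q}+R_{l,j,q}=R_{j,l,p}$, so that distance is at most $3$. Your open repetition partner of length $L\ge dm$ has a trivial transpose code, so it avoids this. It gives $k$ logical qubits at distance $dm$ with the relabelings inherited, but on $nmL+r(L-1)$ qubits.

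\textbf{Your weight count also needs repair.} Each consistency row has weight at most $t+1$: one target bit plus the support of a column of $g_j^{-1}g_l$. So $t$ must bound column weights of group elements as well. What grows is the column weight. Under the all-pairs closure an information bit lies in $m-1$ rows as a target and in up to $(m-1)t$ rows as a source. The argument therefore only gives column weight (column weight of $H$) $+\,(m-1)(t+1)$, which is weaker than $w+mt+1$ once $m>t+2$. The appendix's $(m-1)t$ figure for the same closure omits the target memberships.

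To actually reach the stated bound, keep only the constraint types $g_j^{-1}g_l\in S$ for a generating set $S$ of $\mathcal{G}$. That row set is already invariant under the block permutations. It still forces consistency because the Cayley graph of $(\mathcal{G},S)$ is connected. It adds at most $|S|(t+1)\le\lfloor\log_2 m\rfloor(t+1)\le mt+1$ to each column.
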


%This code modification scheme may not be practical in general. At the limiting case, embedding $\gln{k}$ to any classical code yields the classical Simplex code, and a lifted quantum code with weight proportional to code length. However, when the code and ISA is appropriately chosen, we can retain both low-overhead and LDPC properties. T
The detailed proof is given in Appendix~\ref{appx:proofs}, which also includes some useful scenarios of when such code modifications incur low overheads. 

\subsection{CQLU Logical Primitives}\label{ssec:methods-isa}
While our code modification techniques apply generally to all codes, its performance and whether we have the LDPC guarantee depend on the ISA we wish to embed and the starting classical code, which should be co-optimized with a chosen application. 

In this work, we highlight a family of hypergraph product (HGP) codes, $\mathcal{Q}_r$, with classical Simplex codes $\mathcal{S}_r$ as seed codes. Since the classical simplex code contains all reversible linear operations as code automorphisms\cite{ComputingQLDPC}, no additional overhead is required for automorphism expansion. There also exists a weight-$3$ circulant parity check matrix (pcm) that contains a set of logical cyclic shifts with additional CNOTs from the qubit shifted (see Figure~\ref{fig:CQLU-isa}.b) as matrix automorphisms. We call these shifts dirty cyclic shifts. In an HGP construction, these matrix automorphisms can either be used to perform CNOTs between specific qubits in fixed positions, or a cyclic shift provided that either the control qubit is set to $\ket{0}$ or the target qubit is set to $\ket{+}$. 

Additionally, a 4-bit CNOT circuit $g_{auto}$ (see Figure~\ref{fig:CQLU-isa}.a) is embedded as a matrix automorphism. Since $g_{auto}^2=I$, the check extension incurs at most $2\times$ overhead. In particular, for the $[[450, 32, 8]]$ code $\mathcal{Q}_4$, there exists a pcm that contains both the dirty shifts and $g_{auto}$ as matrix automorphisms with no additional overhead. $g_{auto}$ lifts to $16$ qubit CNOT circuits in an HGP construction that correspond to $g_{auto}\otimes I$, $g_{auto}$ applied to each column; $I\otimes g_{auto}$, $g_{auto}$ applied to each row; or $g_{auto}\otimes g_{auto}$, as illustrated in Figure~\ref{fig:CQLU-isa}.a. The full ISA is summarized in Figure~\ref{fig:CQLU-isa} and Table~\ref{tab:isa}. 

\begin{table}
    \centering
    \begin{tabular}{c|cc}
          Codes ($[[n,k,d]]$)&  Footprint ($\frac{n+m}{k}$) &Cycle Time\\\hline
           $[[450, 32, 8]]$&   $29.2\times $ & $3.9~ms$\\
 $[[90,8,10 (\leq 8)]]$& $22.5\times$ & $6.5~ms$\\
  $[[49, 1,7]]$& $98\times $ &$1.4~ms$\\
 $[[81,1,9]]$& $162\times$ &$1.4~ms$\\
    \end{tabular}
    \caption{Comparison of space- and time-overheads of $\mathcal{Q}_4$, a Hypergraph-Product of Simplex code with the rotated surface codes \cite{fowler} and bivariate bicycle codes \cite{IBMBB}. Cycle times are calculated using systolic scheduling \cite{JoshBB} on a RNAA device. The improved cycle time is due to $\mathcal{Q}_4$ having a simple cycle structure, where $X/Z$ checks are only shifted in one direction, as opposed to two--which is the case for the $[[90,8,10 (\leq 8)]]$ bivariate bicycle code. }
    \label{tab:codes}
\end{table}

\begin{figure*}
    \centering
    \includegraphics[width=0.9\linewidth]{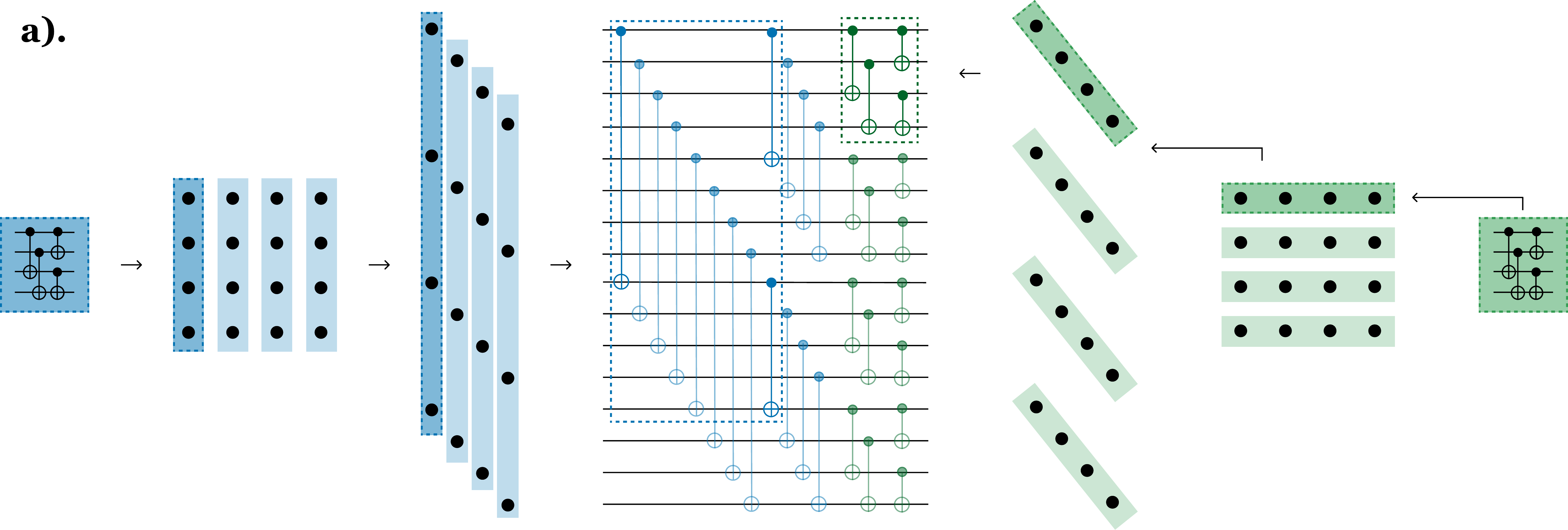}
    \includegraphics[width=0.95\linewidth]{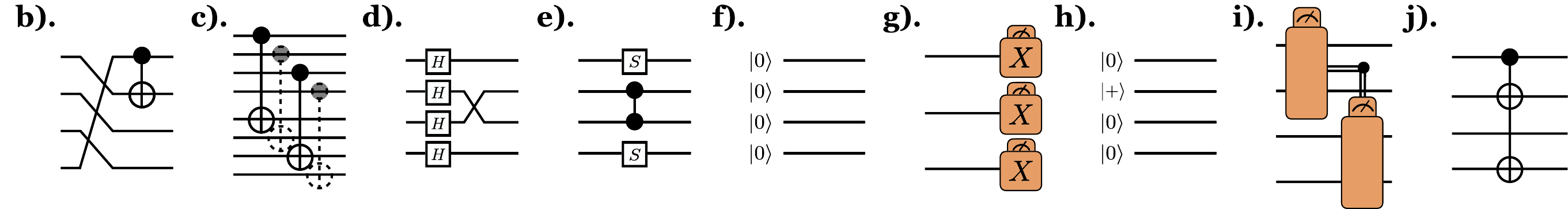}
    \caption{Set of logical instructions on CQLUs}
    \label{fig:CQLU-isa}
\end{figure*}
\begin{table*}
    \centering
    \begin{tabular}{c|cccl}
         Instructions& Total Time& Logical Ancilla&Reaction Time  &Execution Time\\\hline
         a). Automorphism CNOTs (autoCNOTs)&  0& 0&0  &$0$ \\
         b). Dirty Cyclic Shifts&  0& 0&0  &$0$ \\
 c). Transversal CNOTs (tCNOTs)& $O(1)$& 0&$O(1)$  &$3.0~ms$ ($+2.4~ms$)\\
 d). $H$-SWAP& $O(1)$& 0&$O(1)$  &$1.9~ms$\\
 e). $CZ$-$S$& $O(1)$& 0& $O(1)$  &$1.9~ms$\\
 f). $X^{\otimes k}/Z^{\otimes k}$ Measurements& $O(1)$& 0&$O(1)$  &$0.5~ms$\\
 g). $\ket{0}^{\otimes k}$, $\ket{+}^{\otimes k}$& $O(d)$& 0&$O(1)$  &0 ($31.2~ms$)\\
 h). Mixed $\ket{0}$, $\ket{+}$ States& $O(d+\sqrt k)$& $O(\sqrt k)$&$O(1)$  &$6.0~ms$ ($+31.7~ms$)\\
 i). Reactive PPM& $O(d+\sqrt k)$& $O(\sqrt k)$&$O(1)$  &$6.0~ms$ ($+44.2~ms$)\\
 j). Arbitrary Fanouts& $O(d+\sqrt k)$& $O(\sqrt k)$&$O(1)$  &$6.0~ms$ ($+38.2~ms$)\\
    \end{tabular}
    \caption{CQLU ISA and costs. Gates a). and b). can be implemented virtually via qubit relabeling, which may incur additional $2.4~ms$ execution time for the next tCNOT. For gates g).--i)., we assume an logical ancilla is prepared offline. The total execution time without provisioning is reported in parentheses. }
    \label{tab:isa}
\end{table*}

\subsection{Logical-Functional Compilation}\label{ssec:methods-comp}
\begin{figure*}
    \centering
    \includegraphics[width=0.95\linewidth]{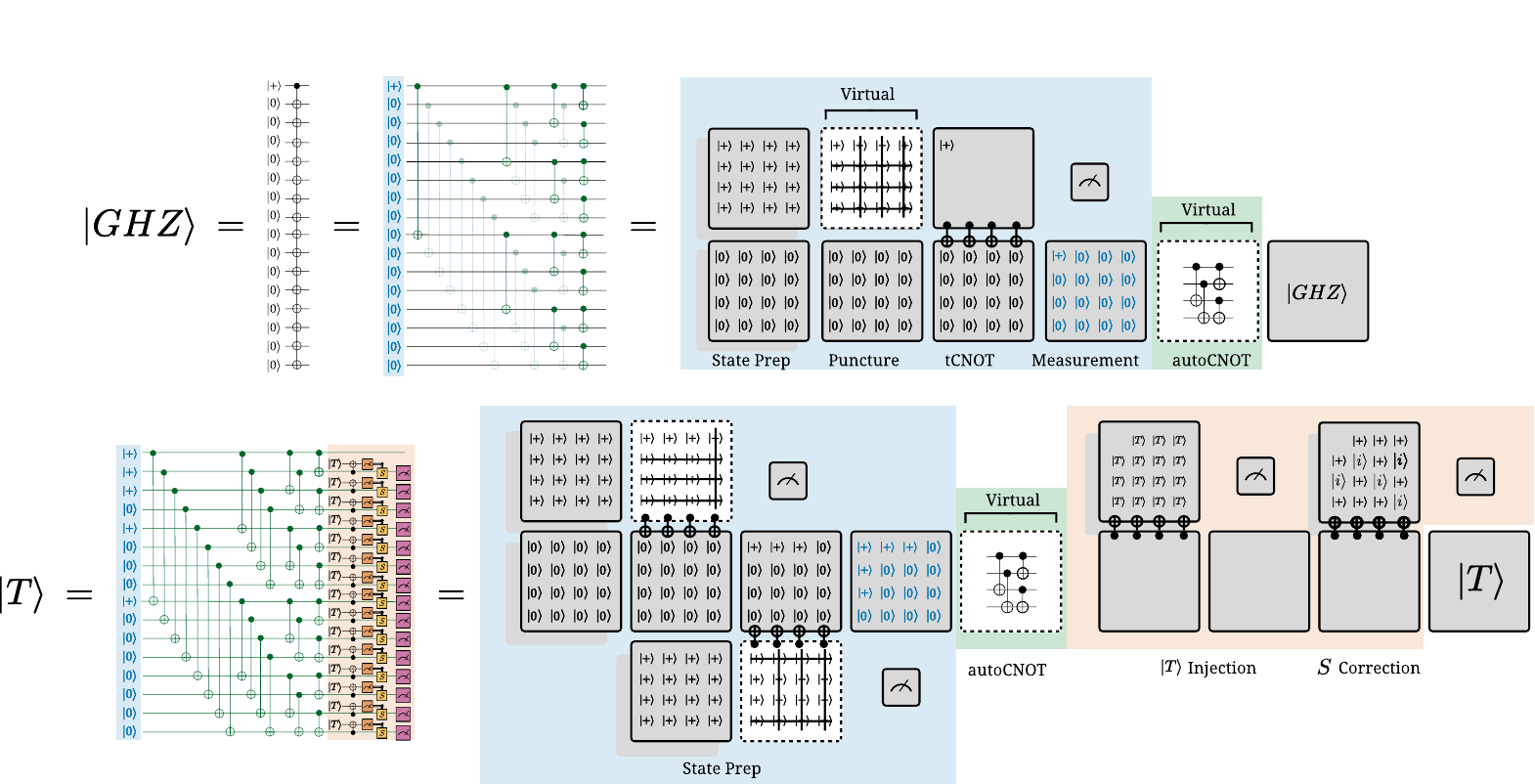}
    \caption{State preparation circuits using CQLU instructions. The initial Pauli eigenstate state preparation (highlighted in blue) can be done by preparing two (in the case of GHZ) or three (in the case of $[15,1,3]$) patches in $X$ or $Z$ eigenstates followed by puncturing and homomorphic measurements\cite{hgphomo}. The entangling operations (highlighted in green) can be done using virtual embedded CNOTs, where the shaded CNOTs in the GHZ circuit act trivially. Magic state injection (highlighted in gray) is performed using a non-native LPU. The Clifford correction (highlighted in orange) is implemented as a mixed $X/Y$ reactive measurement.}
    \label{fig:state-prep}
\end{figure*}
\subsubsection{State Preparation}
We can use $g_{auto}$ to prepare a $GHZ$ state on all qubits in the same block, as illustrated in Figure~\ref{fig:state-prep}. We can grow the GHZ state across multiple patches using transversal CNOTs between a $\ket{GHZ}$ patch and $\ket{0}^{\otimes k}$ patches. To prepare a GHZ state with arbitrary support, we can first entangle all qubits, and then selectively measure out qubits not in the support by preparing a mixed $\ket{0}/\ket{+}$ state in the corresponding pattern.

For MSD circuits, the only non-native operation required is the magic state injection, as illustrated in Figure~\ref{fig:state-prep}, which dominates resource costs. We expect that as better schemes are developed for magic state injection, RASCqL's performance may improve in tandem. 

\subsubsection{Quantum Arithmetic}
We use Gidney's ripple carry adder that uses $4n+O(1)$ $T$ gates for an $n$-bit adder\cite{Gidney_2018}, where each MAJ and UMA block is compiled using time optimal methods \cite{fowler2013timeoptimalquantumcomputation}. We then re-write this circuit so the bulk of the Clifford gates can be executed in the CQLU ISA. In particular, the 3-bit fan-outs and fan-ins within the MAJ block are realized using repeated applications of $g_{auto}$ and global $H$, while the Bell-preparation and Bell-measurement can be realized using tCNOTs by storing the bridge qubits in a separate qLDPC block. The Pauli corrections from Bell-measurements and Clifford corrections from $\ket{T}$ injections are propagated past the $Z$-basis measurements, turning them into reactive measurements with basis depending on previous measurement results. In the end, all operations in the gray box are executed in parallel using cheap in-block operations, followed by reactive measurements that are resolved in series with $O(1)$ delay.

The quantum look-up tables can be implemented by combining Toffoli ladders (which appear as a subcircuit of the MAJ block) and CNOT-fanouts (which can be implemented by injecting a GHZ state).

\begin{figure*}
    \centering
    \includegraphics[width=0.95\linewidth]{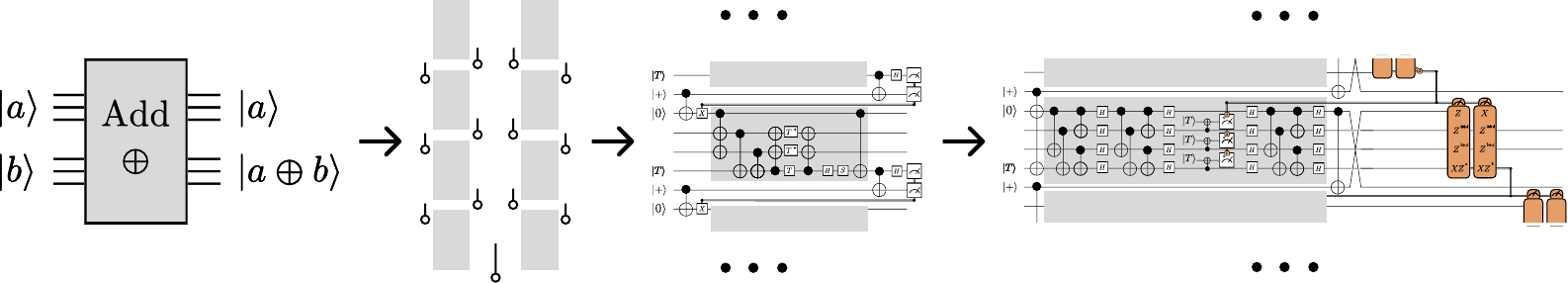}
    \caption{Gidney's ripple carry adder\cite{Gidney_2018} compiled using CQLU instructions. Eight MAJs can be implemented in each block with the same time cost. See Appendix~\ref{appx:circuits} for a detailed derivation. }
    \label{fig:adder}
\end{figure*}

\subsection{Logical-Physical Compilation}\label{ssec:methods-rnaa}
In this section, we describe our hardware assumptions, and how we achieve realistic estimates for the time costs of CQLU primitives. We also describe explicit layouts and schedules to implement non-trivial permutations exploiting the parallel atom re-arrangements in RNAA platforms.

\subsubsection{Hardware Assumptions}
For our estimates we presume a RNAA device that supports both mid-circuit, parallel atom movement via AODs and in-place measurement of ancilla qubits. High fidelity atom movement has been demonstrated in experiment~\cite{Bluvstein2024}, leading us to assume moving distance $l$ requires time $2\sqrt{l / a}$ for an acceleration of $a=5500~m/s^2$, consistent with prior work on QEC architectures in RNAA~\cite{Zhou_2025}. In-place measurement of ancilla qubits has also been demonstrated in experiment in dual-species~\cite{anand2024dual} neutral atom devices. We note that RASCqL is not reliant on in-place measurement and is also compatible with a zoned architecture~\cite{Bluvstein2024} where ancilla qubits are moved to a separate zone for measurement.

%\subsubsection{Algebraic Structures} 
%\todo{Product structure, cyclic shifts for syndrome extraction, systolic movement schedules}
%\subsubsection{Logical ISA}
\begin{figure}
    \centering
    \includegraphics[width=\linewidth]{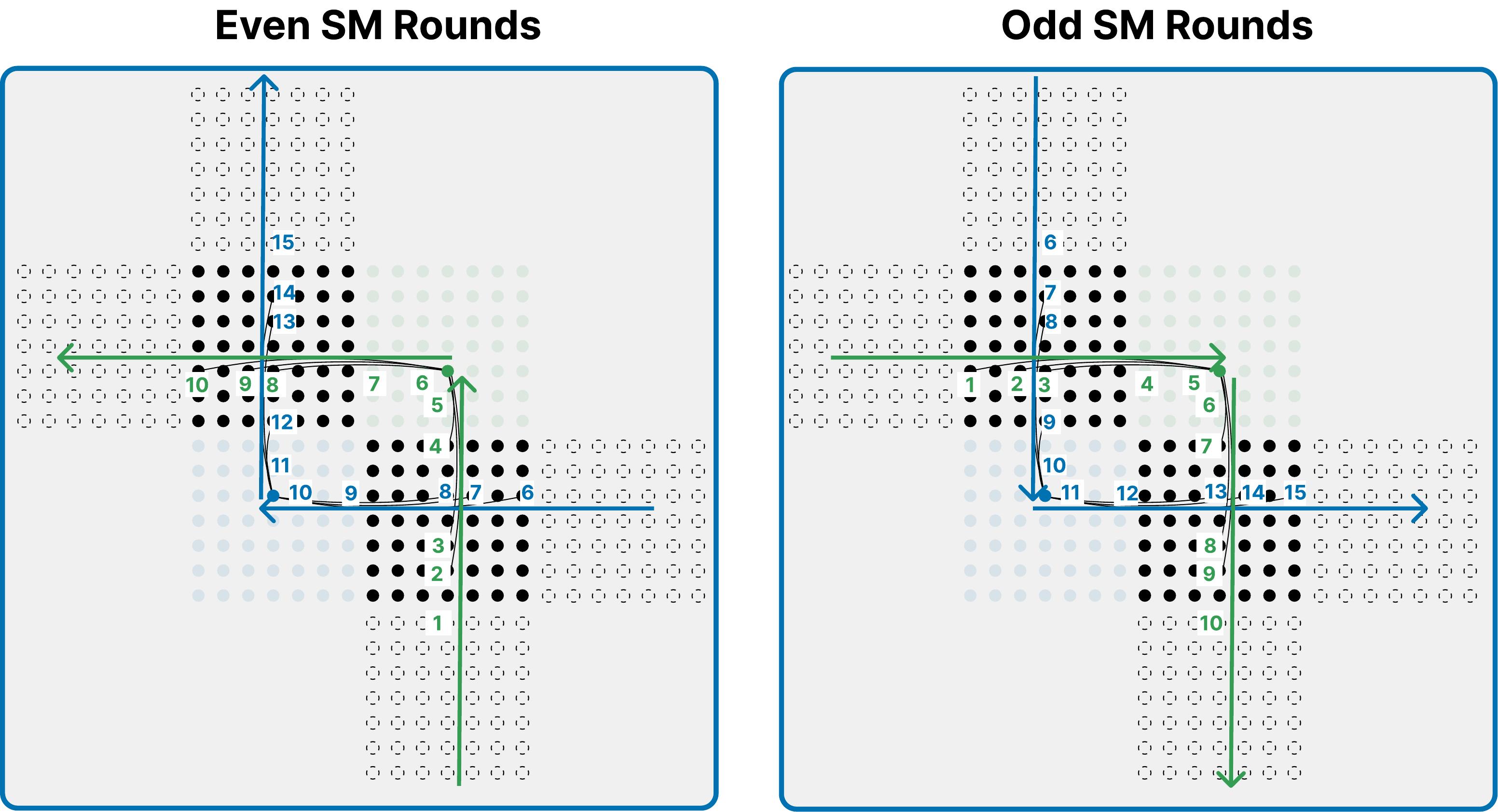}
    \includegraphics[width=0.75\linewidth]{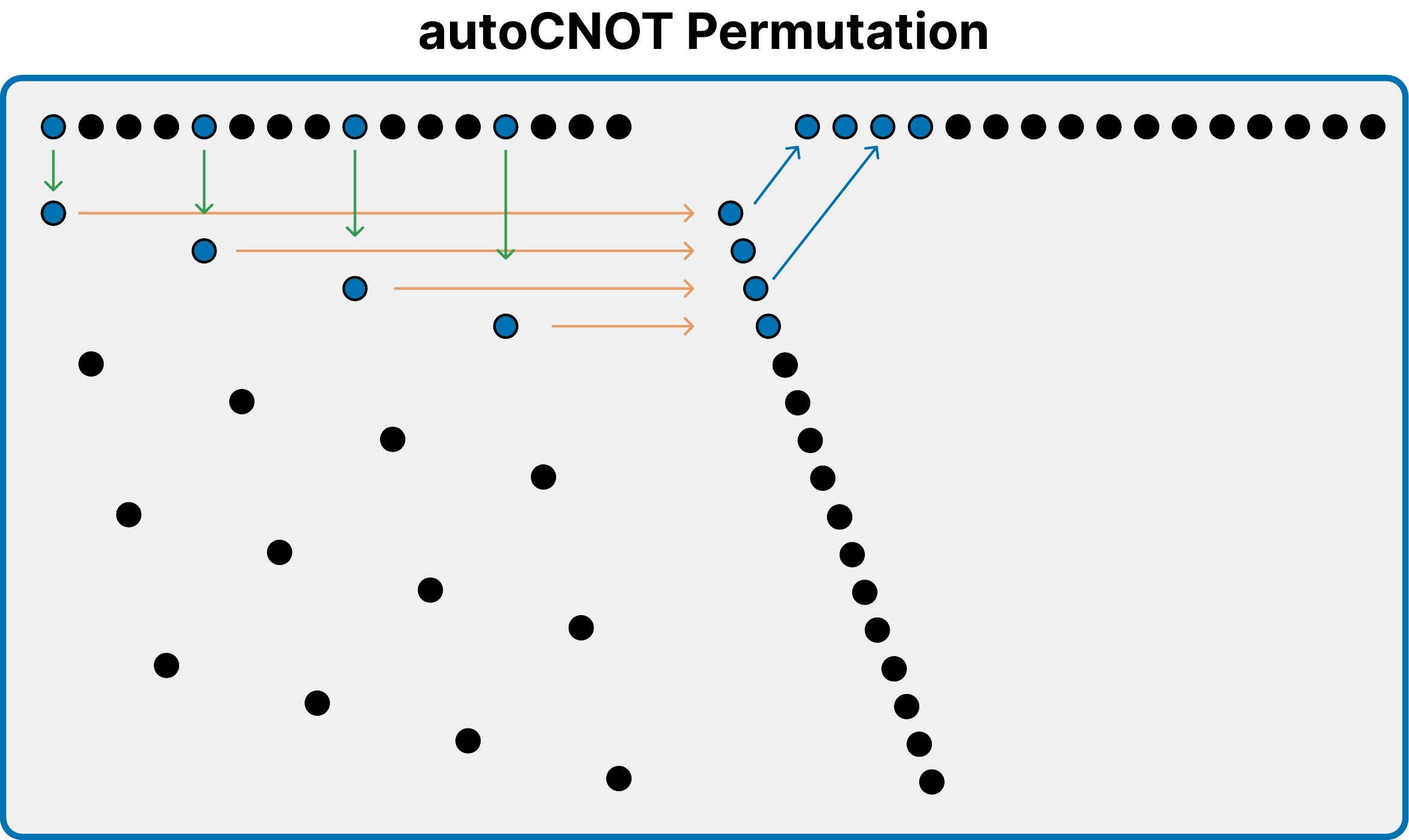}
    \caption{Top: Alternating movement schedule for syndrome measurement (SM) of the codes in RASCqL. X/Z check qubit arrays (green/blue) are moved collectively along the defined path. Numbered stops for parallel 2Q gates indicate the position of an example X/Z check qubit. Bottom: Three AOD movements (green, orange, blue) that implements the qubit permutation for autoCNOTs, which only needs to be executed before tCNOTs. }
    \label{fig:sm_movement}
\end{figure}
\subsubsection{Syndrome Extraction}
The qLDPC codes described in this work are HGPs of the classical Simplex codes, which are themselves constructed from cyclic permutation matrices.
The codes are therefore closely related to Generalized-Bicycle codes and benefit from the same movement parallelism exploited in~\cite{JoshBB}.
Based on this insight we create an efficient movement schedule for reconfigurable neutral atom devices detailed in Figure~\ref{fig:sm_movement}.
Since all CNOT orderings for HGP codes are distance-preserving~\cite{manes2025distance}, our schedule uses an ordering that minimizes the required movement.
To improve movement times, we break the schedule into alternating rounds.
Importantly, in each schedule the X stabilizers (green) interact with each data qubit first, ensuring all stabilizers commute.

\subsubsection{Virtual Logical Gates}
Performing virtual logic gates in RASCqL's ISA only requires a permutation of the physical data qubits that can usually be implemented as a virtual relabeling. 

One concern is that such qubit relabelings may disturb existing structures our optimized syndrome extraction schedule exploits, and cause future QEC cycles to be slower. We show that this is not the case for the HGPS codes. First, the Dirty Cyclic Shifts operation requires a cyclic shift of the data qubits. Since HGPS is also cyclic, it is easy to see that this shift only modifies the ordering of future syndrome extraction cycles, but does not change the overall timing.  The atom movement required for the autoCNOTs is a fold symmetry on the columns (rows). As we we show in Proposition C.2, this symmetry preserves the stabilizers exactly, hence requiring no changes to future syndrome extraction cycles.

We may need to implement the permutation physically when performing a tCNOT that involves a relabeled patch. We can also show that when combined with atom rearrangements for the tCNOT operations, permutations from virtual relabeling can be implemented efficiently. For the Dirty Cyclic Shifts, it suffices to move one column (row) before a tCNOT, leading to at most $1.2~ms$ delay. For autoCNOTs, three additional movements are needed to rearrange the grid. We can describe it for the first qubit in each column (row), which we label with two indices $x,y$ such that qubit $i = 4x+y$, and suppose the initial position of qubit $i$ are $(i,0)$. The three movements are $(4a+b,0)\mapsto (4a+b,4b+a)\mapsto((4a+b)/16, 4b+a)\mapsto (4b+a, 0)$, where the last step can be combined with the $tCNOT$. See Figure $9$ for a visual illustration. 

% HGPC Cyclic -> 2D Systolic Checks (Code)
% AOD Movement -> 2D Systolic (Hardware)

% Virtual CNOTs -> Qubit Relabeling 
% - Fig. 6a) Fold-transversal
% - Fig. 6b) Cyclic shift (2-step) rows for L block, cols for R block

%\subsubsection{Zoned Archiecture} 
%\todo{Discuss sensitivity to inplace measurement/zoned architecture, hardware params}

\subsection{Resource State Management}

\subsubsection{State-mediated Reactive Operations}\label{ssec:reactiveops}
\begin{figure}
    \centering
    \includegraphics[width=0.95\linewidth]{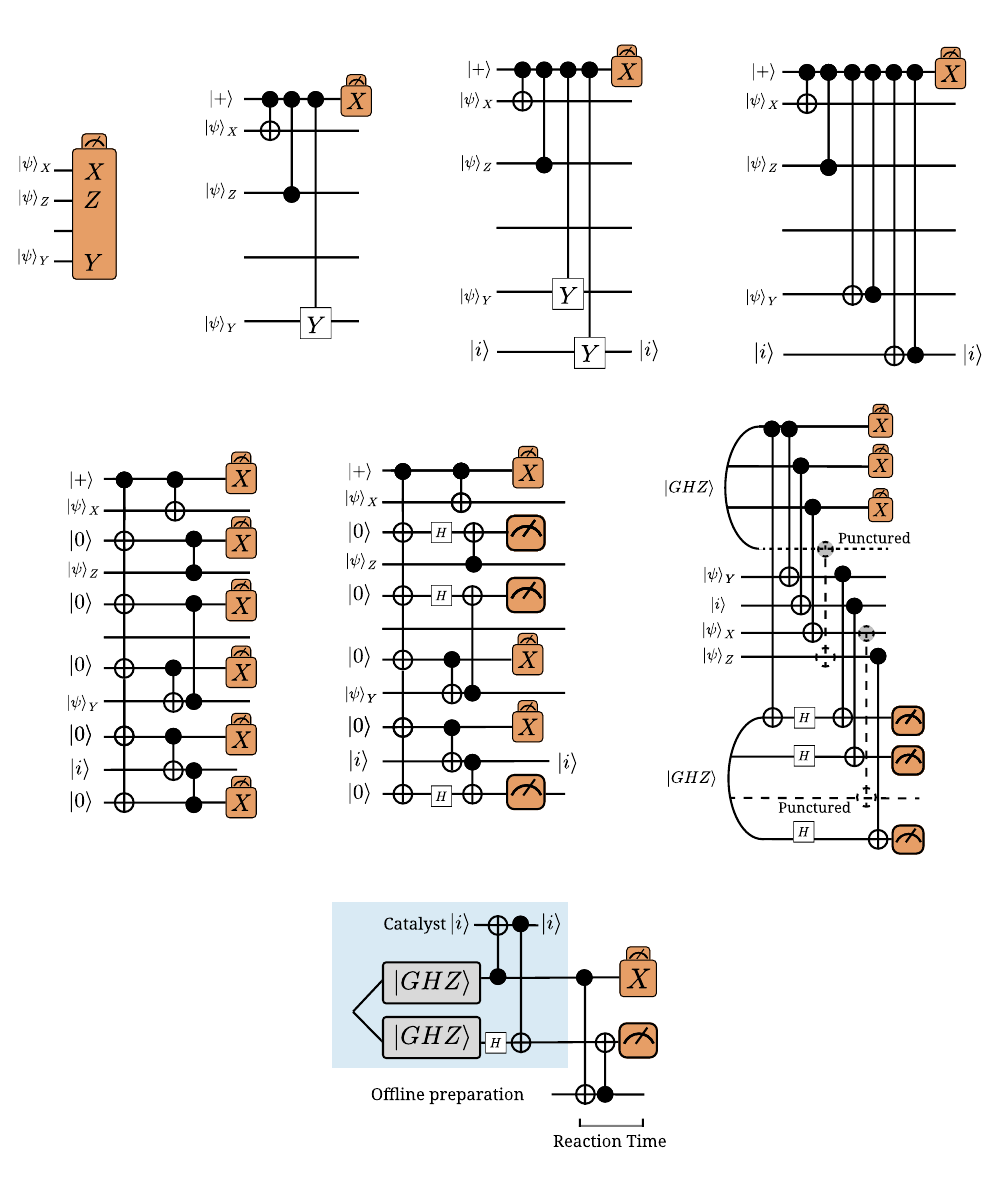}
    \caption{Preparation and consumption of a catalyzed $GHZ$ state for reactive measurements. When a catalyzed $GHZ$ state is available, it takes 2 tCNOTs ($6~ms$) to implement a reactive measurement. Online preparation, on the other hand, requires an additional $38.2~ms$ for $\ket{GHZ}$ preparation and $6~ms$ to use the catalyst.}
    \label{fig:reactivemeas}
\end{figure}
For adders and quantum look-up tables, Toffoli gates are implemented using $\ket{T}$ state injections, which may require $S$ corrections. The reason we use $\ket{T}$ instead of $\ket{CCZ}$ states is that it's easier to implement $S$ corrections using $\ket{i}$ state catalysts than non-native $CZ$ operations. 
These corrections may also depend on $X$ errors from using Bell-pairs for parallelization. Rather than enforcing the correct measurement dependencies and implementing $S$-corrections on-the-fly, we can commute potential Clifford corrections to the end of the circuit, resulting in a weight-$4$ Pauli product measurement as shown in the final step of Figure~\ref{fig:adder}. These reactive measurements can be implemented with two transversal CNOTs by consuming a special $\ket{GHZ}$ state (possibly prepared offline), as shown in Figure~\ref{fig:reactivemeas}.

\subsubsection{Reducing Provisioning Size}
For the $[[450,32,8]]$ code, $8$ MAJ blocks or temporary-AND Toffolis can be implemented in parallel. Each block may require reactive measurements parameterized by $5$ measurement outcomes--two from the Bell-measurement of the previous MAJ block, and three from $\ket{T}$ injections--giving rise to $8\times 2^5 = 256$ potential resource states to be requested. We reduce this number to a single type by allowing additional CQLU primitives. 

Firstly, we arrange the logical qubit placement in CQLUs so the support of each reactive measurement is in the same row or column. We prepare logical ancillae for only particular columns embedded in a block with $\ket{0}$ states, and use dirty cyclic shift to access the correct state. The layout may incur additional overhead for the logical computation, since bridge qubits used for parallelization participate in operations that belong to multiple blocks. Instead of assigning particular rows/columns to bridge qubits, we instead store them separately and SWAP them into the corresponding location before measurements, which can be done transversely and in parallel. We can again use dirty cyclic shifts for re-arrangements when we embed the empty rows/columns with $\ket{0}$ eigenstates. Since any additional operation are done before measurements take place, it does not increase reaction time, and while it does increase the qubit footprint, the cost is not significant over the whole program due to the short lifespan of the additional qubits needed.

Secondly, we can split the measurement into $X$-and $Z$-terms and inject using two patches (where $Y = iXZ$), as illustrated in Figure~\ref{fig:reactivemeas}. We can assume each qubit requires an $X$-term and a $Z$-term, and use virtual puncturing operations depending on the measurement results. This increases the reaction time by an additional transversal CNOT, but reduces the pool of potential states to $1$. There is a caveat, however, when we use global transversal CNOTs for state-preparation: the $Y$ terms require using a $\ket{i}$ catalyst, which requires a transversal CNOT and a transversal CZ onto the same qubit. While the $CZ$ can be implemented as a CNOT conjugated by Hadamards, it also enacts a logical transpose on the encoded qubits, so we can only apply $CNOT_{i,j}$ and $CZ_{i,j}$ between qubit $i,j$ in two blocks iff $i = j$. We overcome this by observing that each reactive measurement contains at most one $Y$-term; when we SWAP the bridge qubits into the $i$-th row/column, we can make sure that the $Y$ term is in the $i$-th position, so the targets of both the CNOT and CZ is a qubit on the diagonal. 

\subsection{Scheduling and Pipelining State Generation}
PReP allows us to decouple state generation from program execution with a small provision that contains catalyst $\ket{i}$ states on diagonal positions, and a pool of consumable $\ket{T}$ states and modified $GHZ$ states. Another benefit of off-line preparation is now we can produce such states in batches with higher efficiency. For example, since GHZ states are prepared on only a single row/column, we can batch the preparations of $8$ reactive-measurement ancillae together to save space-time cost. Another example is for magic states: rather than implanting two $[[15,1,3]]$ MSD in parallel in each block, we can use a $[[109, 19, 3]]$ over $4$ blocks that require two additional global $tCNOT$ for a $2.6\times$ increase in per-logical state efficiency. 

Finally, since request rate and amount is available prior to execution, we can in principle optimally schedule state generations per performance targets and device constraints. To optimize fidelity or total active volume, we may generate states with an \emph{as late as possible} policy, while the additional qubits required for state generation may drastically increase program footprint if request rates spike. On qubit-constrained devices, we may instead generate resource states at a constant rate to match program requirements on average; while able to maintain low-footprint, it may be required to halt program executions for resource states in certain request patterns.  In this work, we adopt the latter strategy to retain the space advantage and assume the overhead, and leave more detailed optimization for future work.

\section{Evaluation}\label{sec:eval}

\subsection{Circuit level simulations}\label{ssec:eval-ler}

To estimate the performance and cost of RASCqL, we perform circuit-level simulations using Stim \cite{Gidney_2021} and decode using BP-OSD. We use a standard circuit-level noise model where initialization and measurement experience bit-flip errors with probability $p$ and one- and two-qubit gates cause depolarizing errors with probability $p$. To model the extra error introduced by movement-based automorphism gates and transversal operations, we insert an extra depolarizing error with probability $p$ on the data qubits halfway through the experiment. Figure~\ref{fig:hyps-memory} shows the logical memory performance of HGPS codes for $r \in [3, 4, 5]$. We fit this data to the expect scaling behavior $p_L = A(bp)^{(d+1)/2}$, obtaining $A=0.221 \pm 0.048$ and $b=128.34 \pm 0.20$. Note an empirical threshold $p_{th}$ is observed to be $0.78\%$. In comparison, we expect SHYPS (shown in Figure~\ref{fig:shyps-memory}) to lack a clear threshold due to the increasing effect of measurement errors as code distance increases. We therefore fit the SHYPS data to the modified scaling function $p_L = A((b+cd)p)^{(d+1)/2}$, obtaining fit parameters $A=0.1497$, $b=0.3894$, and $c=50.83$. We plot our fits for HGPS, SHYPS, Surface codes and compare them against those reported in \cite{IBMBB} in Figure~\ref{fig:lercomp}.  
\begin{figure}
    \centering
    \includegraphics[width=0.8\linewidth]{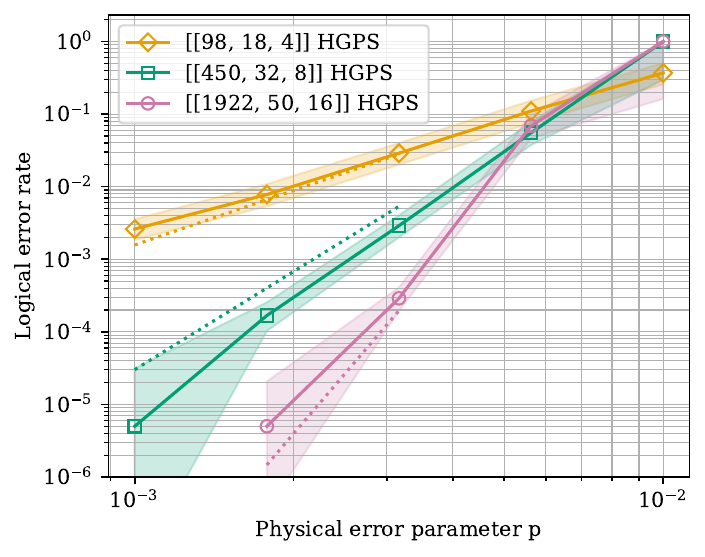}
    \caption{Logical Error Rate of HGPS under neutral atom inspire circuit level simulation. We account for additional idle and gate errors from transversal logical operations by inserting extra noise mechanisms in the middle of QEC cycles. We observe a threshold of around $0.78\%$.}
    \label{fig:hyps-memory}
\end{figure}
\begin{figure}
    \centering
    \includegraphics[width=0.48\linewidth]{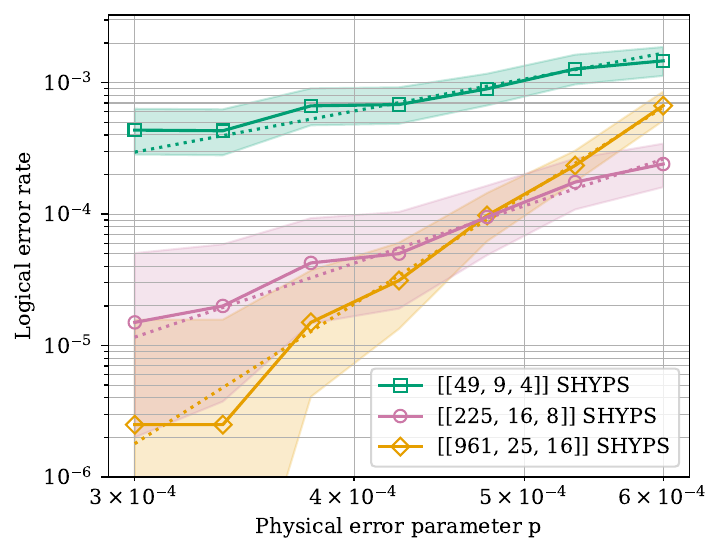}
    \includegraphics[width=0.46\linewidth]{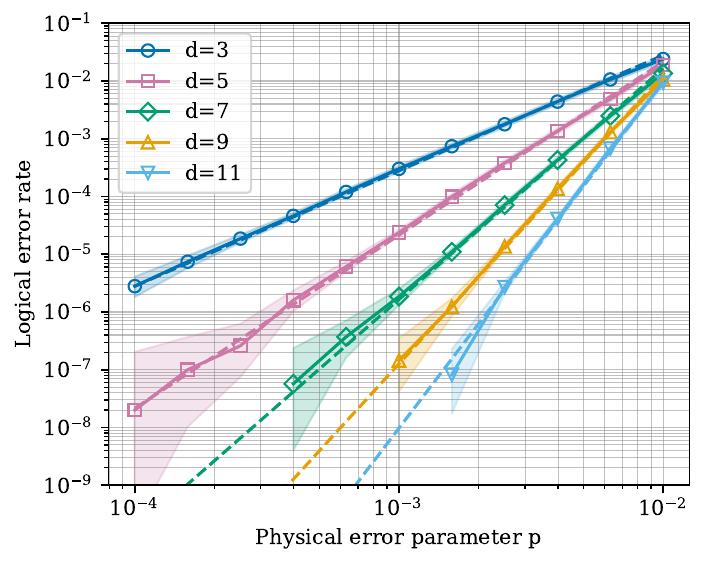}
    \caption{Logical Error Rate of SHYPS and Surface codes under neutral atom inspire circuit level simulation.}
    \label{fig:shyps-memory}
\end{figure}

\begin{figure}
    \centering
    \includegraphics[width=0.72\linewidth]{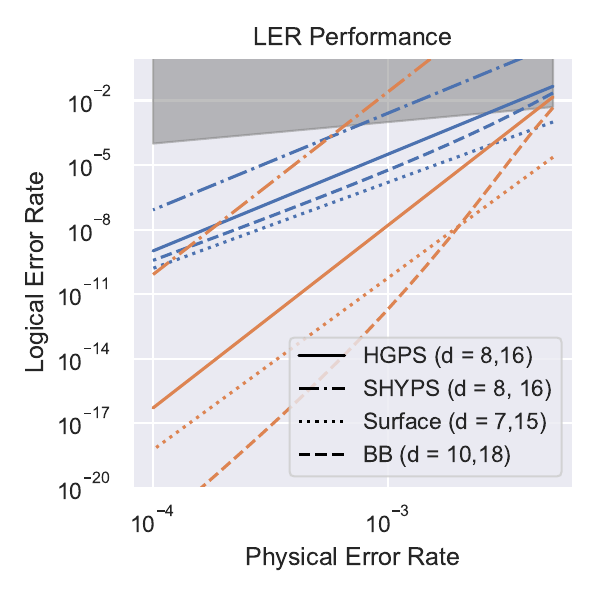}
    \includegraphics[width=\linewidth]{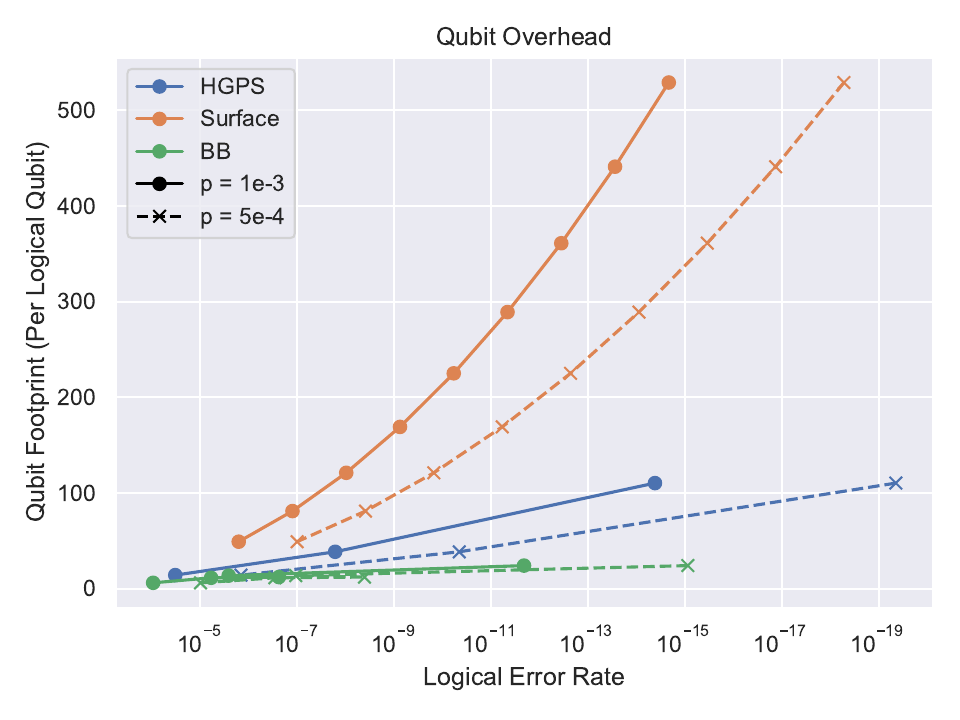}
    \caption{Top: Comparison between LER performance across HGPS (what our CQLU uses); SHYPS, a compute efficient subsystem code construction of the Simplex code \cite{ComputingQLDPC}; the rotated surface code (our baseline)\cite{Fowler2012surface}; and Bivariate Bicycle code, hardware compatible qLDPC code with bi-planar checks\cite{IBMBB}. Bottom: the projected per-logical qubit footprint for a range of target logical error rates. }
    \label{fig:lercomp}
\end{figure}

Finally, we simulate the injection of a quantum state from a $d=3$ surface code into $r=3$ and $r=4$ HGPS codes using the universal adapter construction\cite{UniversalAdaptersIBM}, decoding over 3 rounds of syndrome extraction, since we are limited by the surface code logical error rate. We simulate injecting the $\ket 0$ and $\ket +$ states and combine the error rates to get the chance of either an $X$ or $Z$ error occurring on the injected state. The results are shown in Figure~\ref{fig:hyps-injection} and plotted against the HGPS memory experiment. We find that the combined error rate of the injection gadget is well-described by $p_\text{inj} = ap^2$ with $a=3047 \pm 60$. %We expect post selection-based schemes may improve this .

\begin{figure}
    \centering
    \includegraphics[width=0.9\linewidth]{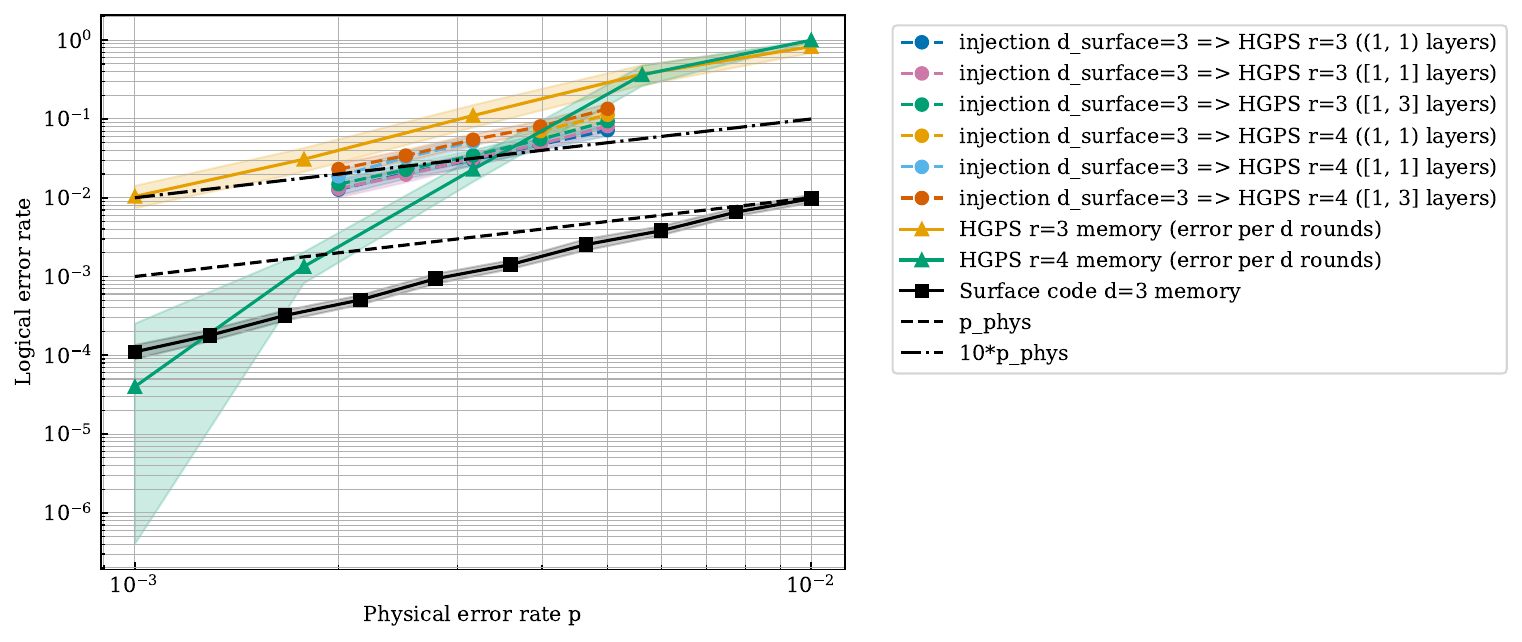}
    \caption{Logical error rate of injected magic states using universal adapters\cite{UniversalAdaptersIBM}.}
    \label{fig:hyps-injection}
\end{figure}

\subsection{Resource estimation}
\subsubsection{GHZ State}
Firstly, we compare the CQLU optimized $\ket{GHZ}$ preparation given in Figure~\ref{fig:state-prep} and with transversal surface code baseline in Figure~\ref{fig:ghz} . Benefiting from virtual operations and high rate, $GHZ$ states can be prepared using CQLU instructions with more than $7\times$ reduction in qubit footprint, which translates to similar savings in terms of space time volume.

\begin{figure}
    \centering
    \includegraphics[width=0.95\linewidth]{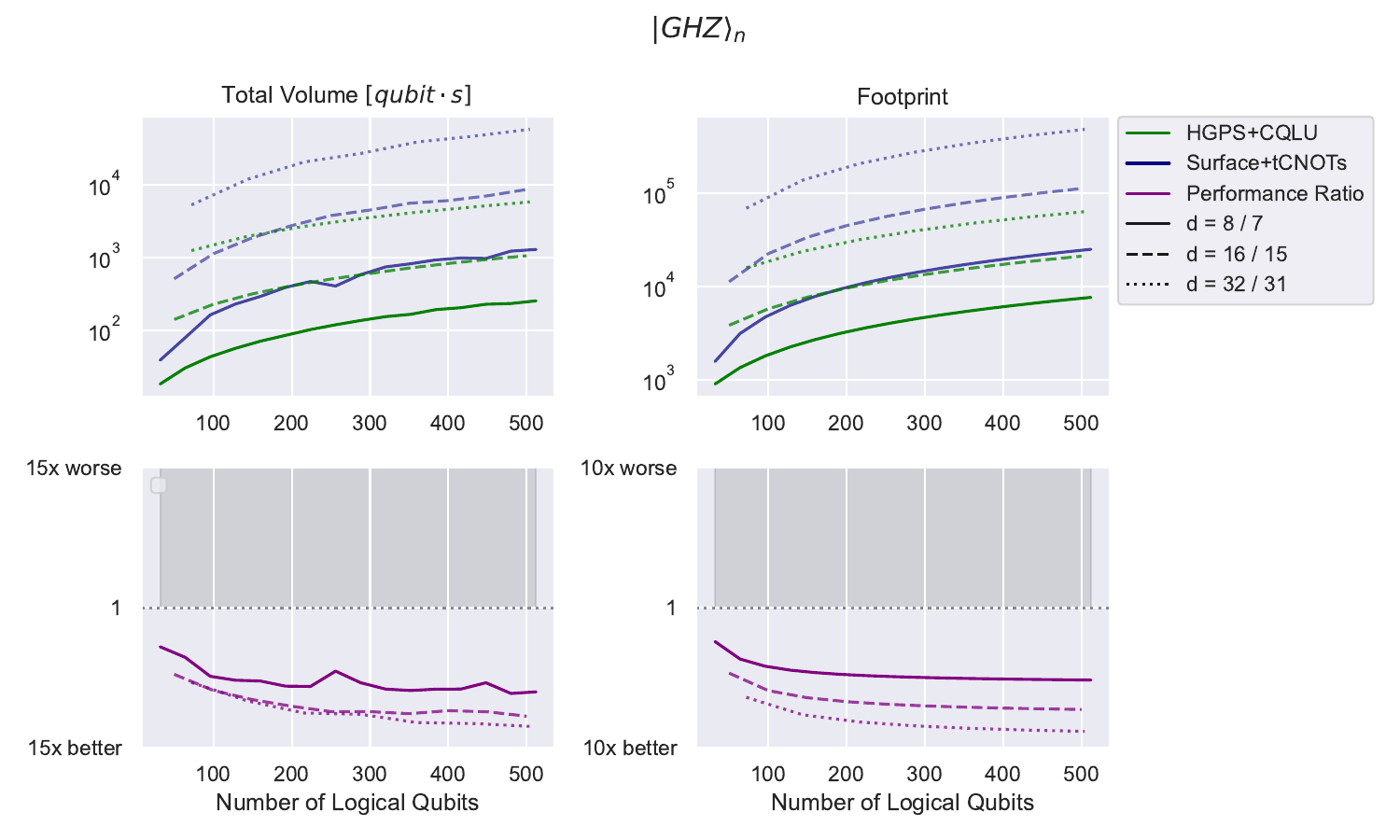}
    \caption{Space-time costs of producing a $\ket{GHZ}$.}
    \label{fig:ghz}
\end{figure}

\subsubsection{Magic State Distillation}
We use a one- or two-level magic state distillation protocol using codes given in  \cite{hastingshaah} and on HGPS with $r = 3,4,5,6$ and concatenated HGPS with surface codes, and on surface codes with distances up to $45$. We use a universal adapter to inject noisy magic states from small surface codes (see Figure~\ref{fig:hyps-injection}) into the HGPS code. The per-state resource estimations are presented in Figure~\ref{fig:msd}. Overall, we observe up to 2x footprint reduction, while the space-time volume can be 3x worse. We lose some footprint advantage due to the need for additional $\ket{i}$ states for $X/Y$ reactive measurements, and the overall volume is worse, likely due to the low fidelity magic state injection protocol, requires us to use higher distance MSD protocols or multi-level protocols earlier than needed compared to surface code baselines. 

While magic state cultivation \cite{cultivation} can be used for both RASCqL MSDs and the surface code baseline, RASCqL requires a universal adapter construction to make use of surface code cultivation, which adds substantial overhead and results in the lack of a clear benefit for RASCqL, as seen in Figure~\ref{fig:msd-inj}. We expect the MSD volume may improve further by adapting similar post-selection techniques directly to qLDPC codes, but we leave this investigation for future work.

\begin{figure}
    \centering
    \includegraphics[width=0.95\linewidth]{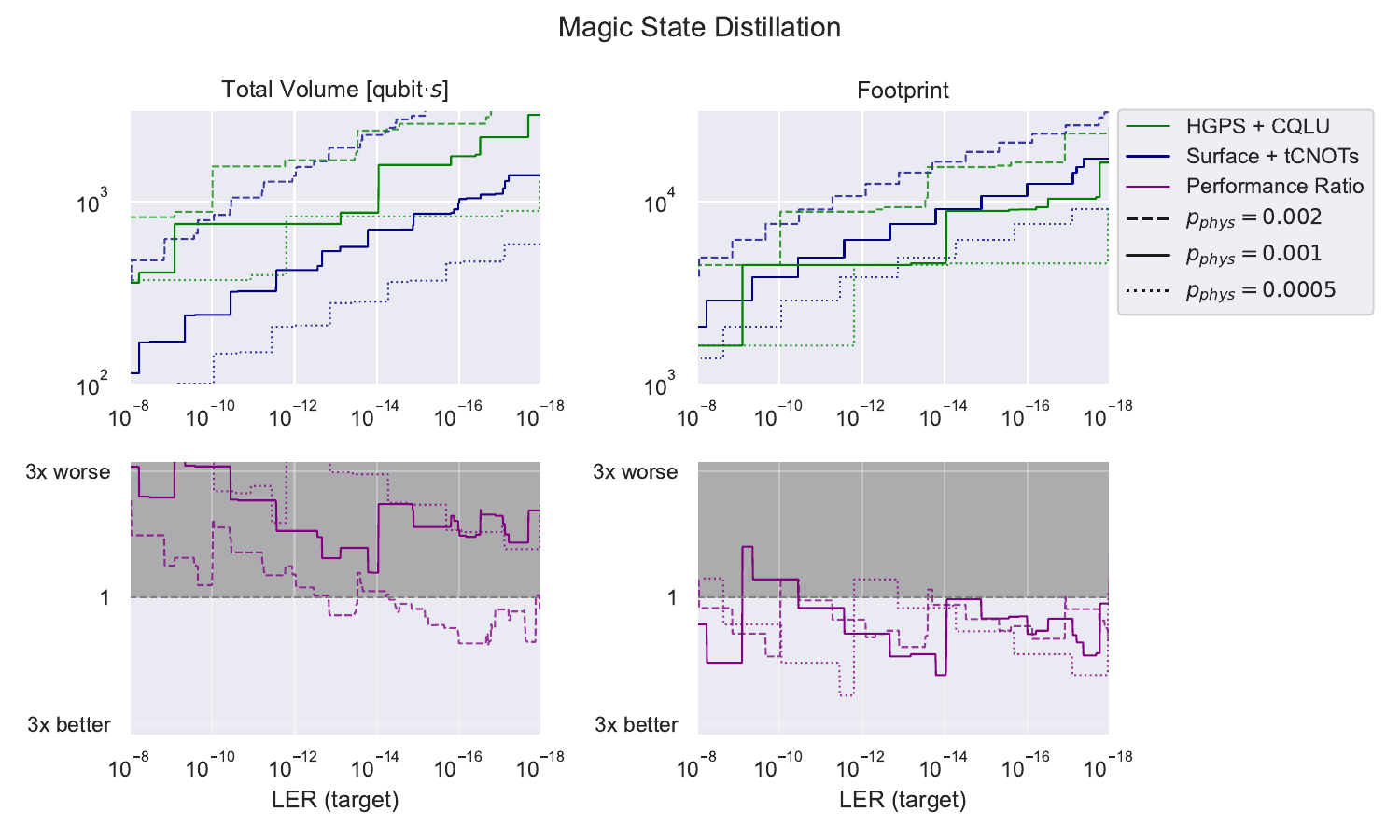}
    \caption{Space-time costs of producing high fidelity magic states in block.}
    \label{fig:msd}
\end{figure}
\begin{figure}
    \centering
    \includegraphics[width=0.95\linewidth]{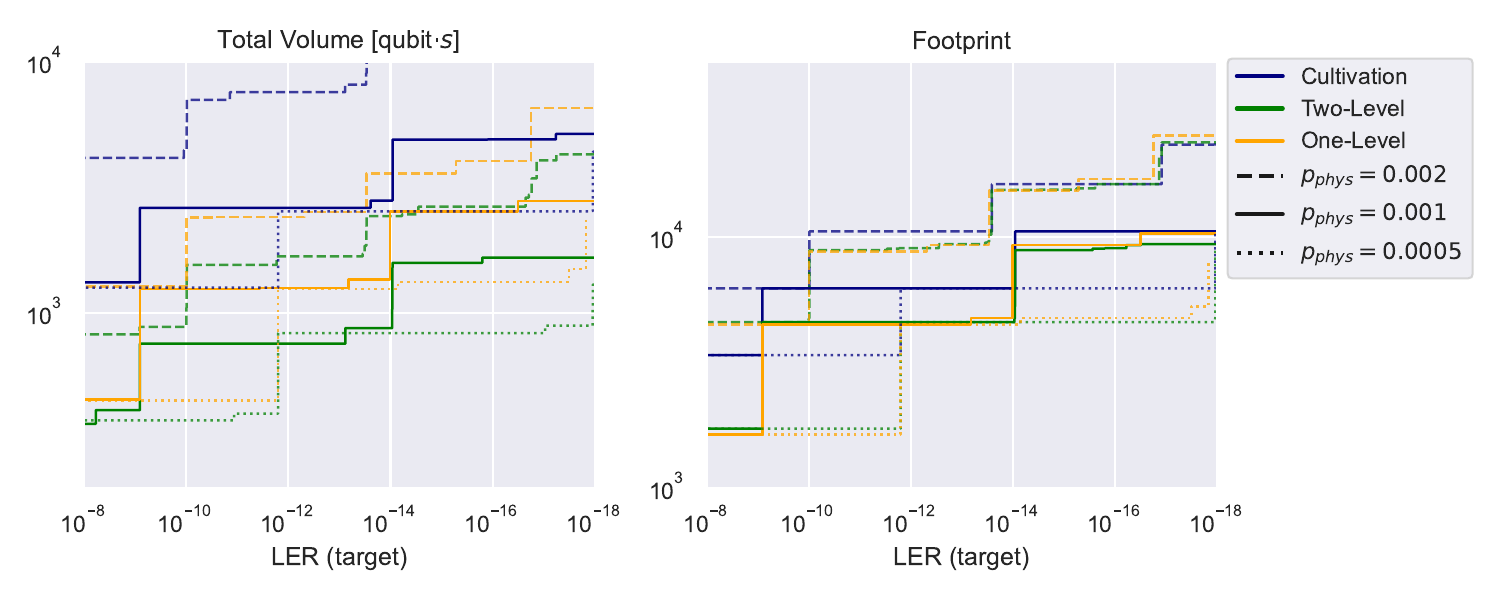}
    \caption{Comparison between concatenation schemes for MSD protocols.}
    \label{fig:msd-inj}
\end{figure}

\subsubsection{Adders}
Finally, we compare implementations of adders on CQLUs against the surface code baseline in Figure~\ref{fig:adder-cost}. While each $\mathcal{Q}_4$ can implement $8$ MAJ blocks in parallel, the optimized surface code adder require $9$ patches each (including 6 patches for $CZ$ corrections that must be kept alive for sequential reactive measurements), leading to up to $7.84\times$ difference in footprint, as reflected in the footprint comparison. For space-time volume, we evaluate the RASCqL adder with reaction-time $t_r=3.9~ms$, the same as the logical cycle time, resulting in up to $1.25\times$ reduction in Clifford volume. The Clifford volume is only $50\%$ worse even with a reaction time $10\times$ that of the surface code. Note that here we do not account for the cost to generate $\ket{T}$ states, since the scheduling and mapping of magic state production likely play a large role in the resulting resource estimation, which we leave for future work.

\begin{figure}
    \centering
    \includegraphics[width=0.95\linewidth]{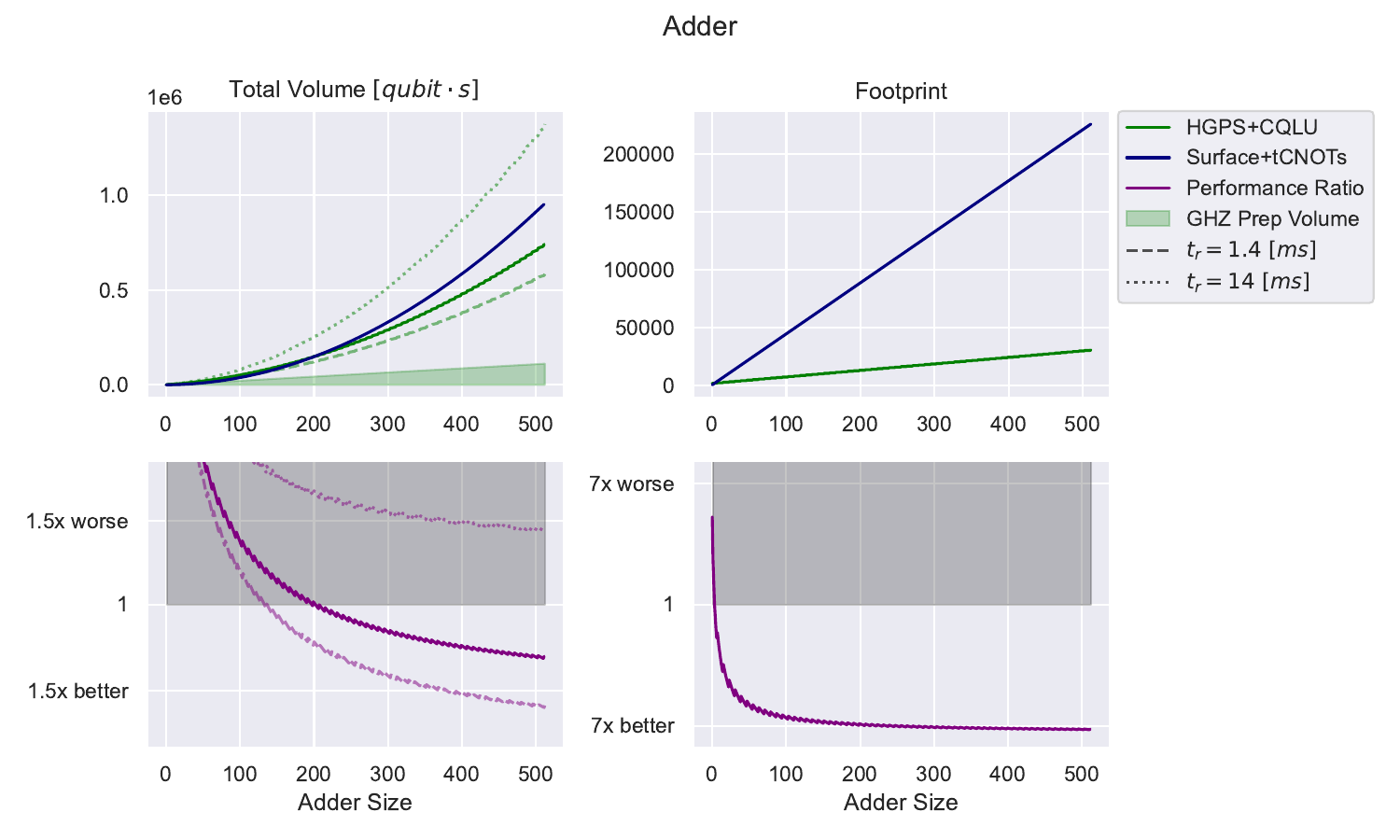}
    \caption{Space-time costs of adders using $HGPS_4$ and Surface code with $d= 7$. The dashed lines represent the resource estimates for $t_r = 1.4~[ms]$, the baseline cycle time, to $t_r = 14~[ms]$, $10\times$  baseline. Cost of magic states are not included in this analysis.}
    \label{fig:adder-cost}
\end{figure}

\section{Discussions}\label{sec:conclusion}

In this paper, we proposed an architecture based on qLDPC codes with a logical ISA that implements specific functional subroutines such as state preparation and quantum arithmetic directly in block, with the potential to outperform surface code based architectures in qubit-footprint \emph{and space-time volume}. In particular, we show an implementation of adders with reaction time limited operations that achieves up to a 7x reduction in footprint while also achieving space-time reduction in Clifford volume, and state factories with comparable costs to SoTA surface code baselines. This is achieved by carefully co-designing between circuit compilations at the functional layer, code and ISA constructions at the logical layer, and RNAA realizations at the physical layer.

While our architecture has the capability to support subroutines used in practical algorithms, there remain several important gaps to scale our architecture to end-to-end resource estimation. 

First, our reaction-time-limited model assumes efficient provisioning, storage, and routing of resource states, which may affect real end-to-end performance. A full-system scheduler that jointly optimizes footprint, latency, and state throughput is necessary to close this gap. Second, decoder latency plays a critical role. Although proposals for sub-millisecond qLDPC decoders exist \cite{muller2025improvedbeliefpropagationsufficient}, we conservatively assumed decoding time on the order of the QEC cycle ($3.9~ms$). A detailed sensitivity analysis across decoder latency, movement time, gate duration, and measurement overhead will be essential for realistic architectural comparisons.

Finally, while we focused on HGPS codes due to favorable RNAA embeddings and intrinsic symmetries, our code-modification framework applies broadly to other QECC families. A systematic search for compute-efficient qLDPC codes, co-optimizing automorphism structure, LDPC constraints, and hardware compatibility, may uncover architectures with improved flexibility across workloads and platforms.

Overall, RASCqL suggests a shift in perspective: rather than replicating a universal RISC-style ISA in qLDPC codes, or using them as quantum memory, we can instead treat them as specialized accelerators for dominant algorithmic subroutines. RASCqL establishes the question of how far this CISQ philosophy can scale toward full utility-level workloads.

\section{Acknowledgments}

We thank Yifan Hong for early discussions on Simplex codes that inspired the code modification constructions; Harry Zhou and Casey Duckering for helpful discussion on transversal architecture with reconfigurable neutral atom arrays; and Chris Kang for helpful discussions on reaction-time limited compilation. We also thank Chris Kang and Yifan Hong for their helpful feedback on the manuscript.

WY acknowledges support by the NSF GRFP Program under Grant No. 2023360346. FTC is the Chief Scientist for Quantum Software at Infleqtion. This work is funded in part by the STAQ project under award NSF Phy-232580; in part by the US Department of Energy Office of Advanced Scientific Computing Research, Accelerated Research for Quantum Computing Program; and in part by the NSF Quantum Leap Challenge Institute for Hybrid Quantum Architectures and Networks (NSF Award 2016136), in part by the NSF National Virtual Quantum Laboratory program, in part based upon work supported by the U.S. Department of Energy, Office of Science, National Quantum Information Science Research Centers, and in part by the Army Research Office under Grant Number W911NF-23-1-0077. The views and conclusions contained in this document are those of the authors and should not be interpreted as representing the official policies, either expressed or implied, of the U.S. Government, or the National Science Foundation. The U.S. Government is authorized to reproduce and distribute reprints for Government purposes notwithstanding any copyright notation herein.

%%%%%%%%% -- BIB STYLE AND FILE -- %%%%%%%%
\bibliographystyle{unsrt}
\bibliography{refs}
%%%%%%%%%%%%%%%%%%%%%%%%%%%%%%%%%%%%

\clearpage
\appendix\label{appx:proofs}

\section{Preliminary}
In this section, we write down the notations used for quantum codes and logical operations, as well as definition for the Simplex code and its quantum product variants. We also survey some useful facts and propositions from prior work as they may aid understanding of main results.

\subsection{General Notations}
\begin{definition}[Binary vector space]
    Let $\mathbb{F}_2^n$ denote an $n$ dimensional binary vector space. Binary matrices $M\in \mathbb{F}_2^{r\times c}$, with $r$ rows and $c$ columns, are used to represent linear operators or a subspace defined by its rows.
    Furthermore, let $\rows{M} = \{w_1,..,w_r\}$ and $\cols{M} = \{v_1,...,v_n\}$ refer to row- and column- vectors of $M$, we use $\rs{M} = \rs{\rows{M}}$, and $\ns{M} = \{x: Mx = 0\} \subset \mathbb F_2^n$ to denote the rowspace and nullspace of $M$, which we may refer to as $M, M^\perp$ with a slight abuse of notations. 
\end{definition}

\begin{definition}[Hilbert space]
    The state vector of a qubit can be described by a unit vector on the complex plane $\mathbb C^2$. The state space of $n$ qubits is described by the $2^n$ dimensional Hilbert space $\mathbb{H}^n = (\mathbb C^2)^{\otimes n}$. We refer to $\mathbb H^n$ as a $n$-qubit Hilbert space, and a $2^k$ dimensional subspace as a $k$-qubit subspace.
\end{definition}

We can construct new matrices by stacking $M' = \begin{bmatrix}M_1&M_2\\M_3&M_4\end{bmatrix}$. Following conventions in literature, a delineator may be added when stacking two matrices $M' = [M_1|M_2]$. We use $\otimes$ to denote Kronecker product, and $\oplus$ to denote direct sum (as opposed to the Kronecker sum): $M_1\oplus M_2 = \begin{bmatrix}M_1&\\&M_2\end{bmatrix}.$ Finally, we say $M$ is $w$-bounded if $ \wt{c}\leq w\forall c\in \rows{M}$.

\begin{definition}[Binary Linear Code] 
A $[n,k,d]$ binary linear code $\mathcal{C}$ is a $k$-dimensional subspace of $\mathbb{F}_2^{n}$ s.t. $\min_{c\in \mathcal{C}} \wt{c} \geq d$. $\mathcal{C}$ is specified by:
    \begin{enumerate}
    \item a generator matrix $G\in\mathbb{F}_2^{n_k\times n}$ where $\mathcal{C} = \rs{G}$, the row-space of $G$. Rows of $G$ can be interpreted as a basis for the codewords in $\mathcal{C}$.
    \item a check matrix $H\in\mathbb{F}_2^{n_c\times n}$ where $\mathcal{C} = \ns{H}$. Rows of $H$ can be interpreted as parity checks on the codewords.
    \end{enumerate}
    We use $\mathcal C(G,H)$ to denote a classical code with generator matrix $G$ and check matrix $H$.
\end{definition}

As constraining to each independent parity check being $0$ removes a degree of freedom from the vector space, we have $\dim(G) = n-\dim(H), \dim(H) = n-k$. Since $\rows{G}$ spans the null space of $H$, also have that $HG^T = 0$, and $\rs G\cap \rs H = e_0$. Finally, we note that there may exist many choices of $G$ and $H$ with $n_k\geq k$ and $n_c\geq n-k$. Choices of $H,G$ satisfying specific criterion will become a motif in following studies.

\begin{definition}[CSS Code]
A $[[n,k,d]]$ quantum CSS code $\mathcal{Q}$ is a $k$-qubit subspace of $\mathbb H^n$ such that any two distinct codewords differ by a Pauli operator of weight at least $d$.
% That is, whenever $\kpsi{1}, \kpsi{2}\in \mathcal{Q}$, $\kpsi{1} = P\kpsi{2}$ only if $P$ is a Pauli operator with weight at least $d$. 
$\mathcal{Q}$ contains states stabilized by a set of Pauli operators $S  = S_X\sqcup S_Z$; that is, $\mathcal{Q} = \{\kpsi{} | P\kpsi{} = \kpsi{}\forall P\in S\}$.
%\footnote{We only consider CSS codes here where the stablizers may contain only $X$ or only $Z$.} 
More compactly, $\mathcal{Q}$ can be specified by:
    \begin{enumerate}
        \item two classical binary linear codes $\mathcal{C}_X, \mathcal{C}_Z\subset \mathbb F_2^n$ with $\mathcal{C}_Z^\perp \in \mathcal{C}_X$, where $\mathcal{C}_X,\mathcal{C}_Z$ defines the $X,Z$ observables of $\mathcal{Q}$.
        \item two check matrices $H_X\in \mathbb F_2^{n_X\times n},  H_Z\in \mathbb F_2^{n_Z\times n}$ with $H_XH_Z^T = 0$, where $H_X, H_Z$ defines the $S_X$, $S_Z$ stabilizers. 
    \end{enumerate}
\end{definition}

%A convenient way to specify both quantum and classical codes in one definition is to use chain complexes:
%\todo{This is useful if we want to write homomorphic CNOT proofs (e.g. transversal teleportation between SHYPS and HGPS). Remove following definitions if it doesn't end up making its way in the manuscript.}
%\begin{definition}[Chain complexes]
%   A length-$l$ chain complex $\mathcal{V}$:
%    \begin{equation*}
%        V_l \xrightarrow{\partial_l}V_{l-1} \xrightarrow{\partial_{l-1}}\cdots \xrightarrow{\partial_1}V_0
%    \end{equation*}
%    is a sequence of vector spaces $\{V_i\}_{i=0}^l$ and boundary maps $\{\partial_i :V_i\to V_{i-1}\}_{i=1}^l$ with the condition $\partial_{i-1}\partial_{i} = 0$.
%    \begin{itemize}
%        \item Given a check matrix $H$, classical code $\mathcal{C}$ is a length-1 chain complex $$C_1\xrightarrow{H}C_0,$$ where $C_0= [n_c], C_1=[n]$ corresponds to the parity checks and data bits.
%        \item Given check matrices $H_X, H_Z$, quantum code $\mathcal{Q}$ is a length-2 chain complex $$Q_Z\xrightarrow{H_Z^T}Q_d\xrightarrow{H_X}Q_X,$$ where $Q_X = [n_X], Q_d = [n], Q_Z = [n_Z]$ corresponds to $X$ stabilizer checks, data qubits, and $Z$ stabilizer checks.
%    \end{itemize}
%\end{definition}

\subsection{Hypergraph Product Codes}
A simple way to construct quantum codes is by taking tensor-products of two classical codes with the identity. We obtain a quantum CSS code as a result which we call a hypergraph product code, or HGP for short:
\begin{definition}[Hypergraph Product (HGP) Code]
Given two classical codes $\{\mathcal{C}_i\}_{i = 0,1}$ with $H_i$, parameters $[n_i,k_i,d_i]$, and generators $G_i$ as seeds, and let $\mathcal{C}_i'$ be the code obtained with checks $H_i^T$, parameters $[m_i,k'_i,d'_i]$, and generators $G'_i$, we can construct a quantum hypergraph product code $\mathcal{Q}_{HGP}$ with stabilizer checks:
\begin{align}
    H_X &= [H_1\otimes I_{n_2}|I_{m_1}\otimes H_2^T],\\
    H_Z &= [I_{n_1}\otimes H_2| H_1^T\otimes I_{m_2}].
\end{align}
Furthermore, $\mathcal{Q}_{HGP}$ has parameters
\begin{align}
    n &= n_1n_2+m_1m_2\\
    k &= k_1k_2 + k'_1k'_2\\
    d &= min(d_1,d_2,d'_1,d'_2),
\end{align}
    and canonical basis of logical operators \cite{partition}:
\begin{align}
\\
    G_X &= [P(H_1)\otimes G_2 | 0^{m_1m_2\times m_1n_2}] \oplus [0^{n_1m_2\times n_1n_2}| G'_1\otimes P(H^T_2) ]\\
    G_Z &= [G_1\otimes P(H_2) | 0^{m_1m_2\times m_1n_2}] \oplus [0^{n_1m_2\times n_1n_2}| P(H^T_1)\otimes G'_2 ],
\end{align} where $G_i, G'_i$ is in the reduced-row-echelon form, and $P(H)$ is constructed using the pivots of $G$ such that $PG^T = I$.
\end{definition}

\subsection{Logical Operations}
In the broadest definition, a logical operation refers a transformation of the data (qu)bits that preserves the code. Here we are interested in logical operations that are both non-trivial and distance preserving. If we assume qubit permutations are noiseless (which is especially the case when a virtual relabeling of qubit position suffices), the resulting logical operations are trivially distance preserving. We call these operations code automorphisms:
\begin{definition}[Automorphisms in Classical Codes]\label{fact:auto}
    Given a code $\mathcal{C}(G,H)$, $\sigma\in\mathbb P_{k}$ is a code automorphism if $\exists g\in\gln{k}$ s.t. 
    \begin{equation}
        G\sigma = gG.
    \end{equation}
    $g$, referred to as the automorphism gate, describe the logical action of the code automorphism under the basis defined by $G$. 
\end{definition}

There are some useful facts we can show about code automorphisms. These facts follow from several propositions in \cite{berthusen2025automorphismgadgetshomologicalproduct} but are also to establish intuition.  We survey some of these facts here in our notations.
\begin{fact}[Automorphism group]
    Given a code $\mathcal{C}(G,H)$ and automorphisms $\Sigma$ that preserves $\mathcal C$ (i.e. right action by $\sigma\in\Sigma$ is an isomorphism from $\mathcal{C}$ onto itself); $\Sigma$ is a group under multiplication. We say $\mathcal{G}_\Sigma$ is the automorphism group of $\mathcal{C}$, or $\aut{\mathcal{C}} = \mathcal{G}_\Sigma$, if 
    \begin{equation}
        \mathcal{G}_\Sigma \cong \{\sigma: \exists g\in  \gln{k}\ s.t.\ G\sigma = gG \}.
    \end{equation}
\end{fact}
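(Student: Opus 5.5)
The plan is to work directly from the definition of a code automorphism (Definition~\ref{fact:auto}) and reduce everything to linear algebra over $\mathbb F_2$. I will assume $G$ has full row rank $k$, which we may do by passing to a basis of $\rs{G}$. The first step is to show that the two descriptions of $\Sigma$ agree: a coordinate permutation $\sigma$ maps $\mathcal C$ onto itself if and only if $G\sigma = gG$ for some $g\in\gln{k}$.

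For the forward direction, suppose $\rs{G\sigma}\subseteq \rs{G}$. Then each row of $G\sigma$ is an $\mathbb F_2$-combination of rows of $G$, so $G\sigma = gG$ for some $k\times k$ matrix $g$. Because $\sigma$ is invertible, $G\sigma$ also has rank $k$, which forces $g$ to be invertible. For the converse, $G\sigma = gG$ with $g$ invertible gives $\rs{G\sigma}=\rs{gG}=\rs{G}$, so right multiplication by $\sigma$ is a bijection of $\mathcal C$ onto itself.

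Next I will check the group axioms for $\Sigma$ inside the permutation group on $n$ coordinates.
\begin{enumerate}
\item The identity lies in $\Sigma$, with $g=I$.
\item Closure: if $G\sigma_1 = g_1G$ and $G\sigma_2 = g_2G$, then
\begin{equation*}
G\sigma_1\sigma_2 = g_1G\sigma_2 = g_1g_2G,
\end{equation*}
so $\sigma_1\sigma_2\in\Sigma$ with automorphism gate $g_1g_2$.
\item Inverses: from $G\sigma = gG$ we get $G = gG\sigma^{-1}$, hence $G\sigma^{-1} = g^{-1}G$.
\end{enumerate}
Since the ambient permutation group is finite, closure alone would already suffice, but the explicit inverse formula is useful later.

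Finally, I will justify the identification of $\mathcal G_\Sigma$ with this set by considering the map $\phi:\sigma\mapsto g$. It is well defined because $G$ has full row rank: if $gG = g'G$, then $(g-g')G=0$, which forces $g=g'$. The closure computation shows that $\phi(\sigma_1\sigma_2)=\phi(\sigma_1)\phi(\sigma_2)$, so $\phi$ is a homomorphism $\Sigma\to\gln{k}$.

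The only delicate point is precisely this well-definedness. If $G$ were allowed redundant rows ($n_k>k$), then $g$ would not be unique, and one would have to first reduce $G$ to a basis. If the paper intends $\mathcal G_\Sigma$ as the image of $\phi$ (the logical gates), its kernel consists of permutations that act trivially on $\mathcal C$. In that case I would state the isomorphism as $\Sigma/\ker\phi\cong\phi(\Sigma)$ rather than claiming $\phi$ is injective.
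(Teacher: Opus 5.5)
Your proposal is correct and follows essentially the same route as the paper. Both arguments check the subgroup axioms inside the permutation group: closure gives the gate $g_1g_2$, and inverses come from $G\sigma^{-1}=g^{-1}G$, which the paper writes as $\sigma^T$ and reaches through a roundabout manipulation with $\sigma^2$, while your one-line derivation is more direct.

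Your extra steps are also sound. These are the equivalence between ``$\sigma$ preserves $\mathcal C$'' and ``$G\sigma=gG$ with $g$ invertible,'' and the well-definedness and homomorphism property of $\sigma\mapsto g$ together with your $\Sigma/\ker\phi$ caveat. The paper takes the first for granted and defers the second to the subsequent Automorphism Gate fact, where injectivity is obtained by assuming the columns of $G$ are distinct.
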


\begin{proof}
    Clearly the identity preserves $\mathcal{C}$, composing two isomorphisms lead to another isomorphism $\mathcal{C}$, and compositions are associative as matrix multiplication is associative. For any $g, \sigma: G\sigma =gG$, we have that 
    \begin{equation}
        G\sigma =gG \implies g^{-1}gG\sigma = g^{-1}G\sigma^2 \implies G\sigma^T = g^{-1}G,
    \end{equation}
    hence $\sigma^T$ preserves $\mathcal{C}$, with $\sigma\sigma^T = I$.
\end{proof}

\begin{fact}[Automorphism Gate]
    Given a code $\mathcal{C}(G,H)$. Let $\mathcal{L}_G(\mathcal{C})=\{g:\exists \sigma \st gG = G\sigma\}$ be the set of automorphism gates. Then, $\mathcal{L}_G(\mathcal{C})$ is a group, $\mathcal{L}_G(\mathcal{C})\cong \aut{\mathcal C}$ if columns of $G$ is unique.
\end{fact}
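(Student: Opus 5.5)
We need to prove: $\mathcal{L}_G(\mathcal{C})=\{g:\exists\sigma,\ gG=G\sigma\}$ is a group, and it is isomorphic to $\aut{\mathcal C}$ when the columns of $G$ are distinct. The plan is to verify the group axioms directly from the defining equation, exactly as in the preceding Fact on the automorphism group. Then we build the natural map $\phi:\aut{\mathcal C}\to\mathcal{L}_G(\mathcal C)$, $\sigma\mapsto g_\sigma$, and show it is a bijective homomorphism. Throughout we use that $G$ has full row rank $k$ (its rows are a basis of $\mathcal C$). Consequently, for any $\sigma$ preserving $\mathcal C$, the $g$ with $gG=G\sigma$ exists and is unique, since $G$ has a right inverse.

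\textbf{Group structure.} The identity $g=I$ works with $\sigma=I$. For closure, let $g_1G=G\sigma_1$ and $g_2G=G\sigma_2$. Then
\[
g_1g_2G=g_1G\sigma_2=G\sigma_1\sigma_2,
\]
so $g_1g_2$ is an automorphism gate witnessed by $\sigma_1\sigma_2$. For inverses, first note that $g$ is invertible: $G\sigma$ has rank $k$, so $gG$ has rank $k$, which forces $g\in\gln{k}$. Multiplying $gG=G\sigma$ on the left by $g^{-1}$ and on the right by $\sigma^{-1}=\sigma^T$ gives $G\sigma^{T}=g^{-1}G$. Associativity is inherited from matrix multiplication.

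\textbf{The homomorphism.} Define $\phi(\sigma)=g_\sigma$, the unique matrix with $g_\sigma G=G\sigma$; it is well defined by the full-rank remark. The closure computation shows $g_{\sigma_1}g_{\sigma_2}G=G\sigma_1\sigma_2$, so by uniqueness $\phi(\sigma_1\sigma_2)=\phi(\sigma_1)\phi(\sigma_2)$. This is consistent with the right-action convention used in the previous Fact. Surjectivity holds by the definition of $\mathcal{L}_G(\mathcal C)$: every $g$ in it comes with some $\sigma$, and that $\sigma$ preserves $\mathcal C$ because $\rs{G\sigma}=\rs{gG}=\rs G$.

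\textbf{Injectivity, the main step.} This is where distinct columns are needed. Suppose $\phi(\sigma)=I$. Then $G\sigma=G$, so $\sigma$ permutes the columns of $G$ while fixing the matrix. Write $G\sigma$ column-wise: its $j$-th column is column $\pi^{-1}(j)$ of $G$. Equality with $G$ forces $v_{\pi^{-1}(j)}=v_j$ for all $j$. Since the columns $v_j$ are pairwise distinct, $\pi=\mathrm{id}$, so $\sigma=I$ and the kernel is trivial. Hence $\phi$ is an isomorphism.

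The only care needed is to be precise about how the permutation matrix acts on column indices. It is also worth noting that without distinct columns the kernel is the group of permutations among equal columns, so $\mathcal{L}_G(\mathcal C)$ is in general only a quotient of $\aut{\mathcal C}$.
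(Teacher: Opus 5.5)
Your proof is correct and follows the same route as the paper: define $\phi_G(\sigma)=g$ via $gG=G\sigma$, check that it is a homomorphism, and use distinctness of the columns of $G$ to get a trivial kernel. You also supply details the paper only asserts, namely the group axioms, the well-definedness of $\phi_G$ from full row rank, and surjectivity. In addition, your identity $g_1g_2G=G\sigma_1\sigma_2$ correctly states the step the paper writes with the factors in the wrong places.
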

\begin{proof}
    It's simple to show that $\mathcal{L}_G(\mathcal{C})$ is a group.

    We can show that the mapping $\phi_G: \aut{\mathcal{C}}\to \mathcal{L}_G$ s.t. $\phi_G(\sigma) = g\iff gG = G\sigma$ is an group homomorphism. Indeed, $\sigma_1G = g_1G, \sigma_2G= g_2G\implies\sigma_1\sigma_2G = g_1g_2G \implies \phi_G(\sigma_1\sigma_2) = g_1g_2$. 
    
    When columns of $G$ are unique, $\ker{\phi_G} = I$ since any non-trivial permutation lead to non-trivial logical actions; $\phi_G$ is in fact an isomorphism.
\end{proof}

Unless otherwise specified, we will assume columns of $G$ are unique, in which case the notation $\aut{\cdot}$ is used synonymously to refer to the automorphism group or the group of resulting logical actions. In cases where non-unique columns are important for distance guarantees, we can keep track of how many times a particular column is duplicated before removing them, perform the code modification, and then duplicate the same column after modification. 

\begin{fact}[Automorphism of Checks]
    Given $\mathcal{C}(G,H)$ and $\sigma\in \aut{\mathcal{C}}$, we also have that
    \begin{equation}
       \exists h\in\gln{k} \st H\sigma = hH.
    \end{equation}  
     Furthermore, we say $\sigma$ is a matrix automorphism if $h\in\mathbb P_k$.
\end{fact}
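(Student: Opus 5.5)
We prove the statement by applying the automorphism $\sigma$ to the identity $HG^T=0$ and using the fact that $\sigma$ preserves $\mathcal C$. Here $\sigma$ is an $n\times n$ column permutation matrix (the subscript $k$ in Definition~\ref{fact:auto} should be read as acting on the $n$ coordinates). We also read the statement's $h\in\gln{k}$ as $h\in\mathbb{GL}_{n_c}(\mathbb F_2)$, or more precisely as an $n_c\times n_c$ matrix that preserves the row space of $H$. The reason is that $H$ has $n_c\geq n-k$ rows, so a square invertible $h$ acting on $H$ from the left cannot be $k\times k$.

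\textbf{Step 1: $H\sigma$ still checks $\mathcal C$.} From the automorphism condition $G\sigma=gG$ with $g$ invertible, we get
\[
(H\sigma)(G)^T \;=\; H\sigma\,\bigl(g^{-1}G\sigma\bigr)^T \;=\; H\sigma\sigma^T G^T (g^{-1})^T \;=\; HG^T (g^{-1})^T \;=\; 0 .
\]
This uses $\sigma\sigma^T=I$, as in the proof of the automorphism-group fact. Hence $\rs{H\sigma}\subseteq \rs{G}^{\perp}=\rs{H}$. Since $\sigma$ is invertible, $H\sigma$ has the same rank $n-k$ as $H$, so $\rs{H\sigma}=\rs{H}$.

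\textbf{Step 2: produce $h$.} Each row of $H\sigma$ lies in $\rs H$, so it is a linear combination of rows of $H$. Collecting these coefficients gives an $n_c\times n_c$ matrix $h$ with $H\sigma=hH$. If $H$ has full row rank ($n_c=n-k$), then $h$ is unique. Applying the same argument to $\sigma^{-1}=\sigma^T$, which by the automorphism-group fact is also an automorphism, gives $h'$ with $H\sigma^T=h'H$. Then $hh'H=H$, so $hh'=I$ and $h$ is invertible.

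If $H$ has redundant rows, $h$ is not unique. In that case we restrict to a full-rank set of rows and extend $h$ invertibly on the redundant part, which is possible because $H\sigma$ and $H$ have the same row dependencies: right multiplication by the invertible $\sigma$ preserves linear relations among rows. Concretely, if $H=\begin{bmatrix}H_0\\ AH_0\end{bmatrix}$ with $H_0$ of full row rank and $H_0\sigma=h_0H_0$, then $h=h_0\oplus (Ah_0A')$ for any left inverse pattern $A'$ with $A'$-compatible relation is one choice. This row-dependency bookkeeping is the only step needing care, and it is the main obstacle, modest as it is.

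\textbf{Step 3: matrix automorphisms.} The last sentence of the statement is a definition rather than a claim: $\sigma$ is called a matrix automorphism exactly when some admissible $h$ can be chosen to be a permutation matrix. Such an $h$ is automatically invertible, which is consistent with Step 2. This agrees with the condition $H\sigma=\rho H$ from Section~\ref{ssec:methods-code}, so no further proof is needed beyond noting that the definition is well posed.
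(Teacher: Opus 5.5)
Your core argument is correct and matches the paper's. The paper's proof observes that $\sigma$ maps $\mathcal C$ into itself, so $H\sigma^T$ annihilates $\mathcal C$ and $\rs{H\sigma^T}=\rs{H}$, and it stops there without constructing $h$ or checking invertibility. Your Step 1 makes the same observation for $\sigma$ directly via $G\sigma=gG$. Reading $h$ as an $n_c\times n_c$ matrix, and $\sigma$ as $n\times n$, correctly fixes the statement's $\gln{k}$ and $\mathbb{P}_k$. Your full-rank argument ($hh'H=H$ forces $hh'=I$) is sound.

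The one step that fails as written is the redundant-row construction. Write $H=\begin{bmatrix}H_0\\ AH_0\end{bmatrix}$. A block-diagonal choice $h=h_0\oplus D$ satisfies $hH=H\sigma$ only if $DA=Ah_0$, that is, only if $\rs{A}$ is stable under right multiplication by $h_0$. This is false in general. With a single redundant row $aH_0$, $D$ is a scalar and you would need $ah_0=a$.

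Your specific choice $D=Ah_0A'$ has further problems. It needs a left inverse $A'$ of $A$, which does not exist when there are fewer than $n-k$ redundant rows. When there are more than $n-k$, $D$ is singular even if $A'$ exists.

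You need a nonzero off-diagonal block, for instance
\[
h=\begin{bmatrix}h_0&0\\ A(h_0+I)&I\end{bmatrix}.
\]
This gives $hH=\begin{bmatrix}h_0H_0\\ Ah_0H_0\end{bmatrix}=H\sigma$. It is invertible because it is block lower-triangular with invertible diagonal blocks.

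Simpler still, drop the case split. $H$ and $H\sigma$ are both $n_c\times n$ with the same row space, so they have the same reduced row-echelon form $R=EH=E'H\sigma$ with $E,E'$ invertible. Then $h=E'^{-1}E$ works, so Step 1 already completes the proof.
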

This follows from the fact that $\mathcal{C} = \ker H$, and $\sigma(c) \in \mathcal{C}\forall c\in \mathcal{C} \implies H(\sigma(c))^T =  (H\sigma^T) c^T = 0 \forall c\in \mathcal{C} \implies \rs H = \rs {H\sigma^T}$.
\subsection{Simplex Codes and Quantum Variants}
A classical code with expansive automorphism group is the simplex code:
\begin{definition}[Simplex Codes]\label{def:simplex}
    An $r$-dimensional binary simplex code $\mathcal{S}_r$ is a $[2^r-1, r, 2^{r-1}]$ classical linear code with generator matrix $G_r$ s.t.
    \begin{equation}
        \cols{G_r} = \{b_1,...,b_{2^r-1}\}
    \end{equation}
    where elements of $b_i$ correspond to digits of $i$ in binary. Unless otherwise specified, let the check matrix be given as cyclic matrices in \ref{tab:simplex}. 
\end{definition}
\begin{table}
    \centering
    \begin{tabular}{c|cccc}
         $r$&  3&  4&  5&  6\\\hline
         $H_r$&  $I+x+x^3$&  $I+x+x^4$&  $I+x^2+x^5$&  $I+x+x^6$\\
    \end{tabular}
    \caption{Check matrices for the simplex code, where $x$ denote the cyclic permutation matrix of dimension $2^r-1$. }
    \label{tab:simplex}
\end{table}

\begin{definition}[Hypergraph Product of Simplex (HGPS) Codes]\label{def:hgps}
    Let $\mathcal S_r, H_r$ be defined as \ref{def:simplex}. Define $\mathcal{Q}_{HGPS, r}$ as the hypergraph product code with seeds $H, H^T$:
    \begin{align}
        H_X & = [H\otimes I | I\otimes H^T]\\
        H_Z & = [I\otimes H | H^T\otimes I].
    \end{align}
    $\mathcal{Q}_{HGPS, r}$ has parameters $[[2(2^r-1)^2, 2r^2, 2^{r-1}]]$ and generators 
    \begin{align}
        G_X &= [P\otimes G]\oplus [G\otimes P],\\
        G_Z &= [G\otimes P] \oplus  [P\otimes G].\\
    \end{align}
    Note the logical operators naturally partitions into two sets with disjoint support on the left and right sector. 
\end{definition}

\subsection{Automorphism group of the Simplex codes}
It is well known that the automorphism group of the Simplex code contains all linear transformations.  This in fact saturates the upper-bound on the possible automorphism group size in any codes with the same number of logical bits, as a result of Corr 3.3 in ~\cite{berthusen2025automorphismgadgetshomologicalproduct}. 
\begin{fact}[Simplex Automorphism] 
Let $\mathcal{S}_r$ be defined as in Def~\ref{def:simplex}. Then, $\mathcal{L}_G(\mathcal{S}_r) = \gln{r}$.
\end{fact}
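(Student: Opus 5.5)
We need to show $\mathcal{L}_G(\mathcal{S}_r)=\gln{r}$. The natural route uses the fact that the columns of $G_r$ are exactly the $2^r-1$ distinct nonzero vectors of $\mathbb{F}_2^r$, by Definition~\ref{def:simplex}. The containment $\mathcal{L}_G(\mathcal{S}_r)\subseteq\gln{r}$ is immediate from the definition of automorphism gates. So the work is the reverse containment: for every $g\in\gln{r}$ we must produce a column permutation $\sigma$ with $gG_r=G_r\sigma$.

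Fix $g\in\gln{r}$. The columns of $gG_r$ are $gb_1,\dots,gb_{2^r-1}$. Since $g$ is invertible it maps nonzero vectors to nonzero vectors and is injective, so $b\mapsto gb$ is a bijection of $\mathbb{F}_2^r\setminus\{0\}$ onto itself. Hence the multiset of columns of $gG_r$ equals the set of columns of $G_r$, each nonzero vector appearing exactly once.

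Define $\sigma$ as the permutation matrix sending index $i$ to the unique index $j$ with $b_j=gb_i$. It must be oriented so that right multiplication by $\sigma$ places column $b_j$ of $G_r$ in position $i$. Then $G_r\sigma=gG_r$ column by column, so $g\in\mathcal{L}_G(\mathcal{S}_r)$ and the two groups are equal.

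As a remark, since the columns of $G_r$ are distinct, the Automorphism Gate fact also gives $\aut{\mathcal{S}_r}\cong\gln{r}$. This identifies the permutation $\sigma$ uniquely.

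The only point needing care is the bookkeeping of the permutation convention, right action on columns versus $\sigma$ versus $\sigma^T$. There is no genuine obstacle beyond checking that the chosen orientation matches the convention $G\sigma=gG$ used in the definition of code automorphisms.
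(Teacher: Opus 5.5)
Your proof is correct. The paper gives no argument of its own for this fact (it calls it well known and points to Berthusen et al.), and your reasoning is exactly its column-orbit criterion, Proposition~\ref{prop:aut-cond}, applied to the $\gln{r}$-invariant set $\mathbb{F}_2^r\setminus\{0\}$ of columns of $G_r$. The same argument also covers a generator matrix compatible with the circulant $H_r$ of Table~\ref{tab:simplex}, which the paper uses later for the dirty shifts and autoCNOTs: every generator matrix of a simplex code has precisely the nonzero vectors of $\mathbb{F}_2^r$ as its columns, in some order.
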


While HGPS do inherit all logical gates corresponding to the product of Simplex automorphisms by construction, the physical transformations required remains physical permutations (an hence fault tolerant) only if the classical automorphism is also a matrix automorphism on the checks, which is not true in general \cite{berthusen2025automorphismgadgetshomologicalproduct}.

One approach to get around this, as proposed in \cite{berthusen2025automorphismgadgetshomologicalproduct}, is to gauge out all problematic logical qubits:
\begin{proposition}[Left Sector HGPS Automorphism]\label{prop:ls-hgps}
    From $\mathcal{Q}_{HGPS,r}$ as defined as in \ref{def:hgps}, we can obtain a subsystem code $\mathcal{Q}^L_{HGPS,r}$, the left-sector HGPS, where logical qubits supported on the right sector is designated as gauge qubits. We have:
    \begin{equation}
        \gln{r}\times\gln{r}\leq \mathcal{L}_G({\mathcal{Q}^L_{HGPS,r}}).
    \end{equation}
    i.e., for all $g = g_1\otimes g_2\in \gln{r}\times \gln{r}$, $\exists \sigma = \sigma_1\otimes \sigma_2, h = h_1\otimes h_2$, such that the action of $\sigma\oplus h$ on the physical qubits applies the logical transformation
    \begin{align*}
        G_X^L\sigma &= gG_X^L\\
        G_Z^L\sigma &= g^{-T}G_Z^L
    \end{align*}
    fault-tolerantly to the logical qubits supported on the left sector. 
\end{proposition}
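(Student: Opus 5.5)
The plan is to work sector by sector and reduce the statement to the classical one, using the product structure $\sigma=\sigma_1\otimes\sigma_2$, $h=h_1\otimes h_2$.

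First, for each factor $g_i\in\gln{r}$, the Simplex Automorphism fact gives a column permutation $\sigma_i$ of $\mathbb F_2^{2^r-1}$ with $g_iG=G\sigma_i$. The columns of $G$ are exactly the nonzero vectors of $\mathbb F_2^r$, and left multiplication by $g_i$ permutes them, so $\sigma_i$ is the permutation $b_j\mapsto g_ib_j$. By the Automorphism of Checks fact, $\sigma_i$ also preserves $\rs{H}$.

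The subtle point is that $\sigma_i$ need not be a matrix automorphism of the cyclic $H$. This is why the statement passes to the left-sector subsystem code. Let the full permutation be $\Pi=(\sigma_1\otimes\sigma_2)\oplus(\tau_1\otimes\tau_2)$, acting on the left sector (of size $n_1n_2$) and the right sector (of size $m_1m_2$). We need $\Pi$ to map the stabilizer group of $\mathcal{Q}^L_{HGPS,r}$ to itself. For this it is enough that
\[
\rs{H_X\Pi}=\rs{H_X}\quad\text{and}\quad \rs{H_Z\Pi}=\rs{H_Z}
\]
modulo gauge operators. Here the gauge group is generated by the stabilizers together with the right-sector logicals $[0\,|\,G\otimes P]$ and $[0\,|\,P\otimes G]$.

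I would now compute $H_X\Pi=[H\sigma_1\otimes\sigma_2\,|\,\tau_1\otimes H^T\tau_2]$ and use $H\sigma_1=h_1H$, where $h_1$ is invertible. Matching the left block forces a row transformation $h_1\otimes\sigma_2$. The right block then becomes $h_1\tau_1\otimes\sigma_2^{-1}H^T\tau_2$, and it has to agree with $I\otimes H^T$ up to an element of the gauge group. Here I would choose $\tau_i$ as the check permutations, i.e.\ permutations of the check coordinates. I would then argue that any discrepancy lies in $\rs{H}^{\perp}$-type directions, which are exactly the right-sector logical or gauge directions, so it is absorbed by the gauge group. The $H_Z$ case is symmetric with the roles of the two factors swapped.

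I expect this to be the main obstacle. When $h_1$ is not a permutation, $h_1\otimes\sigma_2$ is not a permutation of check rows, and I would have to verify carefully that the mismatch on the right sector is spanned by gauge operators. If I cannot verify this directly, my fallback is to follow \cite{berthusen2025automorphismgadgetshomologicalproduct}: there, gauging the right sector lets the left-sector code be treated as $\mathcal{S}_r\otimes\mathcal{S}_r$-like with checks $H\otimes I$ and $I\otimes H$ only on the left. Those checks are preserved by $\sigma_1\otimes\sigma_2$ up to the invertible row maps $h_1\otimes I$ and $I\otimes h_2$.

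Finally, I would read off the logical action on the left-sector basis $G_X^L=P\otimes G$ and $G_Z^L=G\otimes P$. Using $G\sigma_i=g_iG$ on the $G$ factor, $G_X^L\sigma$ equals $gG_X^L$ modulo stabilizers, where the $P$ factor contributes only a change that is trivial modulo $\rs{H}$ because $PG^T=I$ pins down the dual pairing. The $Z$ action is then forced to be $g^{-T}$ by preserving the symplectic pairing $G_X^L(G_Z^L)^T=I$. Fault tolerance is automatic because $\Pi$ is a qubit permutation.
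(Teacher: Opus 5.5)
\textbf{Your proof fails at the step you flagged, and this is not just a missing verification: no right-sector qubit permutation can work.}

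Take the $X$-stabilizer of $H_X$ indexed by check $c$ of the first factor and bit $j$ of the second. It is $[\,h_c\otimes e_j\,|\,e_c\otimes \eta_j\,]$, where $h_c$ is the $c$-th row of $H$ and $\eta_j=e_jH^T$ is the $j$-th column of $H$; both have weight $3$. A right-supported operator $[0|v]$ lies in the gauge group of $\mathcal{Q}^L_{HGPS,r}$ iff $v(H\otimes I)=0$. Under your $\Pi$, the image is $[\,h_c\sigma_1\otimes e_{j'}\,|\,\tau(e_c\otimes\eta_j)\,]$, where $e_{j'}=e_j\sigma_2$. For this to lie in the gauge group it must equal $aH_X+[0|v]$ for some $a$ and some such $v$. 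Multiply the right part by $H\otimes I$: this kills $v$. Since $(I\otimes H^T)(H\otimes I)=(H\otimes I)(I\otimes H^T)$, the stabilizer contribution becomes the left part times $I\otimes H^T$. So you need $\tau(e_c\otimes\eta_j)(H\otimes I)=h_c\sigma_1\otimes\eta_{j'}$.

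Now count weights. The permutation $\tau$ preserves weight, so the left side is a sum of three terms $h_x\otimes e_y$. The right side places the nonzero vector $h_c\sigma_1$ in three distinct $e_y$-slots. Hence $h_c\sigma_1$ must itself be a row of $H$. So any map $(\sigma_1\otimes\sigma_2)\oplus\tau$ that normalizes the gauge group, with $\tau$ a product permutation or not, forces $\sigma_1$ to be a matrix automorphism of $H$; the $Z$ version of the argument forces the same for $\sigma_2$. That is exactly what fails for generic $g\in\gln{r}$, and it is what the proposition is designed to circumvent. The mismatch therefore does not lie in gauge directions.

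Your fallback does not rescue this. It replaces $\mathcal{Q}^L_{HGPS,r}$ by a subsystem-hypergraph-product (SHYPS-like) code on the left sector alone, which has different, non-LDPC stabilizers. The statement keeps the HGPS stabilizers $H_X,H_Z$, which have right-sector support and must be preserved.

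\textbf{The missing idea is that $h=h_1\otimes h_2$ in the statement is not a permutation.} The paper chooses:
\begin{itemize}
\item $\sigma_1$ with $g_1^{-T}G=G\sigma_1$ and $\sigma_2$ with $g_2G=G\sigma_2$;
\item invertible $h_1,h_2$ with $h_1H=H\sigma_1$ and $h_2^{-T}H=H\sigma_2$. These exist because $\sigma_i$ preserves $\rs{H}$.
\end{itemize}
It then applies the CNOT circuit $h$ on the right sector, with $X$-vectors transforming by $h$ and $Z$-vectors by $h^{-T}$. A direct computation gives $H_X(\sigma\oplus h)=(h_1\otimes\sigma_2)H_X$ and $H_Z(\sigma\oplus h^{-T})=(\sigma_1\otimes h_2^{-T})H_Z$. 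The HGPS stabilizers are thus preserved exactly, with nothing left to absorb into the gauge group.

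For the same reason, fault tolerance is not automatic from being a permutation. It rests on the entangling gates being confined to right-sector qubits, which carry only the gauge (right-sector) logical qubits and are disjoint from the supports of $G_X^L$ and $G_Z^L$.

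\textbf{Your logical bookkeeping also needs fixing.} With $G\sigma_1=g_1G$ you get $P\sigma_1G^T=g_1^{-T}$, so $P\sigma_1\equiv g_1^{-T}P$ modulo $\rs{H}$. The $P$ factor is therefore not inert, and the action on $P\otimes G$ would be $g_1^{-T}\otimes g_2$. This is why the paper chooses $\sigma_1$ via $g_1^{-T}$. With that choice, $P\sigma_1=g_1P+aH$, and the leftover term $aH\otimes g_2G=(a\otimes g_2G)H_X$ is a genuine stabilizer because $GH^T=0$.
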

The proof follows from [Theorem~5.5, \citealp{hgp}] and [Lemma~VIII.12, \citealp{ComputingQLDPC}], 
which we survey here.

\begin{proof}
    Given $g_1, g_2$, we can choose 
    \begin{align*}
        \sigma_1 &: g_1^{-T}G = G\sigma_1,\\
        \sigma_2 &: g_2G = G\sigma_2,\\
        h_1 &: h_1H = H\sigma_1,\\
        h_2 &: h_2^{-T}H = H\sigma_2.
    \end{align*}
    Indeed, we can check that the stabilizers are preserved:
    \begin{align*}
        H_X\sigma\oplus h 
        &= [H\sigma_1\otimes \sigma_2 | h_1\otimes H^T h_2] \\
        &= [H\sigma_1\otimes \sigma_2 | h_1\otimes (h_2^{T}H)^T]\\
        &= [h_1H\otimes \sigma_2| h_1\otimes \sigma_2H^T]\\
        &= h_1\otimes \sigma_2 H_X\\
        H_Z\sigma\oplus h^{-T} 
        &= [\sigma_1\otimes H\sigma_2 | H^Th_1^{-T}\otimes h_2^{-T}] \\
        &= [\sigma_1\otimes H\sigma_2 | (h_1^{-1}H)^{T}\otimes h_2^{-T}] \\
        &= [\sigma_1\otimes h_2^{-T}H| \sigma_1H^T\otimes  h_2^{-T}]\\
        &= \sigma_1\otimes h_2^{-T} H_Z.
    \end{align*} 
    Then, to show that $\sigma$ leads to the desired logical action on the left sector, we first note that
%    \begin{align}
%        I 
%        &= L_X\sigma \sigma^TL_Z^T \\
%        &= G\sigma_1\sigma_1^TP^T\otimes P\sigma_2^T\sigma_2G^T\\
%        &= h_1G\sigma_1^TP^T\otimes P\sigma_2G^Th_2^{-T}
%    \end{align}
%    which implies 
    \begin{align}
        I 
        &=P\sigma_1\sigma_1^TG^T = P\sigma_1G^Tg_1^{-1} \\
        &\implies  P\sigma_1G^T = g_1 = g_1PG^T \\
        &\implies (P\sigma_1 - g_1P)G^T = 0 \\
        &\implies P\sigma_1 = g_1P +M
    \end{align}
    for some $M\in \ker(G) = \rs H\implies M = aH$ for some matrix $a$. Then, it follows that 
    \begin{align}
        G_X^L \sigma 
        &= P\sigma_1\otimes G\sigma_2  \\
        &= g_1P\otimes g_2G +  aH\otimes G \\
        &= gG_X^L + a' (H_X +H_X^R),  
    \end{align}
    where $a' = a\otimes G$ and $H_X^R$ correspond to a transformation on right sector qubit, which does not effect the left sector logical operators. We can similarly show that $G_Z^L \sigma\simeq g^{-T}G_Z^L$ up to stabilizers and right-sector operations. 

    Finally, these automorphism gates are fault-tolerant. While implementations of $h$ require entangling gates and decrease the distance in general, it only effects qubits supported on the right sector, which we preemptively designated as gauge qubits. With, all logical observables being disjoint from physical qubits where entangling gates are applied, the distance of the dressed operators cannot decrease.
\end{proof}

With these operators, $\mathcal{Q}_{HGPS,r}^L$ can implement arbitrary CNOT-type operators efficiently using the synthesis routine shown in \cite{ComputingQLDPC} with doubled qubit overhead. This approach unfortunately cannot implement diagonal gates. 

\section{Code Modifications}
Now, we have a language to describe the gap. Firstly, while having rich classical automorphisms are helpful for quantum computation, it also implies a poor encoding rate. This is often unnecessary as much smaller subset of gates usually suffice for any \emph{practical} computation, leading to a natural question of whether we can design codes with better or constant encoding rates that contain exactly the gates we need. We investigate this question in section~\ref{ssec:prescription}, and show precise bounds on the number of physical qubits that suffices to augment automorphisms to a given code. Secondly, access to classical automorphisms isn't helpful unless we can lift them to quantum codes, and existing approaches either sacrifices LDPC property for a specific class of code or throw away up to half of the logical qubits. It is hence another important question to study resource efficient methods to lift specifc classical automorphisms. We explore this question in section~\ref{ssec:modification}. Specifically, we find minimal modifications of the checks such that automorphism are converted to matrix automorphisms that can be implemented virtually.

\subsection{Prescribing Automorphisms}\label{ssec:prescription}
The equivalent definition for the group of automorphism gates using group action is as follows:
\begin{proposition}[Automorphism Condition]\label{prop:aut-cond}
    Let us be given a classical linear code $\mathcal{C}(G,H)$ and $\mathcal{G} = \{g_1,...,g_m\} \leq \gln{k}$. Then, $ \mathcal{G}\leq \mathcal{L}_G(\mathcal{C})$ iff
    \begin{equation}\label{eq:auto-condition}
        \orbg{c} = \{g\cdot c\ | g\in \mathcal{G}\} \subseteq \cols{G}\ \ \forall c\in \cols{G}.
    \end{equation}
     i.e. the columns of $G$ is $\mathcal{G}$-invariant under left action.
\end{proposition}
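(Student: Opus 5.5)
The proof works column by column. The key fact is that an invertible $g\in\gln{k}$ acts on $G$ column-wise: the $j$-th column of $gG$ is $g\cdot c_j$, where $c_j$ is the $j$-th column of $G$. So the automorphism condition $gG = G\sigma$ for some $\sigma\in\mathbb P_n$ says exactly that the multiset $\{g c_j\}_j$ equals the multiset $\{c_j\}_j$, i.e.\ left multiplication by $g$ permutes the columns of $G$. Under the paper's standing assumption that the columns of $G$ are unique, this reduces to a statement about sets: $g\cdot\cols{G} = \cols{G}$.

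\emph{Forward direction.} Suppose $\mathcal G\leq\mathcal L_G(\mathcal C)$ and fix $g\in\mathcal G$ and $c\in\cols{G}$. There is a $\sigma$ with $gG=G\sigma$. The column of $gG$ at the position of $c$ is $g c$. Since $G\sigma$ only rearranges the columns of $G$, this column is some $c'\in\cols{G}$. Hence $g\cdot c\in\cols{G}$ for every $g$, which gives $\orbg{c}\subseteq\cols{G}$.

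\emph{Reverse direction.} Assume $\orbg{c}\subseteq\cols{G}$ for all $c\in\cols{G}$ and fix $g\in\mathcal G$. Then $g$ maps the finite set $\cols{G}$ into itself. Because $g$ is invertible, this map is injective, and an injective self-map of a finite set is a bijection. Define $\sigma$ as the permutation matrix sending the position of $c$ to the position of $gc$ (taking care with the orientation convention of right multiplication). Checking entry-wise gives $gG=G\sigma$, so $g\in\mathcal L_G(\mathcal C)$. Since $\mathcal G$ is a group, so is $\mathcal L_G(\mathcal C)$ (by the Fact on automorphism gates), and the inclusion is as subgroups.

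\emph{Main obstacle: repeated columns.} If $G$ has repeated columns, the set-level invariance in (\ref{eq:auto-condition}) is not enough; one needs $g$ to preserve column multiplicities. Invariance of the set gives a bijection on distinct columns. I would then argue as follows. Since $\mathcal G$ is finite and the orbit condition holds for every $c$, $\cols{G}$ is a union of $\mathcal G$-orbits. Within an orbit, multiplicities must be constant for the "iff" to hold, so I would either add this as a hypothesis or invoke the paper's convention of stripping duplicate columns, applying the argument, and re-duplicating. I would state explicitly that columns are taken unique, as the paper assumes, and note the multiset version as the general form.
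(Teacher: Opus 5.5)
Your proof is correct and follows the same route as the paper. The paper simply notes that $\cols{gG}=\cols{G\sigma}=\cols{G}$ holds iff $g\in\mathcal{L}_G(\mathcal{C})$, and you make explicit the step it leaves implicit, namely that an injective self-map of the finite set $\cols{G}$ is a bijection and so yields $\sigma$. Your caveat about repeated columns is also accurate, since set-level invariance does not force multiplicities to be preserved; the paper avoids this only through its standing assumption that the columns of $G$ are unique.
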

This is simply a re-statement Definition~\ref{fact:auto}, since $\cols{gG} = \cols{G\sigma} =\cols{G}\iff g\in \mathcal{L}_G(\mathcal{C})$. This definition gives a general way to construct codes with a prescribed automorphism group:
\begin{theorem}[Automorphism Completion]\label{thm:auto-expansion}
    Let us be given a $[n,k,d]$ code $\mathcal{C}(G,H)$ and $\mathcal{G} = \{g_1,...,g_m\} \leq \gln{k}$. There exists a family of $[n'\leq nm,k,d'\leq dm]$ codes $\mathcal{C}_{\mathcal{G}}(G_{\mathcal{G}},H_{\mathcal{G}})$ such that $\mathcal{C}\cong\mathcal{C}_\mathcal{G}$, $\mathcal{G}\leq \mathcal{L}_{G_\mathcal{G}}({\mathcal{C}_\mathcal{G}})$. 
\end{theorem}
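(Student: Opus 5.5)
The plan is to build $G_{\mathcal G}$ by closing the column set of $G$ under the left action of $\mathcal G$, and then invoke Proposition~\ref{prop:aut-cond}. Concretely, set
\begin{equation*}
G_{\mathcal{G}} = \big[\, g_1 G \,\big|\, g_2 G \,\big|\, \cdots \,\big|\, g_m G \,\big],
\end{equation*}
a $k\times nm$ matrix. Its columns are the multiset $\bigcup_{c\in\cols{G}} \orbg{c}$, with each column of $G$ replaced by its full $\mathcal G$-orbit.

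The first step is the automorphism condition. For any $g\in\mathcal G$, left multiplication gives $gG_{\mathcal G} = [gg_1G|\cdots|gg_mG]$. Since $\mathcal G$ is a group, $i\mapsto$ the index of $gg_i$ is a permutation of $\{1,\dots,m\}$. Hence $gG_{\mathcal G}$ is obtained from $G_{\mathcal G}$ by permuting the $m$ blocks of $n$ columns, i.e.\ $gG_{\mathcal G}=G_{\mathcal G}\sigma_g$ for a block permutation matrix $\sigma_g$. This even works with repeated columns, so uniqueness is not needed. By Proposition~\ref{prop:aut-cond}, or directly by Definition~\ref{fact:auto}, this gives $\mathcal G\le\mathcal L_{G_{\mathcal G}}(\mathcal C_{\mathcal G})$.

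The second step covers the parameters and the isomorphism. $G_{\mathcal G}$ has rank $k$ because it contains the block $g_1G=G$ (take $g_1=I$), so the dimension is preserved. The message map $x\mapsto xG_{\mathcal G}=[xg_1G|\cdots|xg_mG]$ restricts on the first block to the encoding by $G$. Thus $\mathcal C_{\mathcal G}\to\mathcal C$, $c'\mapsto$ first block, is a linear bijection, giving $\mathcal C\cong\mathcal C_{\mathcal G}$. The length is $nm$. If we pass to a reduced version that deletes repeated columns (tracking multiplicities, as the paper allows), we get $n'\le nm$.

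For the distance, the weight of $xG_{\mathcal G}$ is $\sum_i \wt{(xg_i)G}$. Each $xg_i$ is a nonzero message, so each summand is at least $d$, giving $d'\ge d$. Conversely, each summand is at most the maximal codeword weight, and choosing $x$ so that some block achieves weight $d$ bounds the total. The upper bound $d'\le dm$ as stated must be read loosely: it is the weight of the block-concatenated codeword in the worst arrangement. I would argue it by taking $x$ with $\wt{xG}=d$ and noting that the $m$ blocks are the codewords $xg_iG$ of $\mathcal C$. This gives $d'\le\sum_i\wt{xg_iG}$, which is at most $dm$ only when every $xg_iG$ also has weight $d$. Orbit-closed minimum-weight words would guarantee this.

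This distance upper bound is the main obstacle. To make it clean, I would first try choosing $x$ in a $\mathcal G$-orbit of minimum-weight messages. If that fails, I would instead state $d'$ in terms of the orbit sum, which is the honest quantity. The check matrix $H_{\mathcal G}$ is then any basis of $\ker G_{\mathcal G}$. The "family" comes from the freedom to reorder blocks, drop duplicate columns, or append extra columns.
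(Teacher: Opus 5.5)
\paragraph{Same construction as the paper.} Like the paper, you take $G_{\mathcal G}=[g_1G|\cdots|g_mG]$. The paper then passes to the orbit set $\mathcal O=\bigcup_{c\in\cols{G}}\orbg{c}$, invokes Proposition~\ref{prop:aut-cond}, and writes $H_{\mathcal G}=[M_{\mathcal O}^T|I]$ from a systematic form $G=[I|M]$. Your block-permutation identity $gG_{\mathcal G}=G_{\mathcal G}\sigma_g$ and the first-block isomorphism are correct. One caution: when you puncture, remove \emph{all} duplicates, as the paper does via $\mathcal O$, or more generally keep column multiplicities constant on $\mathcal G$-orbits. Deleting only some repeats can break the invariance you need.

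\paragraph{The distance bound cannot be proved.} This is where you stall, and the obstacle lies in the statement, not in your reasoning. The paper handles distance only by asserting that the concatenation ``achieves $d'=dm$''.

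Your own blockwise estimate gives more than you wrote. Each of the $m$ blocks $xg_iG$ is a nonzero codeword of $\mathcal C$, so $d'\ge dm$, not just $d'\ge d$. Hence $d'\le dm$ would force $d'=dm$. That holds iff some nonzero $x$ has $\wt{xg_iG}=d$ for every $i$, and this can fail.

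Take $G=I_4$, a $[4,4,1]$ code, and $\mathcal G=\{I,g\}$ with $g=\begin{bmatrix}1&0&1&1\\0&1&1&1\\1&1&1&0\\1&1&0&1\end{bmatrix}$, so $g^2=I$.
\begin{itemize}
\item $G_{\mathcal G}=[I|g]$ has eight distinct columns, so there is nothing to puncture.
\item It is the $[8,4,4]$ extended Hamming code, with weight enumerator $1+14z^4+z^8$. So $d'=4>2=dm$.
\item Any $\mathcal G$-invariant column set containing $\cols{G}$ contains these eight columns. Adding columns only raises weights, so no other member of the ``family'' satisfies the bound either.
\item Your fallback of choosing $x$ from an orbit of minimum-weight messages has nothing to choose. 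Every weight-one message is sent to a row of $g$, which has weight $3$.
\end{itemize}

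\paragraph{What to prove instead.} The statement must be corrected rather than read loosely.
\begin{itemize}
\item For the unpunctured code, $n'=nm$ and $d'\ge dm$, with equality exactly under the orbit condition you identified.
\item After full deduplication, without re-inserting the original multiplicities, $n'=|\mathcal O|\le nm$, but the distance can fall below $d$. The paper's own $[5,2,3]$ example already contains all three nonzero columns of $\mathbb F_2^2$, so it collapses to the $[3,2,2]$ simplex code for every $\mathcal G$.
\end{itemize}
Your ``orbit sum'' remark is the right repair; present it as a correction to the theorem.
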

\begin{proof}
    Simply take $G_{\mathcal{G}} = [g_1G|\cdots|g_mG]$ to achieve $n' = nm, d' = dm$. Then, non-unique columns can be punctured while preserving the automorphism group, simultaneously reducing both code length and distance.

    To upper-bound the redundancy, we can find the smallest $\mathcal\ G$-invariant set containing $\cols{G}$: $\mathcal O = \mathcal{G}\cdot \cols{G}= \bigcup_{c\in \cols{G}}\orbg{c}$. It suffices to take $G_\mathcal{O}$ constructed by stacking all columns $c\in \mathcal{O}$. 
    
    To find $H_\mathcal{G},$ we can assume WLOG that $G = [I|M]$, and let $M_\mathcal O = G_\mathcal{O}\setminus\cols{I_k};$ this gives
    \begin{align}
       G_{\mathcal{G}} &= \begin{bmatrix}
            I& |M_\mathcal{O}
        \end{bmatrix}\\
        H_{\mathcal{G}} &= \begin{bmatrix}
            M_\mathcal{O}^T|&I
        \end{bmatrix},
    \end{align}
    
    s.t. $H_\mathcal{G}G_\mathcal{G}^T = 2M_\mathcal{O}^T = 0$. Hence, $\mathcal{C}_{\mathcal{G}}(G_\mathcal{G}, H_\mathcal{G})$ defines a valid linear code.
    
    It's also easy to check that $\cols{G_\mathcal{G}} = \mathcal{O}$ is $\mathcal{G}$-invariant under left action: for all $c\in\mathcal{O}$, $c = g_G\cdot c_G$ for some $g_G\in \mathcal{G}, c_G\in \cols{G}$, it follows $g\cdot c =  g\cdot g_G\cdot c_G\in \orbg c\in \mathcal{O}$ for all $g\in\mathcal{G}$. This implies $\mathcal{G}\leq \mathcal{L}_{G_\mathcal{G}}({C_\mathcal{G}})$ by Prop~\ref{prop:aut-cond}

\end{proof}

Note depending on $\mathcal{G}$, the size of the code grows with the size of the automorphism group we want to expand to. On one extreme, expanding the automorphism to $\gln{k}$ potentially yield an exponential increase in code length; in fact, starting with any code $\mathcal{C}$, the $\mathcal{C}_{\gln{k}}$ we obtain (after removing redundancy) is exactly the simplex code $\mathcal{S}_r$, with $n = 2^{k}-1$. On the other hand, if we fix $\mathcal{G}$ to be small, e.g. $\mathcal{G} = \{I,g_{CNOT}\}$ correspond to a single CNOT, we can obtain sharper guarantees on $n',d'$.

%\begin{corollary}[Exansion: Communting CNOTs]
% Let us be given a code $\mathcal C(G,H)$ and $g_{CNOTs}$ correspond to a set of commuting CNOT gates. There exists code $\mathcal{C}'(G', H')$ s.t. $g_{CNOTs}\in \mathcal{L}_{G'}({\mathcal{C}'})$. Furthermore, if $\mathcal{C}$ has parameters $[n,k,d]$, $\mathcal{C_G}$ has parameters $[n',k,d']$, where $n\leq n'\leq2n$ and $d\leq d'\leq 2d$.
%\end{corollary}
%\begin{proof}
%    It follows from Theorem~\ref{thm:auto-expansion} by taking $\mathcal{G} = \rs{g_{CNOTs}} = \{I,g_{CNOTs}\}$ since a set of commuting CNOTs is an involution.
%\end{proof}

The code expansion given in Theorem~\ref{thm:auto-expansion} can maintain LDPC properties of the original check matrix if either the prescribed automorphism group is small, or if the group elements are sparse:

\begin{theorem}[LDPC]\label{thm:expansion-ldpc}
Let us be given a code $\mathcal{C}(G,H)$ and $\mathcal{G} \leq \gln{k}$. Suppose $H$  is $w$-bounded, $g$ is $t$-bounded for all $g\in \mathcal{G}$ and $|\mathcal{G}| = m$.  Then, the family of codes $\mathcal{C}_\mathcal{G}(G_\mathcal{G},H_\mathcal{G})$ given by Theorem~\ref{thm:auto-expansion} has a $(w + t+ 1)$-bounded check matrix $H^{(t)}_\mathcal{G}$ and a a $(w + m+ 1)$-bounded check matrix $H^{(m)}_\mathcal{G}$.
\end{theorem}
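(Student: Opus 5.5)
The plan is to build $H_\mathcal{G}$ explicitly as a stack of two kinds of rows. Both kinds will be valid parity checks for the code $G_\mathcal{G}$ from Theorem~\ref{thm:auto-expansion}, whose columns are $\mathcal{O}=\mathcal{G}\cdot\cols{G}$. The two kinds are \emph{translated checks} inherited from $H$, and \emph{definitional checks} that tie each added coordinate to a few existing ones. The stated weights $w+t+1$ and $w+m+1$ are then the crude bound obtained by adding the two contributions; I do not expect to need a finer count.

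\textbf{Step 1: translated checks.} A row $h$ of $H$ satisfies $\sum_j h_j c_j=0$, where $c_j$ are the columns of $G$. Multiplying on the left by any $g\in\mathcal{G}$ gives $\sum_j h_j\, g c_j=0$. So the same support pattern, placed on the coordinates indexed by the columns $gc_j\in\mathcal{O}$, is a parity check of $G_\mathcal{G}$ of weight at most $w$. If $gc_j=gc_{j'}$ coincide after puncturing, the entries add mod $2$ and the weight only drops. Taking these rows over all $g\in\mathcal{G}$ gives a $w$-bounded family.

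\textbf{Step 2: definitional checks, $t$-version.} I work in systematic form $G=[I|M]$, as in the proof of Theorem~\ref{thm:auto-expansion}, so the unit vectors $e_i$ are columns of $G$. For any $c\in\mathcal{O}$ the identity $gc=\sum_i c_i\,(g e_i)$ holds. In particular $ge_i$ is column $i$ of $g$, which is a sum of at most $t$ unit columns under the $t$-boundedness convention, read as a bound on the weight of each column of $g$ (or of $g^T$, whichever the convention requires). Thus the coordinate $ge_i$ is fixed by a check of weight at most $t+1$ on the coordinates $\{ge_i\}\cup\{e_l : l\in\operatorname{supp}(ge_i)\}$.

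A general new coordinate $gc$ should be written the same way. I would pick $c=g'e_i$, or more generally use $gc=(gg')e_i$ and note that $gg'\in\mathcal{G}$. Then every element of $\mathcal{O}$ has the form $\tilde g e_i$ with $\tilde g\in\mathcal{G}$, so every coordinate gets a definitional check of weight at most $t+1$. A rank count then shows completeness: Step 1 supplies the $n-k$ independent constraints on the original block, and there is one definitional check per added coordinate. Together these span a space of dimension $|\mathcal{O}|-k$, so they give a full check matrix $H^{(t)}_\mathcal{G}$ with row weight bounded by $w+t+1$.

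\textbf{Step 3: definitional checks, $m$-version.} Here I would replace the sparse-column expansion by a relation that uses the group structure only. For each $c\in\cols{G}$ the orbit $\orbg{c}$ has at most $m$ elements. I would attach to each orbit one linear relation among its elements together with $c$ itself, for instance the orbit sum $\sum_{g\in\mathcal{G}}gc$ expressed against a fixed column. Then each added coordinate is determined through a check of weight at most $m+1$.

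\textbf{Main obstacle.} Step 3 is where I expect trouble. It is not automatic that the orbit relations fix every new coordinate individually, nor that together with the translated checks they reach full rank $|\mathcal{O}|-k$. My first attempt would be a triangular argument. Order the elements of $\mathcal{O}$ by a word length in generators of $\mathcal{G}$, and express each $gc$ through a relation with previously defined coordinates using at most $m$ terms. If that fails, I would fall back on the case where $\mathcal{G}$ is cyclic or an involution, which is the case actually used for $g_{auto}$, and prove the $w+m+1$ bound there directly.
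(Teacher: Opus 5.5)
Your translated checks (Step 1) and your definitional checks for the coordinates $ge_i$ are exactly the paper's blocks $[H_1\mid P_H]$ and $[g_i^T\mid I]$. Two steps fail, though.

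\emph{Step 2.} The claim that every element of $\mathcal{O}$ has the form $\tilde g e_i$ is false. $\mathcal{O}=\mathcal{G}\cdot\cols{G}$ also contains the translates $gm_j$ of the non-systematic columns $m_j$ of $M$, and these are in general not columns of any group element. For such a coordinate, the only check expressing it through the systematic coordinates is $gm_j=\sum_l (gm_j)_l\,e_l$. Its weight is $1+\wt{gm_j}$, which neither $w$ nor $t$ controls. So both your weight bound and your rank count (``one definitional check per added coordinate'') break.

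The repair is to let your Step 1 rows carry these coordinates, as the paper does. After the row operations $P_a,P_b$ on $[M_\mathcal{O}^T\mid I]$, each copy $[g_i\mid g_iM]$ carries the checks $[H_1\mid P_H]$. The block $P_H$ on the $g_iM$ coordinates is invertible, so these checks determine the $g_iM$ coordinates from the $g_ie_l$ coordinates. Only the $g_ie_l$ coordinates then need definitional checks. In the paper's unpunctured form this gives $k$ definitional checks per non-identity copy plus $n-k$ translated checks per copy, a total equal to the length minus $k$. Each group of rows has an invertible block on its own coordinates, so they are independent. After puncturing, merging coincident coordinates, the same rows still span the dual.

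\emph{Step 3.} This is not a proof. One relation per orbit fixes at most one of the up to $m-1$ new coordinates in that orbit, so the rank falls short of $|\mathcal{O}|-k$. Your fallbacks do not cover a general $\mathcal{G}$. The paper needs no new relations here. It left-multiplies each block $[g_i^T\mid I]$ by the invertible $g_i^{-T}$, obtaining $[I\mid g_i^{-T}]$, i.e.\ checks $e_j=\sum_l (g_i^{-1})_{lj}\,g_ie_l$ that express each systematic coordinate through a single copy. Since this is an invertible row operation, validity and rank are inherited from the previous stack. The parameter $m$ enters as the number of identity blocks stacked over the systematic coordinates, not as the length of an orbit relation.

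Relatedly, for $H^{(t)}_\mathcal{G}$ the paper does not keep your star of blocks $[g_i^T\mid I]$. It replaces them by the chain $[\,0\mid g_{i+1}^Tg_i^{-T}\mid I\,]$, using $g_i^{-1}g_{i+1}\in\mathcal{G}$, so each systematic coordinate meets only one definitional block. In your stack a systematic coordinate can meet all $m$ blocks. Your ``add the two contributions'' count ignores this, and it matters once the bound must also limit how many checks each bit participates in, as LDPC requires.
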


\begin{proof}
 Let us find $P_G,\sigma_G$ by Gaussian Elimination such that $G' = P_GG\sigma_G = [I|M]$ is in the canonical basis. Note both $H' = [M^T|I]$ and $H\sigma_G$ are valid check matrices for $G'$, hence there exists a basis change such that $P_H[M^T|I]\sigma_G^T = H$. Let $H_1 = P_HM^T$, and $H_2 = P_H$, it follows that $H_1, H_2, [H_1|H_2] = H$ are all $w$-bounded.
 
Following Theorem~\ref{thm:auto-expansion}, we can find the orbit of $M$ under $\mathcal{G}$, $M_\mathcal{O} = \begin{bmatrix} g_1\cdots g_m&M&g_1M\cdots g_mM\end{bmatrix}$ and define $G_\mathcal{G} = [I|M_\mathcal{O}]$, giving us a code $\mathcal{C_G}$ with desired automorphism group. It remains to find a $H_\mathcal{G}$ with LDPC properties. 

We begin with a valid check matrix $H^{(1)}_{\mathcal{G}}= [M_\mathcal{O}^T|I]$. First, let $\sigma_r$ be permutation that swaps row $i$ with row $i+mk+(n-k)$ for $i\leq mk$, and $P_0 = \begin{bmatrix}
    I&\\M^T&I
\end{bmatrix}$ so that $P_0\cdot \begin{bmatrix}
    g_i^T&I&\\
    M^Tg_i^T&&I
\end{bmatrix} =  \begin{bmatrix}
    g_i^T&I&\\
    &M^T&I
\end{bmatrix}$. Define $P_0' =   P_0\oplus P_0\oplus...\oplus P_0\oplus I$ and $P_a = \sigma_r^T P_0' \sigma_r$, we can do the following transformation to obtain:
\begin{align}
H^{(2)}_{\mathcal{G}}&=P_a\cdot \left[
        \begin{array}{c| c c c c c c}
          g_1^T & I & & & & & \\
          \vdots & & \ddots & & & & \\
          g_m^T & & & I & & & \\ \hline
          M^T & & & & I & & \\
          \vdots & & & & & \ddots & \\
           M^Tg^T_m & & & & & & I
        \end{array}
        \right]\\
%    &=\sigma_a^TP_a\left[
%        \begin{array}{c| c c c| c c c c}
%          M^T & & & & I & & & \\\hline
%          g_1^T & I & & & && & \\
%          M^Tg_1^T & & & & &I & & \\\hline
%          \vdots & & \ddots & & &\ddots & & \\\hline
%          g_m^T & & & I & & && \\ 
%           M^Tg^T_m & & & & && & I
%        \end{array}
%        \right]\\
%    &=\sigma_a^T\left[
%        \begin{array}{c| c c c| c c c c}
%          M^T & & & & I & & & \\\hline
%          g_1^T & I & & & && & \\
%            &M^T & & & &I & & \\\hline
%          \vdots & & \ddots & & &\ddots & & \\\hline
%          g_m^T & & & I & & && \\ 
%            & & & M^T& && & I
%        \end{array}
%        \right]\\
    &= \left[
        \begin{array}{c c c c c c c c}
          g_1^T & I & & & & && \\
          \vdots & & \ddots & & & && \\
          g_m^T & & & I & & & &\\ \hline
          M^T & & & & I & & &\\
           &M^T & & &  &I & &\\
           & & \ddots& & & &\ddots & \\
         & & & M^T& & & &I
        \end{array}
        \right]
\end{align}

Then, we can apply the basis change $P_b = I_{mk}\oplus P_H\oplus \cdots\oplus P_H$ on the bottom half of $H_\mathcal{G}^{(2)}$, and obtain
\begin{align}
H^{(3)}_\mathcal{G} &= P_bH^{(2)}_\mathcal{G}\\
        &= \left[
        \begin{array}{c c c c c c c c}
          g_1^T & I & & & & && \\
          \vdots & & \ddots & & & && \\
          g_m^T & & & I & & & &\\ \hline
          H_1 & & & & P_H & & &\\
           &H_1 & & &  &P_H & &\\
           & & \ddots& & & &\ddots & \\
         & & & H_1& & & &P_H
        \end{array}
        \right]
\end{align}

If $m\leq t$, we can construct a $(w+m+1)$-bounded check matrix by applying $g_i^{-T}$ to rows $[ik, (i+1)k-1]$, obtaining
\begin{align}
    H^{(m)}_\mathcal{G} = \left[
        \begin{array}{c c c c c c c c}
          I & g_1^{-T} & & & & && \\
          \vdots & & \ddots & & & && \\
          I & & & g_m^{-T} & & & &\\ \hline
          H_1 & & & & P_H & & &\\
           &H_1 & & &  &P_H & &\\
           & & \ddots& & & &\ddots & \\
         & & & H_1& & & &P_H
        \end{array}
        \right]
\end{align}

Finally, we construct a $(w+t+1)$-bounded check matrix. Let $P_i = \begin{bmatrix}I&\\g_{i+1}^Tg_{i}^{-T}&I\end{bmatrix}$ for $i = 1,...,k$ so that $P_i\cdot \begin{bmatrix}
    g_i^T&I&\\g_{i+1}^T&&I
\end{bmatrix} =\begin{bmatrix}
    g_i^T&I&\\&g_{i+1}^Tg_{i}^{-T}&I
\end{bmatrix} $, then apply $P_i$ to rows $[ik, (i+2)k-1]$, giving us:
\begin{align}
    H^{(t)}_\mathcal{G} = \left[
        \begin{array}{c c c c c c c c}
          g_1^T & I & & & & && \\
           & g^T_2g_1^{-T}  &I & & & & & \\
           & & \ddots & & & & & \\
           & & g_{m-1}^Tg_m^{-T}& I & & & &\\ \hline
          H_1 & & & & P_H & & &\\
           &H_1 & & &  &P_H & &\\
           & & \ddots& & & &\ddots & \\
         & & & H_1& & & &P_H
        \end{array}
        \right]
\end{align}

%\begin{align}
%    H^{(4)}_{\mathcal{G}} &= P_cH^{(3)}_\mathcal{G}\sigma_c\\
%    &= \left[
%        \begin{array}{c c c c c c c c}
%          g_1^T|0 &I|0 & & &   \\
%           & g^T_2g_1^{-T}|0  & I|0& &   \\
%           & &\ddots & &   \\
%           & && g_{m-1}^Tg_m^{-T}|0&I|0 &  \\ \hline
%          H & & & &   \\
%           &H & & &    \\
%           & &\ddots & &  \\
%         & & &H&  \\
%         & & & &H  
%        \end{array}
%        \right]
%\end{align}
Which is $(w+t+1)$-bounded, since $g_{i+1}g_{i}^{-1}\in \mathcal{G}$ is $t$-bounded. 
\end{proof}
%Clearly, $H_{\mathcal{G}}^{(4)}$ has row weight at most $t +1$ and column weights at most $w+t+1$. Finally, $H_{\mathcal{G}} =H_{\mathcal{G}}^{(4)}\times \sigma$ for some column permutation is a valid check matrix that is also $(w+t+1)$-bounded, as desired.\end{proof}

%\begin{corollary}[Local Gates]\label{cor:G-t-local}
%Let us be given a code $\mathcal C(G,H)$ and $\mathcal{G}_t  \cong \gln{t}$ denote all operations on $t$ bits. There exists $[[2n-2, k, 2d-2]]$ code $\mathcal{C}'(G', H')$ s.t. $G_{t}\in \mathcal{L}_{G'}(\mathcal{C}')$. 
%\end{corollary}

We can often find tighter bounds for specific classes of operations. We list a few here for example.
\begin{corollary}[Expansion: CNOT Fan-out]\label{cor:fan-in}
Let us be given a $[n,k,d]$ code $\mathcal C(G,H)$ and let $H$ be $w$-bounded. Let $g_{{f}}$ denote a fan-out operations on up to $w+1$ bits. There exists a $[2n,k,2d]$ code $\mathcal{C}'(G', H')$ s.t. $g_{{f}}\in \mathcal{L}_{G'}(\mathcal{C}')$, where  $H_\mathcal{G}$ is $(w+2)$-bounded.
\end{corollary}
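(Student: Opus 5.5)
Since a CNOT fan-out is an involution, the plan is to specialize Theorem~\ref{thm:auto-expansion} and Theorem~\ref{thm:expansion-ldpc} to the two-element group $\mathcal{G}=\{I,g_f\}$ and then sharpen the weight count.

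\textbf{Parameters.} A fan-out $g_f$ with control $c$ and targets $T$ is self-inverse: applying it twice adds $e_c$ to each target twice, which cancels. So $\mathcal{G}=\{I,g_f\}$ is a group with $m=2$. Theorem~\ref{thm:auto-expansion} then takes $G_{\mathcal{G}}=[G\,|\,g_fG]$, which gives $n'=2n$ and $k$ unchanged. The distance is exactly $2d$ rather than merely $\le 2d$: for any message $u$ the codeword is $(uG, ug_fG)$, and $ug_f\neq 0$ whenever $u\ne 0$ because $g_f$ is invertible. Each half therefore has weight at least $d$, so $d'\ge 2d$. Equality holds by choosing $u$ fixed by $g_f$ with $uG$ of minimum weight, or we simply record $d'\geq 2d$ as in the statement. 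We deliberately do not puncture repeated columns here, so that the stated $[2n,k,2d]$ parameters hold exactly.

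\textbf{Check matrix.} A naive check matrix for $[G\,|\,g_fG]$ is
\[
H'=\begin{bmatrix} H & 0\\ 0 & H\\ E & F\end{bmatrix},
\]
where the last block ties the two copies together by enforcing $y=xg_f$ in the message space. The issue is making those coupling rows sparse. Work in the canonical basis $G=[I|M]$, as in the proof of Theorem~\ref{thm:expansion-ldpc}. The message is read off the systematic coordinates, so the constraint $y_{\mathrm{sys}}=x_{\mathrm{sys}}g_f$ can be written coordinatewise. Since $g_f=I+e_c^{T}\mathbf{1}_T$ (up to transpose convention), this gives $y_j=x_j$ for $j\notin\{c\}$ and $y_c=x_c+\sum_{t\in T}x_t$, or the analogous form under the other convention. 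Every coordinate equation except one then has weight $2$, and the one remaining equation has weight $|T|+2\le w+3$.

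\textbf{Main obstacle.} The exceptional fan-out row is the hard part, since it must be brought down to weight $w+2$. I would try to absorb it using the $w$-bounded check of the original code. The natural choice is the first block row of $H^{(m)}_{\mathcal{G}}$ in Theorem~\ref{thm:expansion-ldpc} with $m=2$: the rows $[\,I\,|\,g^{-T}\,]$ carry one identity entry plus the support of a row of $g_f^{-T}=g_f^{T}$. Rows of $g_f^{T}$ have weight at most $2$, except the single row of weight $|T|+1\le w+2$; with the identity entry this reaches $w+3$. To save the last unit, I would choose the fan-out so that it aligns with the support of some check row $h$ of $H$ (which is where the hypothesis of ``up to $w+1$ bits'' comes from). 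Adding the corresponding row of the $H_1|P_H$ block then cancels the overlapping identity entry, leaving weight at most $w+2$. The bottom blocks $[H_1\,|\,P_H]$ are already $w$-bounded, and the two-sparse coupling rows are trivially fine, so the whole matrix is $(w+2)$-bounded.

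If the cancellation argument fails, the fallback is to split the heavy row with one auxiliary parity bit. That would break the exact $2n$ length count, so I would first try hard to make the cancellation work.
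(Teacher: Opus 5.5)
Your construction is the right one, and it is the specialization the paper intends (the corollary is stated without a separate proof). With $\mathcal{G}=\{I,g_f\}$, you take $G'=[G\,|\,g_fG]$ and a check matrix made of two copies of $H$ plus $k$ coupling rows enforcing $y_{\mathrm{sys}}=x_{\mathrm{sys}}g_f$. This is essentially the matrix $H^{(m)}_{\mathcal{G}}$ of Theorem~\ref{thm:expansion-ldpc} with the single non-identity element $g_f$: its upper rows are $e_i$ together with row $i$ of $g_f^{-T}$, i.e.\ column $i$ of $g_f$. Two small corrections. First, fix the systematic basis \emph{before} forming $[G\,|\,g_fG]$. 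The statement only asks for some $G'$, so this is allowed, and it matters because conjugating $g_f$ by the basis change $P_G$ would in general destroy the sparsity of the coupling rows. Second, drop the claim that the distance is exactly $2d$. A minimum-weight codeword whose message is fixed by $g_f$ need not exist, and $d'\ge 2d$ is all that $[2n,k,2d]$ means under the paper's definition.

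The genuine gap is your ``main obstacle'', which comes from misreading the hypothesis. A fan-out ``on up to $w+1$ bits'' is one control plus at most $w$ targets, so $|T|\le w$. The single heavy coupling row $x_c+\sum_{t\in T}x_t+y_c$ then has weight $|T|+2\le w+2$. Together with the $w$-bounded copies of $H$ and the weight-$2$ coupling rows, your naive $H'$ is already $(w+2)$-bounded, and the proof ends there.

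Under your reading $|T|\le w+1$, the proposed cancellation cannot close the gap. The fan-out $g_f$ is given, so you cannot choose it to align with a check. Adding a row $h$ of $[H_1\,|\,P_H]$ that meets the heavy row only in $x_c$ changes its weight from $w+3$ to $w+1+\wt{h}$, which is no improvement once $\wt{h}\ge 2$. In general the sum is lighter only if $h$ meets the heavy row in more than half of its own support, and nothing guarantees that. In fact nothing can rescue it. Take $k\ge 4$ disjoint repetition codes ($w=2$) and $|T|=3$. Every dual codeword of $[G\,|\,g_fG]$ with odd parity on the $c$-th block of the second copy has weight at least $|T|+2=5$, and every check matrix must contain such a row. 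So no check matrix of that code is $(w+2)$-bounded. The $w+2$ bound comes exactly from the hypothesis $|T|+1\le w+1$, not from row reduction, and your auxiliary-bit fallback would break the length $2n$.
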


%There are several more optimziations when the operators $g$ are specified. For example, note the generator matrix in Cor.~\ref{cor:G-t-local} has length $2n-k+t<2n$, and $P_\mathcal{G}H''_\mathcal{G}$ has weight $t$ only if there exists a row with all $1$s. This is never the case, for instance, in the case of a single CNOT, since $(I+CNOT)\cdot(b_1,b_2) = (0,b_2)$ have weight at most $1$ even though it has order $2$. Furthermore, assuming logical bit flip is free, the cost of a $g$ or $X_1gX_1$ is the same, while $g$ preserves all columns with $b_1 = 0$ and $X_1gX_1$ preserves all columns with $b_1 = 1$; at least one of them will preserve at least half of the columns, which further reduce the number of physical qubits used. It follows that the overhead for implementing a single CNOT is 

\begin{corollary}[Expansion: Local Commuting CNOTs]
Let us be given a $[n,k,d]$ code $\mathcal C(G,H)$ with $w$-bounded $H$ and a set of $m$ commuting CNOT gates $\mathcal{G} = \{g_1,...,g_m\}$ on $t$ bits. Then, there exists $[nm,k, md]$ code $\mathcal{C_G}(G',H')$ with $(w+t+1)$-bounded $H_\mathcal G$ s.t. any circuits involving only $\mathcal{G}$ can be performed virtually.
\end{corollary}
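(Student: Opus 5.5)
The plan is to obtain this as a specialization of Theorem~\ref{thm:auto-expansion} and Theorem~\ref{thm:expansion-ldpc}, applied to the group generated by the given gates. Let $\mathcal{G}' = \rs{g_1,\dots,g_m}\leq\gln{k}$. Each $g_i$ is an involution and the $g_i$ pairwise commute, so $\mathcal{G}'$ is an elementary abelian $2$-group. Every element is a product $g_S=\prod_{i\in S}g_i$, and each $g_S$ acts nontrivially only on the $t$ bits touched by the CNOTs. A "circuit involving only $\mathcal{G}$" is exactly some $g_S\in\mathcal{G}'$. So it suffices to embed $\mathcal{G}'$ into $\mathcal{L}_{G'}(\mathcal{C}')$. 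By Proposition~\ref{prop:aut-cond}, each $g_S$ then acts as a coordinate permutation $\sigma_S$ with $g_SG'=G'\sigma_S$, which is a virtual relabeling.

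\textbf{Step 1 (construction and parameters).} Following the proof of Theorem~\ref{thm:auto-expansion}, I would write $G=[I|M]$ and take $G'=[I|M_\mathcal{O}]$, where $\mathcal{O}=\mathcal{G}'\cdot\cols{G}$ is the union of orbits of the columns, with duplicates punctured. I would then bound $|\orbg{c}|$ for each column $c$. Write $g_i=I+N_i$ with $N_i$ a single elementary matrix, so that $g_Sc$ is $c$ plus a sum of control bits of $c$ placed on targets. Two distinct commuting CNOTs that share a bit must share the control or share the target. I would use this to argue that the images $g_Sc$ take at most $m$ distinct values per column, by grouping the $g_i$ by the control bits $c$ actually activates. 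This gives $n'\leq nm$. The distance bound $d'\leq md$ then follows exactly as in Theorem~\ref{thm:auto-expansion}, since each codeword of $\mathcal{C}'$ is a concatenation of at most $m$ permuted images of a codeword of $\mathcal{C}$. The dimension stays $k$ because the identity block $I$ is kept.

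\textbf{Step 2 (sparsity).} I would apply the $H^{(t)}_\mathcal{G}$ construction from the proof of Theorem~\ref{thm:expansion-ldpc}. The lower blocks $[H_1|P_H]$ are $w$-bounded as before. The upper blocks are $[\,g_{i+1}^Tg_i^{-T}\;|\;I\,]$ with $g_{i+1}g_i^{-1}\in\mathcal{G}'$. Every element of $\mathcal{G}'$ is a product of commuting CNOTs supported on $t$ bits, so its transpose has rows of weight at most $t$. The combined matrix is therefore $(w+t+1)$-bounded. To sharpen this, I would order the used elements along a Gray code, so that consecutive quotients $g_{i+1}g_i^{-1}$ are single generators of row weight at most $2$. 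That is a bonus, not needed for the stated bound.

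\textbf{Main obstacle.} The hard part is the counting in Step~1. The group $\mathcal{G}'$ can have up to $2^m$ elements, so the crude bound $|\mathcal{G}'|\cdot n$ from Theorem~\ref{thm:auto-expansion} is far too weak. The claim $n'\leq nm$ must instead come from orbit sizes of individual columns under commuting CNOTs. I would first try to show that the nilpotent parts satisfy $N_iN_j=0$ and that each $N_ic$ is supported on a single target. Then $g_Sc=c+\sum_{i\in S}N_ic$, and I would bound the number of distinct such sums by exploiting the shared-control/shared-target structure. If this yields only a bound like $n\cdot(m+1)$, I would absorb the identity coset into the original columns, which are already present in $G$, to recover $nm$.
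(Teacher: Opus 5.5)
\emph{Verdict:} the proof has a genuine gap. The key counting claim in Step~1 is false, and under your reading of $\mathcal G$ the obstacle you flagged cannot be overcome at all.

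Your own identity $g_Sc=c+\sum_{i\in S}N_ic$, with $N_ic=c_{\mathrm{ctrl}(i)}e_{\mathrm{tgt}(i)}$, shows what the orbit of a column $c$ under $\mathcal G'=\rs{g_1,\dots,g_m}$ really is. It is the whole coset $c+\operatorname{span}\{e_{\mathrm{tgt}(i)} : c_{\mathrm{ctrl}(i)}=1\}$. Its size is $2^r$, where $r$ is the number of distinct targets among gates whose control is set in $c$. Grouping by activated controls does not bring this down to $m$. Concretely, take $k=2m$, the disjoint CNOTs $2i-1\to 2i$, and the $[2m+1,2m,2]$ code $G=[I_{2m}\,|\,\mathbf 1]$. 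The orbit of $\mathbf 1$ alone has $2^m$ elements, and $|\mathcal O|=2^m+3m$, which exceeds $nm=2m^2+m$ for every $m\ge 6$. Any generator matrix whose columns contain $\cols{G}$ and are $\mathcal G'$-invariant must contain all of $\mathcal O$. So no sharper count, and no absorbing of the identity coset, gives $n'\le nm$ for the generated group.

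Separately, your two steps describe different codes. The $H^{(t)}_{\mathcal G}$ of Theorem~\ref{thm:expansion-ldpc} is a check matrix for the unpunctured matrix with one block per group element. It is not a check matrix for the punctured $[I|M_{\mathcal O}]$ of Step~1, and puncturing changes the dual code with no control on row weights.

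The corollary has to be read with $m=|\mathcal G|$. That is, $\mathcal G=\{g_1,\dots,g_m\}$ is itself a group of commuting-CNOT circuits on $t$ bits. This is exactly the hypothesis $|\mathcal G|=m$ of Theorems~\ref{thm:auto-expansion} and~\ref{thm:expansion-ldpc}, and a circuit built from $\mathcal G$ is again an element of $\mathcal G$. The corollary is then a direct specialization of those two theorems, with no orbit counting:
\begin{itemize}
\item The unpunctured $G_{\mathcal G}=[g_1G|\cdots|g_mG]$ has length $nm$ and rank $k$. Its distance is at least $md$, because each block $(xg_i)G$ of a nonzero codeword is a nonzero codeword of $\mathcal C$.
\item Left multiplication by any $g\in\mathcal G$ permutes the blocks, so Proposition~\ref{prop:aut-cond} gives the virtual implementation.
\item The $H^{(t)}_{\mathcal G}$ construction is $(w+t+1)$-bounded, because every group element (in particular each $g_{i+1}g_i^{-1}$) is a CNOT circuit on $t$ bits and hence $t$-bounded. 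This is the part of your Step~2 that does carry over.
\end{itemize}
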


%We also note that choosing a suitable $G$ to begin with can often further reduce overhead. We can choose  the space-overhead upperbounds shown is often only tight in the case of randomly generated codes, in which case the distance upper bounds will also saturate. %We explore these trade-offs more in section~\ref{sec:apps}.

\subsection{Converting Automorphisms}\label{ssec:modification}
A code automorphism is also a matrix automorphism if we find $H\sigma = \rho H$ for some $\rho\in \mathbb P_k$. The matrix automorphism condition can be equivelently stated as follows:

\begin{proposition}[Matrix Automorphism Condition]
    Let us be given a classical linear code $\mathcal{C}(G,H)$ and $\Sigma = \{\sigma_1,...,\sigma_m\} \leq\aut{\mathcal{C}}$. Then, elements of $\Sigma$ are also matrix automorphisms iff
    \begin{equation}
        \orbs{r} = \{r\cdot \sigma\ | \sigma\in \Sigma\} \subseteq \rows{H}\ \ \forall r\in \rows{H}.
    \end{equation}
     i.e. the rows of $H$ is $\Sigma$-invariant under right action.
\end{proposition}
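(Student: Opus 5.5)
The statement is an ``iff'' between a matrix-level condition, $H\sigma = \rho H$ for some row permutation $\rho$ and every $\sigma\in\Sigma$, and a set-level condition on $\rows{H}$. I would prove it exactly as Proposition~\ref{prop:aut-cond} was proved, by reading the matrix identity row by row.

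\textbf{Key observation.} For any $\sigma$, the rows of $H\sigma$ are precisely $\{r\cdot\sigma : r\in\rows{H}\}$, since right multiplication acts on each row independently. In the same way, left multiplication by a permutation $\rho$ only reorders rows, so $\rows{\rho H}=\rows{H}$ as multisets.

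\textbf{Forward direction.} Suppose every $\sigma\in\Sigma$ is a matrix automorphism, so $H\sigma=\rho_\sigma H$. Then for any $r\in\rows{H}$, the row $r\cdot\sigma$ is a row of $H\sigma=\rho_\sigma H$, and hence lies in $\rows{H}$. Ranging over $\sigma$ gives $\orbs{r}\subseteq\rows{H}$.

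\textbf{Reverse direction.} Suppose $\orbs{r}\subseteq\rows{H}$ for all $r$. Then the map $r\mapsto r\sigma$ sends $\rows{H}$ into itself. It is injective because $\sigma$ is invertible, and $\rows{H}$ is finite, so the map is a bijection of $\rows{H}$. A bijection between the rows of $H\sigma$ and the rows of $H$ is exactly a row permutation $\rho$ with $H\sigma=\rho H$.

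\textbf{Main obstacle: repeated rows.} If $H$ has duplicate rows, the set-level inclusion does not automatically give a multiset bijection. I would handle this the way the paper handles duplicate columns: either assume $\rows{H}$ is a set (remove duplicates, then restore them afterwards), or note that $r\mapsto r\sigma$ is a bijection on the vector space. Under that bijection, the number of rows of $H$ equal to $v$ matches the number of rows of $H\sigma$ equal to $v\sigma$. Invariance of the support set then makes the multiplicity function constant along orbits: $r\mapsto r\sigma$ is a permutation of the finite support, and multiplicities are carried along it, so they agree on each cycle.

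\textbf{Remaining remarks.} The hypothesis $\Sigma\leq\aut{\mathcal{C}}$ is only needed so that the statement sits in context; the equivalence itself is purely combinatorial. Closure of $\Sigma$ under composition also means it suffices to check generators, though I would not need this.
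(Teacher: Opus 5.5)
Your forward and reverse arguments are correct when $H$ has distinct rows, and they match what the paper intends. The paper gives no separate proof: it presents the statement as the row-by-row reading of $H\sigma=\rho H$, parallel to the one-line justification $\cols{gG}=\cols{G\sigma}=\cols{G}$ of Proposition~\ref{prop:aut-cond}. Your injectivity-plus-finiteness step is exactly the detail the paper leaves implicit.

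\textbf{The step that fails} is your second way of handling repeated rows. Let $\mu_M(v)$ count the rows of $M$ equal to $v$. Transporting multiplicities along $r\mapsto r\sigma$ gives $\mu_{H\sigma}(v\sigma)=\mu_H(v)$. But $H\sigma=\rho H$ requires $\mu_H(v\sigma)=\mu_H(v)$, and invariance of the underlying set of rows says nothing about that.

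Here is a concrete counterexample. Take the $[3,1,3]$ repetition code with check matrix $H$ whose rows are $(1,1,0),(1,1,0),(0,1,1)$. Let $\Sigma=\{I,\sigma\}$, where $\sigma$ transposes coordinates $1$ and $3$.
\begin{itemize}
\item $\sigma\in\aut{\mathcal{C}}$, since every permutation preserves the repetition code.
\item $\rows{H}=\{(1,1,0),(0,1,1)\}$ is $\Sigma$-invariant.
\item Yet $H\sigma$ has rows $(0,1,1),(0,1,1),(1,1,0)$, which is not a row permutation of $H$.
\end{itemize}
So with repeated rows the reverse direction (set invariance $\Rightarrow$ matrix automorphism) is false, not merely harder to prove.

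You must therefore use your first option: assume the rows of $H$ are distinct. The paper does this tacitly, as in ``remove non-unique rows'' in the proof of Theorem~\ref{thm:auto-conversion}. Alternatively, strengthen the hypothesis to require $\mu_H$ to be constant on each $\Sigma$-orbit. Likewise, ``restore the duplicates afterwards'' is only safe under that same constancy condition.
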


\cite{berthusen2025automorphismgadgetshomologicalproduct} shows that any classical linear code $\mathcal C$ admits an exponentially large check matrix where all automorphisms are also matrix automorphisms. Indeed, there exist codes such as the simplex code that saturate this bound. However, in general, the size of the check matrix can be bounded by the size of automorphisms we wish to convert to matrix automorphisms:

\begin{theorem}[Matrix Automorphism Conversion]\label{thm:auto-conversion}
    Let us be given a code $\mathcal{C}(G,H)$ and $\Sigma = \{\sigma_1,...,\sigma_m\} \leq \mathbb P_{k}$ and $\Sigma\leq\aut{\mathcal{C}}$. There exists $H_\Sigma\in \mathbb F_2^{n_r\times n}$ with $n_r\leq (n-k)\times m$ s.t. $\Sigma$ is a group of matrix automorphism on $H_\Sigma$. Furthermore, if $H$ is $w$-bounded, $H_\Sigma$ is $wm$-bounded.
\end{theorem}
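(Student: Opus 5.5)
The plan is to symmetrize the check matrix over the group, in the same spirit as the orbit construction in the proof of Theorem~\ref{thm:auto-expansion}, but acting on rows from the right rather than on columns from the left. Define
\begin{equation*}
H_\Sigma = \begin{bmatrix} H\sigma_1 \\ H\sigma_2 \\ \vdots \\ H\sigma_m \end{bmatrix},
\end{equation*}
and then delete duplicate rows. This matrix has at most $(n-k)\times m$ rows, taking $H$ with $n-k$ independent rows. Everything then rests on the Matrix Automorphism Condition: each $\sigma\in\Sigma$ is a matrix automorphism of $H_\Sigma$ iff $\rows{H_\Sigma}$ is $\Sigma$-invariant under right action.

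The argument has three steps, carried out in this order.

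\textbf{Validity.} First I check that $H_\Sigma$ is still a check matrix for $\mathcal{C}$. By the Automorphism of Checks fact, each $\sigma_i\in\aut{\mathcal C}$ satisfies $H\sigma_i = h_iH$ with $h_i$ invertible. So every block $H\sigma_i$ has the same rowspace as $H$, and hence $\ns{H_\Sigma}=\ns{H}=\mathcal C$. Including the identity (as $\Sigma$ is a group) guarantees that $H$ itself sits inside $H_\Sigma$.

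\textbf{Invariance.} Take a row $r\sigma_i$ of $H_\Sigma$, where $r\in\rows{H}$, and any $\sigma\in\Sigma$. Then $(r\sigma_i)\sigma = r(\sigma_i\sigma)$, and $\sigma_i\sigma\in\Sigma$ by closure, so this is again a row of $H_\Sigma$. Since right multiplication by $\sigma$ is a bijection on the finite set $\rows{H_\Sigma}$, it acts as a row permutation $\rho$ with $H_\Sigma\sigma=\rho H_\Sigma$. This is exactly the matrix automorphism condition.

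\textbf{Sparsity.} Permuting columns preserves row weight, so each individual row of $H_\Sigma$ has weight at most $w$. The stated $wm$ bound is then immediate, whether "bounded" is read per row or accounts for column weight. On the column side, each column of $H\sigma_i$ has the weight of some column of $H$, and stacking $m$ blocks multiplies it by at most $m$.

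The main obstacle I expect is \textbf{validity}, specifically justifying that $H\sigma_i$ has the same rowspace as $H$. This needs the convention that $\sigma$ acts on coordinates consistently with $G\sigma=gG$, and may require using $\sigma^T$ instead of $\sigma$ per the derivation in that fact. Once this is fixed, invariance and sparsity are routine bookkeeping.
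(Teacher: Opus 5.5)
Your proposal is correct and follows the paper's proof. The paper also stacks the blocks $H\sigma_1,\dots,H\sigma_m$, deletes repeated rows, and obtains the row permutation $\rho$ from closure of $\Sigma$, phrased as invariance of the right-orbit union $\mathcal{O}=\{r\sigma : r\in\rows{H},\ \sigma\in\Sigma\}$. Your validity and sparsity checks, which the paper leaves implicit, are sound, and the $\sigma$ versus $\sigma^T$ worry disappears because $\sigma^T=\sigma^{-1}\in\Sigma$.
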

\begin{proof}
    We can take $H_\Sigma = [(H\sigma_1)^T|\cdots |(H\sigma_m)^T]^T$ to obtain a check matrix with  $n_r= (n-k)\times m$, and remove non-unique rows.
    
    We can bound the number of rows removed similar to the proof of Theorem~\ref{thm:auto-expansion}. The smallest $\Sigma$-invariant set containing $\rows{H}$ is the right orbit union
    \begin{equation}
        \mathcal O = \bigcup_{r\in \rows{H}}\orbs{r} = \{r\cdot\sigma\ |\ r\in \rows{H}, \sigma\in\Sigma\},
    \end{equation}
    which gives $H_\mathcal{O}$ constructed from stacking rows in $\mathcal{O}$. For each $r\in \mathcal{O}$, we have $r =\orbs{r'}$ for some $r'\in \rows{H}$, hence for any $\sigma\in\Sigma$, $r\cdot\sigma\in\orbs{r'}\in\mathcal{O} = \rows{H_\mathcal{O}}$, and it follows $H_\mathcal{O}\sigma = \rho H_\mathcal{O}$ for all $\sigma\in \Sigma$.
\end{proof}

The number of checks can be improved if the automorphism group was obtained through code expansion, and LDPC properties can still be maintained if both the size and the sparsity of prescribed automorphism is bounded :

\begin{theorem}[Expanded Matrix Automorphism Conversion]
Let us be given $\mathcal{C}(G,H)$ and $\mathcal{G} \leq \gln{k}$. Suppose $H$ is $w$-bounded, $g$ is $t$-bounded for all $g\in \mathcal{G}$ and $|\mathcal{G}| = m$. Then, there exists a family of $[[nm, k, dm]]$ codes $\mathcal{C}_\mathcal{G}(G_\mathcal{G},H_\mathcal{G})$ with a $(w +mt)$-bounded check matrix $H_\mathcal{G}$ using $(m-1)mk$ additional checks such that  $g\in \mathcal{G}$ is a matrix automorphism for all $g\in \mathcal{G}$.
\end{theorem}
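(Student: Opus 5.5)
We start from the expanded code of Theorem~\ref{thm:auto-expansion} with the concrete generator used in the proof of Theorem~\ref{thm:expansion-ldpc}. After the Gaussian elimination there, $G = [I|M]$ and $H = P_H[M^T|I]$ is $w$-bounded. The unpunctured expansion is
\begin{equation*}
G_\mathcal{G} = [g_1G|\cdots|g_mG],
\end{equation*}
which has length $nm$ and distance $dm$, with one block of $n$ coordinates for each $g_i\in\mathcal{G}$. Left multiplication by $g\in\mathcal{G}$ sends block $i$ to block $j$, where $g_j = gg_i$. So the automorphism $\sigma_g$ is the block permutation induced by the regular action of $\mathcal{G}$ on itself, and it acts as the identity inside each block. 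This block-permutation form is what we exploit: a check matrix built blockwise from a fixed template is carried to another such matrix by $\sigma_g$.

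Next we write down checks. Codewords are $(xg_1G,\dots,xg_mG)$. The first kind of check enforces that each block lies in $\mathcal{C}$: use $I_m\otimes H$, which is $w$-bounded and has $m(n-k)$ rows. Permuting blocks only permutes these rows, so these checks already satisfy the matrix automorphism condition. The second kind of check ties the blocks together. For $i\neq j$, the information part of block $j$ equals $g_jg_i^{-1}$ applied to that of block $i$. On the first $k$ coordinates of each block, this gives $k$ checks $[\,\cdots\, (g_jg_i^{-1})^T\text{-rows on block } i \,\cdots\, I \text{ on block } j\,\cdots]$, up to the transpose convention. There is one such family for every ordered pair $(i,j)$ with $i\neq j$, giving $m(m-1)k$ additional checks, which matches the count in the statement.

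We then verify invariance. Under $\sigma_g$ the pair $(i,j)$ goes to $(i',j')$, where $g_{i'} = gg_i$ and $g_{j'} = gg_j$. Since $g_{j'}g_{i'}^{-1} = g_jg_i^{-1}$, the family of checks for $(i,j)$ is mapped exactly onto the family for $(i',j')$. Hence the rows of $H_\mathcal{G}$ are $\Sigma$-invariant under right action, and the Matrix Automorphism Condition proposition gives that every $g\in\mathcal{G}$ is a matrix automorphism. We also need full rank: the checks $I_m\otimes H$ together with the pair checks for a spanning tree of pairs already cut the space down to dimension $k$, so the remaining checks are redundant but consistent. The code is therefore still $\mathcal{C}_\mathcal{G}$.

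The main obstacle is the weight bound. Rows of the pair checks have weight at most $t+1$. Column weights, however, accumulate. A coordinate in the information part of block $i$ lies in up to $w$ rows from $H$. It also lies in the pair checks for $(i,j)$ and $(j,i)$ across all $j$, each contributing at most about $t$ entries. Over the at most $m$ partner blocks this gives the claimed $(w+mt)$ bound. To get this cleanly I would first try to bound row and column supports of $g_jg_i^{-1}$ by $t$, using that $\mathcal{G}$ is closed under products and inverses, so every $g_jg_i^{-1}$ is itself $t$-bounded. If the $+1$ from the identity entries pushes the count over $w+mt$, I would absorb it by noting that the pair $(i,i)$ is excluded, which leaves the slack needed for that extra entry.
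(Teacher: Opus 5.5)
Your construction is essentially the paper's. The block-diagonal checks $I_m\otimes H$ (the paper's $H_{lower}$) commute with the block permutation, and closing the tie checks under $\mathcal{G}$ gives the $m(m-1)k$ extra rows. Your all-ordered-pairs family is exactly the orbit of the paper's ties to the identity block. Two steps fail as written, though.

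\emph{First, the invariance identity.} With your convention a codeword is $(xg_1G,\dots,xg_mG)$, so the information parts satisfy $xg_j=(xg_i)\,g_i^{-1}g_j$. The matrix tying block $i$ to block $j$ is therefore $g_i^{-1}g_j$, not $g_jg_i^{-1}$. With $g_{i'}=gg_i$ and $g_{j'}=gg_j$ you get $g_{j'}g_{i'}^{-1}=g\,(g_jg_i^{-1})\,g^{-1}$, which differs from $g_jg_i^{-1}$ whenever $\mathcal{G}$ is nonabelian. So the claimed equality is false, and so are your tie rows. The fix:
\begin{itemize}
\item index each family by $A_{ij}=g_i^{-1}g_j$;
\item take its rows to be those of $[A_{ij}^T\,|\,I]$ on the information parts of blocks $i,j$;
\item use $(gg_i)^{-1}(gg_j)=g_i^{-1}g_j$.
\end{itemize}
With this change, both invariance and your spanning-tree rank argument go through.

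\emph{Second, the weight count does not give $w+mt$.} Take an information coordinate of block $i$. Each family $(i,j)$ contributes at most $t$ ones to its column (a row of $A_{ij}$). Each family $(j,i)$ contributes exactly $1$, through its identity part. So the column weight of your matrix is $w+(m-1)(t+1)$. Excluding $(i,i)$ frees only $t$, whereas the identity entries cost $m-1$, so your absorption step works only when $t\ge m-1$. For a group of permutation matrices ($t=1$) with $m\ge 3$, the tie checks alone put $2(m-1)>mt$ ones in every information column.

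There is also a row-versus-column issue. A tie row has weight $t+1$ only if the \emph{column} weights of $A_{ij}$ are at most $t$. Closure under products and inverses controls only row weights, which is all the paper's definition of boundedness bounds. For example, $\{I,g\}$ with $g$ an involutive CNOT fan-in onto one bit has rows of weight at most $2$ and a column of weight $k$. You therefore have to read ``$t$-bounded'' and ``$w$-bounded'' as bounding both rows and columns, as the paper tacitly does. The paper's sketch only asserts an ``$(m-1)t$ contribution to column weight'' and does not account for the identity entries either, so it will not close this gap for you.

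A repair within your framework: keep only the families with $A_{ij}$ in a generating set $S$ of $\mathcal{G}$, where one can take $|S|\le\lfloor\log_2 m\rfloor$.
\begin{itemize}
\item The Cayley graph of $(\mathcal{G},S)$ is connected, so these checks still cut the space down to $\mathcal{C}_\mathcal{G}$.
\item The set of families is still invariant under the block permutations.
\item It uses $|S|mk\le(m-1)mk$ extra rows.
\item Every column has weight at most $w+|S|(t+1)\le w+2t\lfloor\log_2 m\rfloor\le w+mt$.
\end{itemize}
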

\begin{proof}
    Let $G_\mathcal{G} = \begin{bmatrix}I&g_1\cdots g_m&M&g_1M\cdots g_mM\end{bmatrix}$ as before. Left action by $g\in\mathcal{G}$ induces a natural column permutation $\sigma_c = I_{(m+1)k}\otimes \sigma_G\oplus I_{(m+1)(n-k)}\otimes \sigma_G$ where $\sigma_g(i) = j$ if $gg_i = g_j$. Furthermore, we note that both check matrices given in Theorem~\ref{thm:expansion-ldpc} contain $H_{lower} = I_{(m+1)k}\otimes H_1\oplus I_{(m+1)(n-k)}\otimes P_H$ which commutes with $\sigma_c$. It suffice to augment checks that ensure the upper half is also preserved, yielding the $(m-1)mk$ additional checks and $(m-1)t$ contribution to column weight.
\end{proof}

Similar to before, when we have more information about $\Sigma$ (or the related $\mathcal{G}$), we can come up with sharper bounds:

\begin{corollary}[Conversion: Local CNOTs]
 Let us be given a code $\mathcal C(G,H)$ and let $(g, \sigma)\in \mathcal{A}(G)$ be an order-2 CNOT circuit on $t$ bits where $gG\sigma = G$. There exists a valid check matrix $H'$ using $t$ additional checks, and a row permutation $\rho\in P_{n+t}$, such that $\rho H'\sigma = H'$. Furthermore, if $H$ is $w$ bounded, $H'$ is is $w +t+1$-bounded.
\end{corollary}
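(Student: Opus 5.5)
We need a plan for the Corollary (Conversion: Local CNOTs). We have an order-2 CNOT circuit $g$ on $t$ bits with $gG\sigma=G$, and we want $H'$ with $t$ additional checks and a row permutation $\rho$ such that $\rho H'\sigma = H'$.

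The plan is to specialize the proof of the Expanded Matrix Automorphism Conversion to $\mathcal G=\{I,g\}$. By Theorem~\ref{thm:expansion-ldpc}, this reduces the upper block to a single layer coupling the identity copy with the $g$ copy. Since $g^2=I$, the induced permutation $\sigma$ is an involution. It swaps the two copies of each column orbit that $g$ moves, namely the columns $c$ with $gc\neq c$, and fixes the rest. Only columns whose restriction to the $t$ active bits changes under $g$ are moved; all other coordinates are fixed pointwise.

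The construction then proceeds in three steps.

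First, write $G=[I|M]$ after a basis change, as in the proof of Theorem~\ref{thm:expansion-ldpc}. The expanded code has the form $G'=[I\,|\,g\,|\,M\,|\,gM]$ with duplicate columns punctured. Its lower block of checks is $I\otimes H_1$ together with $I\otimes P_H$. This block is $w$-bounded and is mapped to itself row-wise by $\sigma$, since $\sigma$ just exchanges the two copies and each copy carries the same $H_1, P_H$ pattern.

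Second, treat the upper block. Because $g$ acts nontrivially only on $t$ logical bits, the linking checks $[g^T|I]$ of $H^{(t)}_\mathcal G$ are trivial, i.e. equal to $e_i+e_i$, except in the $t$ rows indexed by the active bits. We therefore keep exactly $t$ nontrivial linking checks, one per active bit $i$, each of the form $g^Te_i$ on the first copy plus $e_i$ on the second copy. This gives the claimed count of $t$ additional checks.

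Third, verify the permutation. Applying $\sigma$ to such a check yields $e_i$ on the first copy plus $g^Te_i$ on the second. Using $g^{-T}=g^T$ (order 2), this equals the same check multiplied by the invertible $g^T$ restricted to the active block. This is the step I expect to be the main obstacle: obtaining a genuine row permutation rather than merely an invertible row operation. To handle it, I would rewrite the $t$ linking checks in the symmetric form $H^{(m)}$, i.e. $[I\,|\,g^{-T}]$, which for an involution coincides with $[g^T\,|\,I]$ up to row operations. I would then choose the basis so that $\sigma$ maps the check for bit $i$ to the check for the bit $g$ pairs it with. For a CNOT circuit of order 2, $g^T$ is itself a permutation-like involution on the set consisting of the supports $\{e_i\}$ together with their sums, so one can adjoin the $g$-image rows. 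If the set of checks is not closed under this map, I would fall back on Theorem~\ref{thm:auto-conversion} with $\Sigma=\{I,\sigma\}$, which at most doubles these $t$ rows and stays within the stated bound. Setting $\rho$ to be the resulting pairing of rows (acting as the identity on the lower block) gives $\rho H'\sigma=H'$.

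Finally, bound the weights. Lower rows have weight at most $w$. Each linking row has weight at most $1+t$, since a column of $g^T$ restricted to $t$ bits has at most $t$ ones, plus the single identity entry. A column meets at most $w$ lower checks plus at most $t+1$ linking checks. Hence $H'$ is $(w+t+1)$-bounded.
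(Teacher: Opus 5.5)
The paper states this corollary without proof. Your strategy mirrors the one sketched for the Expanded Matrix Automorphism Conversion theorem: the lower block is invariant under the copy swap, then you fix the upper half. So everything hinges on your third step, and that step is not closed.

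You correctly find that $\sigma$ acts on the linking block $L=[g^T\,|\,I]$ by $L\sigma=g^TL$. This is an invertible row operation, not a row permutation, and none of your remedies turns it into one:
\begin{itemize}
\item Rewriting the block as $[I\,|\,g^{-T}]=g^TL$ changes nothing, since $\sigma$ just exchanges the two forms.
\item The remark that $g^T$ permutes the vectors of the active subspace holds for every invertible map. It says nothing about a particular set of $t$ rows.
\item The fallback to the Matrix Automorphism Conversion theorem with $\Sigma=\{I,\sigma\}$ adjoins the images of the linking rows, which can double them.
\end{itemize}
Already for $g_{auto}$, the linking row for bit $1$ (support $e_1,\dots,e_4,g_{auto}e_1$) is sent to the row with support $e_1,e_4,g_{auto}e_1,g_{auto}e_2,g_{auto}e_3$, which is not a linking row. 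For the involution $ge_1=e_2+e_3$, $ge_2=e_3+e_4$, $ge_3=e_1+e_3+e_4$, $ge_4=e_1+e_2+e_3+e_4$, every linking row is moved, because its image contains at least two non-information coordinates. The fallback then generically produces $2t$ rows. That contradicts the count you are proving, and also your later claim that a column meets at most $t+1$ linking rows.

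The missing idea is to re-base the linking block rather than enlarge it. Because $L\sigma=g^TL$ holds exactly, not just modulo the lower block, any invertible $h$ on the active block gives $(hL)\sigma=(hg^Th^{-1})(hL)$. Since $g^2=I$ forces $g=I+N$ with $N^2=0$, the Jordan blocks of $g^T$ have size at most $2$, and each $2\times 2$ block is conjugate to a transposition. So $h$ can be chosen with $hg^Th^{-1}$ a permutation matrix, and the $t$ rows of $hL$ are then genuinely permuted by $\sigma$.

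Concretely, index rows $[a^Tg^T\,|\,a^T]$ by a $g$-invariant basis $\{a_l,ga_l\}\cup F$ of the active subspace. Take the $a_l$ to be standard vectors $e_j$ whose images $Ne_j$ form a basis of $\mathrm{Im}\,N$, and $F$ a complement of $\mathrm{Im}\,N$ in $\ker N$. This is the same device the paper uses to pass between $g_{SWAP}$ and $g_{auto}$. It fixes the count, but your weight estimate must then be redone. Rows indexed by $e_j$ or $ge_j$ have weight $|ge_j|+1\le t+1$, whereas a row indexed by a non-standard fixed vector $f\in F$ can have weight up to $2|f|$.

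Finally, you never actually use the hypothesis $gG\sigma=G$. Under it, $g\cdot\mathrm{Cols}(G)=\mathrm{Cols}(G)$, so your expansion punctures back to the original code and your ``second copy'' of the lower block is just $H\sigma$. Your $H'$ is then $H\cup H\sigma$ plus (redundant) linking rows, i.e.\ up to $n-k+t$ new checks relative to $H$. Your argument can therefore only support the statement if ``$t$ additional checks'' is counted relative to the doubled block of the expanded code, and you should say so explicitly. Note also that $w$-boundedness in the paper is a row-weight condition, and coordinates shared by both copies after puncturing can meet more than $w$ lower checks, so your column-weight count needs a separate justification.
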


\section{CQLU Logical Operations}\label{appx:cqlu}
Given the matrix automorphism guarantees, as shown in \cite{berthusen2025automorphismgadgetshomologicalproduct}, we can construct Hypergraph product and Homological product codes that inherits the code automorphisms as fault tolerant logical operations implementable via qubit relabeling. In this section, we focus on HGPS codes and study their matrix automorphism groups. 

\subsection{Dirty Shifts}

First, let us construct the basis needed to realize the dirty cyclic shifts as matrix automorphisms.

\begin{proposition}\label{prop:dirty-shift}
    Let $P_i$ be the cyclic shift matrix on $i$ elements. Given the binary simplex code $\mathcal{S}_r$ with a circulant check matrix $H_r$. Then, $P_{2^{r}-1}$ is a matrix automorphism, and exist a basis $G$ for which the corresponding logical action of $P_{2^{r}-1}$ is
    \begin{equation}
        g_{ds} = \begin{bmatrix}
            &&&&&*\\
            1&&&&&*\\
            &1&&&&*\\
            &&\ddots&&&\vdots\\\\
            &&&&1&*
        \end{bmatrix},
    \end{equation}
    which implements a cyclic row permutation of $G$ plus a CNOT-fanout gate. 
\end{proposition}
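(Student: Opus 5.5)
The plan has two parts: first show that $P=P_{2^r-1}$ is a matrix automorphism of the circulant check matrix, then pick a basis of the simplex code in which the induced logical action has the displayed companion-matrix form.

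\textbf{Step 1: matrix automorphism.} By the convention of Table~\ref{tab:simplex}, $H_r=I+x^a+x^r$ is a polynomial in the cyclic shift $x=P_{2^r-1}$, so it commutes with $P$. Hence $H_rP=PH_r$, with $\rho=P$ a row permutation. This is exactly the condition $H\sigma=\rho H$, so $P$ is a matrix automorphism. Since $\ker H_r$ is then $P$-invariant, the same fact gives $P\in\aut{\mathcal S_r}$.

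\textbf{Step 2: the basis.} The code $\ker H_r$ is the cyclic code of length $2^r-1$ whose parity-check polynomial is the reciprocal of the primitive polynomial $h(x)=1+x^a+x^r$. Since $h$ is primitive, the code is the simplex code. It is an $r$-dimensional $P$-invariant subspace on which $P$ acts as multiplication by $x$ modulo a degree-$r$ polynomial $f$ (namely $h$ or its reciprocal). Choose any nonzero codeword $c$ and set $G$ to have rows $c, cP, cP^2,\dots,cP^{r-1}$, using the row action $G\mapsto GP$ and transposing conventions as needed. These rows are linearly independent because the minimal polynomial of $P$ restricted to the code is the irreducible $f$ of degree $r$, so $c$ is a cyclic vector.

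Then $GP$ has rows $cP,\dots,cP^{r-1},cP^r$. The first $r-1$ of these are the rows $2,\dots,r$ of $G$, and the last is $cP^r=\sum_i f_i\,cP^i$, read off from $f(P)c=0$. Therefore $GP=gG$, where $g$ is the companion matrix of $f$. Written with the paper's orientation (subdiagonal ones, last column $*$, or its transpose depending on whether rows shift up or down), this is $g_{ds}$. The ones on the subdiagonal are the cyclic row permutation, and the last column $(f_0,\dots,f_{r-1})^T$ is the fan-out. Since $f_0=1$ for a primitive $f$, I would decompose $g_{ds}$ as a cyclic shift of the logical bits followed by a CNOT fan-out from the wrapped bit onto the positions where $f_i=1$, and check this by direct multiplication.

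\textbf{Expected obstacle.} The only delicate point is bookkeeping. The conventions $G\sigma=gG$ versus $gG\sigma=G$, and row versus column action, decide whether $g_{ds}$ or its inverse or transpose appears, and which reciprocal polynomial supplies the $*$ column. I would fix the convention of Definition~\ref{fact:auto}, do the computation with $r=3$ and $H_3=I+x+x^3$ to pin down the orientation, and if necessary replace $P$ by $P^{-1}$, which is also a matrix automorphism.
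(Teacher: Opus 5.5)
Your proposal is correct and follows essentially the same route as the paper: take a nonzero codeword $c$, use $c, cP, \dots, cP^{r-1}$ as the rows of $G$, and read off the companion-matrix action from the expansion of $cP^r$ in this basis. You go further than the paper in justifying that these shifts span the code: any nonzero $c$ works because the shift restricted to the code has an irreducible degree-$r$ minimal polynomial, whereas the paper simply assumes spanning. Your caution about orientation is also warranted, since with $G\sigma = gG$ the matrix read off has superdiagonal ones and a last row of $*$'s, i.e.\ the transpose of the displayed $g_{ds}$.
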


\begin{proof}
    It is easy to see that the column shift matrix $P = P_{2^r-1}$ is indeed a matrix automorphism by definition of it being circulant. Suppose the logical action is $g$, we can also show that $g$ has order $2^{r}-1$, following from Fact~\ref{fact:auto} and since $P$ has order $2^{r}-1$. 

    Now, let $c\in \mathcal{S}_r$ be a non-zero codeword and define $B = \{c_i|c_i = c\cdot P^i, 0\leq i<r\}$. Suppose $B$ is a spanning set (for example, picking $c = [1,1,\cdots1,0,0,\cdots0]$ works). Then, the generator matrix $G_B$ built from stacking $v\in B$ as rows is the desired basis. Indeed, for each $i<r-2$, $c_iP = c_{i+1}$; column permutation by $P$ shifts rows of $G_B$ by one. Furthermore, since $\braket{B}$ is $r$-dimensional and $P$ is a code automorphism, $\braket{B} = \mathcal{S}_r$, and that $c_{r-1}P\in \mathcal{S}_r = \sum_{0\leq i<r}b_i c_i$. Let $g$ be the operator where $gB = BP$; it follows that 
    \begin{equation}
        g_{ds} = \begin{bmatrix}
            &&&&b_0\\
            1&&&&b_1\\
            &\ddots&&&\vdots\\\\
            &&&1&b_{r-1}
        \end{bmatrix},
    \end{equation}
    as desired.
    
\end{proof}

\subsection{autoCNOTs}
For $\mathcal{S}_4$, there exists non-trivial matrix automorphisms as well using $H_4$ given in Table~\ref{tab:simplex} that can implement a order-2 logical circuit given a basis. Furthermore, by choosing different generator matrices, we can obtain a wide range of codes with various logical operators in the same conjugacy class. 
\begin{proposition}
    Let $\tau_4$ be the permutation matrix on $16$ elements that performs a transposition on $4\times 4$ square of the $16$ elements in lexicographical order. That is, suppose $i = 4a+b$ for some $a,b<4$, then $\tau(i = 4a+b)\mapsto  j = 4b+a$. Let $P_{auto}$ be the permutation obtained by removing the first element from $\tau_4$. Then,  $P_{auto}$ is a matrix automorphism for $\mathcal{S}_4$ with $H_4 = I +P_{15}+P_{15}^4$. Furthermore, there exist a basis $G$ for which the corresponding logical action of $P_{auto}$ is $g$ iff $g$ is conjugate to  $g_{SWAP} = I\otimes X$; that is, if there exists $h$ such that:
    \begin{equation}
        hgh^{-1} = \begin{bmatrix}
            &1&&\\
            1&&&\\
            &&&1\\
            &&1&
        \end{bmatrix}.
    \end{equation}
\end{proposition}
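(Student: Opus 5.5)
The plan is to treat the two claims separately: first the matrix-automorphism property, by a finite combinatorial check on the rows of $H_4$; then the logical action, by a conjugacy-class argument in $\gln{4}$.

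\emph{Step 1: $P_{auto}$ is a matrix automorphism.} Write the rows of the circulant $H_4 = I + P_{15} + P_{15}^4$ as the supports $R_j = \{j, j+1, j+4\} \bmod 15$, for $j = 0,\dots,14$, with columns indexed $1,\dots,15$ after deleting the fixed element $0$ of $\tau_4$.
\begin{itemize}
\item Since $\tau_4$ is an involution, so is $P_{auto}$. By the Matrix Automorphism Condition, it suffices to show that $P_{auto}(R_j)$ is again some $R_{j'}$ for every $j$.
\item This is a direct check over $15$ triples, using $4a+b \mapsto 4b+a$.
\item The main obstacle is getting the index conventions right: the offset between the $\tau_4$ labels and the circulant positions, and whether rows or columns of $H_4$ are being permuted. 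I would fix the convention so that at least $R_0$ maps to a row, then run the remaining checks.
\end{itemize}
Once $H_4 P_{auto} = \rho H_4$ holds for a row permutation $\rho$, the code $\ker H_4$ is preserved. It is the $[15,4,8]$ simplex code $\mathcal{S}_4$, because $1+x+x^4$ is primitive. So $P_{auto} \in \aut{\mathcal{S}_4}$, and it induces a logical action $g$ in any chosen basis $G$.

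\emph{Step 2: identify the conjugacy class of $g$.} By the Automorphism Gate fact, $\phi_G$ is an injective homomorphism because the columns of a simplex generator matrix are distinct. Hence $g$ has the same order as $P_{auto}$, which is $2$, since $P_{auto}$ is a nontrivial involution.

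Changing basis $G \mapsto hG$ changes the logical action: from $gG = G P_{auto}$ we get $(hgh^{-1})(hG) = (hG)P_{auto}$. Conversely, any two generator matrices of $\mathcal{S}_4$ differ by some $h \in \gln{4}$. So the set of realizable logical actions is exactly the conjugacy class of $g$. This gives the ``iff'' once we show $g \sim g_{SWAP}$.

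Involutions in $\gln{4}$ are classified up to conjugacy by $\operatorname{rank}(g+I) \in \{1,2\}$, through their Jordan blocks: one or two blocks of size $2$. Since $g_{SWAP} = I \otimes X$ has $\operatorname{rank}(g_{SWAP}+I) = 2$, it suffices to show $\dim\ker(g+I) = 2$. This kernel is the set of codewords fixed by $P_{auto}$, so I need exactly $4$ fixed codewords.

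\emph{Step 3: count the fixed codewords.} The cleanest route is to realise $\mathcal{S}_4$ with columns $b_i$, the binary digits of $i$. Then $\tau_4$ sends the column $b_i$ to $b_{\tau(i)}$, and $b_{\tau(i)} = s\, b_i$, where $s$ is the linear map swapping bit pairs $(0\leftrightarrow 2)$ and $(1\leftrightarrow 3)$. Thus $sG = G P_{auto}$ in that basis, so $g = s$.

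The map $s$ is a permutation matrix with cycle type $(2,2)$. Relabelling coordinates conjugates it directly to $g_{SWAP}$, so Step 2's rank computation is not needed on this route.

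The one thing to confirm is that the column ordering of $H_4$ used in Step 1 matches this $b_i$ ordering, up to a relabelling that commutes with $P_{auto}$. If it does not, I fall back on Step 2. In that case I compute the fixed codewords of $\ker H_4$ under $P_{auto}$ directly: there are only $16$ codewords, and I expect exactly $4$ to be fixed.
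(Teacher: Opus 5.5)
Your Steps 1 and 2 are sound and match the paper. The paper's identity $P_{auto}P_{15}P_{auto}=P_{15}^4$ amounts to the observation $\tau(i)\equiv 4i \pmod{15}$. This sends the row $\{j,j+1,j+4\}$ to $\{4j,4j+1,4j+4\}$. Any cyclic relabelling of coordinates only turns the map into $j\mapsto 4j+c$, which still permutes the rows, so your index-convention worry is harmless. The paper's ``iff'' is exactly your conjugation $(hgh^{-1})(hG)=(hG)P_{auto}$.

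The gap is in Step 3, which carries the real content: showing $\operatorname{rank}(g+I)=2$ rather than $1$. The binary-counting generator $[b_1|\cdots|b_{15}]$ does not span $\ker H_4$, because its row space is not cyclic. Take the codeword whose $i$-th entry is the lowest bit of $i$, and shift it by one position in either direction. The result has entries $1$ at $i=2$, $4$ and $6$, although $b_6=b_2+b_4$ forces the entry at $6$ to be the sum of the other two, which is $0$. So $g=s$ is the action of $P_{auto}$ on a \emph{different} coordinate realization of the simplex code. A relabelling between the two realizations that commutes with $P_{auto}$ exists if and only if $s$ is conjugate to the true action on $\ker H_4$, which is exactly what you are trying to prove. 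Your fallback would succeed, but as written the decisive count is only an expectation.

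You can close this without computing in any particular realization. For every generator $G$ of $\ker H_4$, the columns are exactly the $15$ nonzero vectors of $\mathbb{F}_2^4$, since the dual is the cyclic Hamming code. The relation $gG=GP_{auto}$ says that $g$ permutes these columns exactly as $\tau$ permutes coordinates. Hence the nonzero fixed vectors of $g$ correspond to the fixed points of $\tau$ on $\{1,\dots,15\}$. These are $4a+b$ with $a=b\neq 0$, i.e.\ $5$, $10$, $15$. Thus $|\ker(g+I)|=4$, so $\operatorname{rank}(g+I)=2$, and your classification of involutions gives $g\sim g_{SWAP}$.

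Equivalently, in the cyclic realization with columns $\alpha^j$, $P_{auto}$ acts as $x\mapsto x^4$ on $\mathbb{F}_{16}$, whose fixed points form $\mathbb{F}_4$. The paper does not check this either. It just asserts that some $c_1,c_2$ make $\{c_1,c_1P_{auto},c_2,c_2P_{auto}\}$ spanning, which is the same two-Jordan-block fact. With the fixed-point count, your route supplies the verification the paper omits.
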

Note, by choosing $h' = \begin{bmatrix}
        1&&&\\
            &1&&\\
            1&&1&\\
            &1&&1    \end{bmatrix}$, we obtain $g_{auto} = \begin{bmatrix}
        1&&&\\
            1&1&&\\
            1&&1&\\
            1&1&1&1
    \end{bmatrix}$ used in the main text. 

\begin{proof}
Note $P_{auto}$ is an involution, and $P_{auto} P_{15}P_{auto} = P_{15}^4$: let each $i = 4a+b$ be represented by an ordered tuple $(a,b)$, we have $P_{auto} P_{15}P_{auto}(a,b) =P_{auto} P_{15}(b,a) = P_{auto}(b,a+1) = (a+1,b) = P^4_{15}(a,b)$. It follows that $H_{4}P_{auto} = H_4$. 

To give a basis where the resulting logical action is a disjoint pair of SWAPs, we can find $c_1, c_2$ such that the basis $\{c_1, c_1P_{auto}, c_2,c_2P_{auto}\}$ is spanning. Let this generator matrix be $G_{SWAP}$. Then, for any invertible $h$, $G' = hG$ is also a valid generator matrix, and applying $P_{auto}$ on $G'$ gives $g$:
\begin{align*}
    G'P_{auto} &= hG_{SWAP}P_{auto} = hg_{SWAP}G_{SWAP} =gG'\\&\iff g = hg_{SWAP}h^{-1}
\end{align*}

which is conjugate to $g_{SWAP}$, as desired.
\end{proof}

There are many choices that satisfy both propositions, giving us a code with desired ISAs.

While the primitive polynomial for $\mathcal{S}_4$ allows for a wide class of logical actions to be embedded, it is not guaranteed to be the case for larger $r$. Indeed, we find that the primitive polynomials as reported in Table~\ref{tab:simplex} do no admit similar symmetries for any $r>4$ via a numerical search, which means code modifications are unavoidable. However, beyond the check extensions already presented, there may exist circulant matrices with more terms that enforces the symmetry, but we leave this as future work to perform more intensive search.  
\subsection{Global Transversal Operations}
In addition to automorphism gates, we briefly describe other transversal gates we can apply to the HGPS code. First, by virtue of the HGP construction, we can apply puncturing/augmenting techniques to $\mathcal{S}_4$ to achieve homomorphic CNOTs and homomorphic measurements between different code blocks, as done in \cite{hgphomo}. By virtue of being a square symmetric code, the HGPS codes can access transversal $H$-$SWAP$ gate, $CZ$-$S$ gate, and sibiling-$CZ$s as give in \cite{partition}. These gates allow us to implement:
\begin{enumerate}
    \item Hadamard on all logical qubits, plus a logical transpose,
    \item $S$/$S^\dagger$ gates on $2\sqrt{k}$ diagonal qubits, and $CZ$s on transpose-correspondent logical qubits in the same sector,
    \item $CZ$ gates between correspondent logical qubits in both sectors.
\end{enumerate}
These operations allow us to produce and consume $\ket{i}$ state on the diagonal efficiently. For a more complete treatement, we refer readers to \cite{hgphomo, partition}.

\section{Circuit Compilations}\label{appx:circuits}
In this section, we give the exact CQLU circuit implementation for quantum adder and related state preparation.

Figure~\ref{fig:adder-full} shows the full adder derivation:

\begin{itemize}
    \item \emph{Step 1}: we propagate the Pauli corrections arising from Bell-measurement through the circuit, possibly flipping the $S$-correction from $\ket{T}$ injections. 
    \item \emph{Step 2}: commuting the final CNOT in the MAJ block towards past the measurement bell. This creates an additional CNOT correction highlighted in blue. 
    \item \emph{Step 3-5}: commuting the $S$ corrections past bell measurements. 
    \item \emph{Step 6a}: commuting the first CNOT in the MAJ (highlighted in green) past the fan-in, adding an additional CNOT correction (dashed-grey). 
    \item \emph{Step 6b} The newly added CNOT is then commuted past the bell-pair generation, leading to two CNOTs highlighted in blue, one of which acts trivially. \
    \item emph{Step 7a}: adding two pairs of CNOTs and commuting them past the fanouts and $T$ gates in the middle. 
    \item \emph{Step 7b}: the additional CNOT corrections cancel each other out. 
    \item \emph{Step 8-9}: finally, we conjugate by global $H$ to change orientations of the CNOTs, and use a transversal SWAP to prepare the block for reactive measurements.   
\end{itemize}

Figure~\ref{fig:compiled-adder} gives a spacial layout and exact CQLU operations required, alone with the state-prep step that can be done offline. This circuit has at most $4$ active blocks in any time step. 

Derivations for UMA and temporary AND-Toffoli (used in QROM) can be obtained following similar transformations, which we do not describe explicitly here. The QROM circuit used in \cite{GidneyLinearT} consist of temperoray-AND Toffolis and CNOT fanouts, which can be performed by consuming a GHZ state on $n$-qubits \cite{HarnessingGHZ}, which is included within Figure~\ref{fig:compiled-adder}.
\begin{figure}
    \centering
    \includegraphics[width=\linewidth]{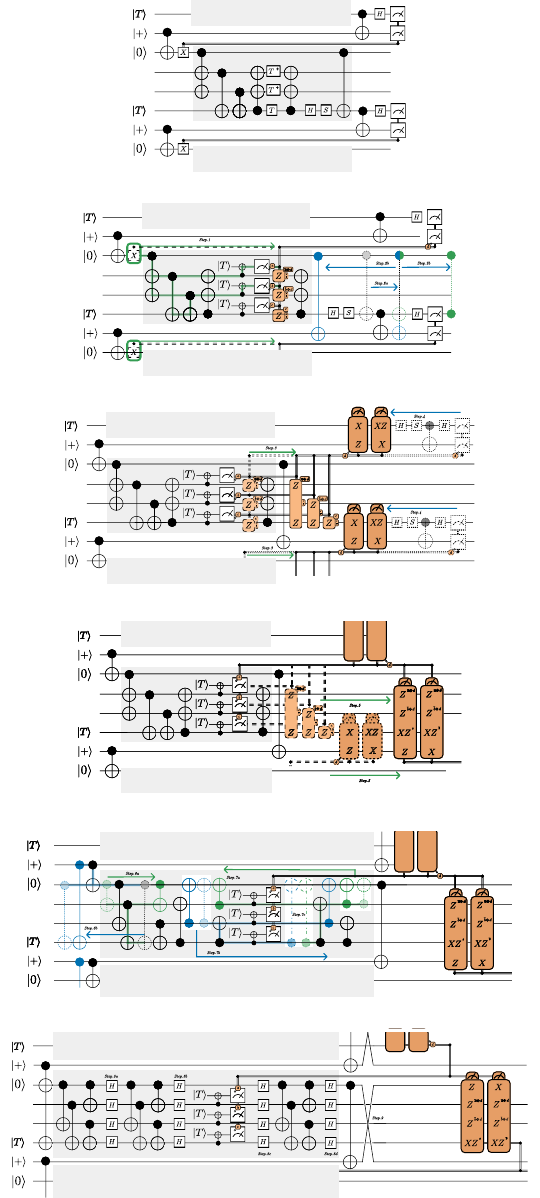}
    \caption{Full derivation of the MAJ blocks implemented using CQLU instructions. Pauli corrections that commutes with non-Clifford operations are tracked in software. 
    }
    \label{fig:adder-full}
\end{figure}
\begin{figure*}
    \centering
    \includegraphics[width=0.95\linewidth]{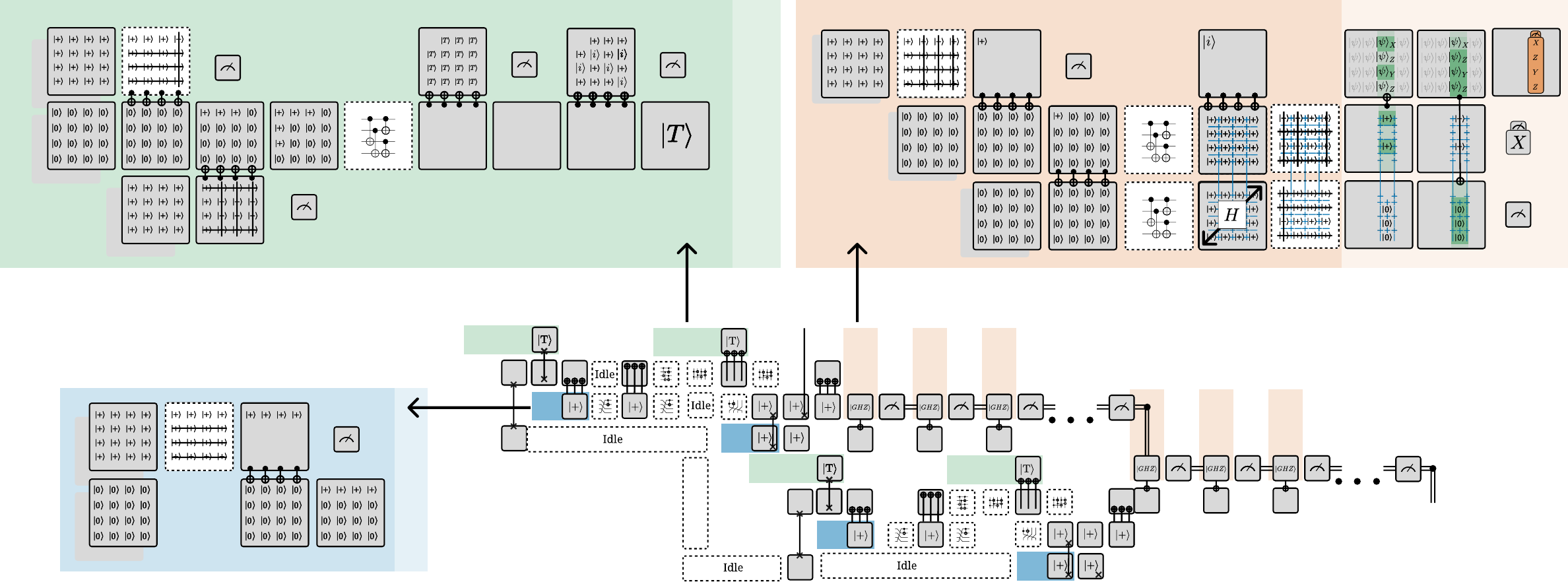}
    \caption{The exact compiled adder circuit, with offline state factories produced. Initial state preparation (blue) and $T$ preparation (green) can be done completely offline. GHZ states for reactive measurements (orange) can be carried out mostly offline (dark orange), with two additional CNOTs  and a measurement (light orange). }
    \label{fig:compiled-adder}
\end{figure*}
\newpage
% \bibliography{refs} 

\end{document}